\newif\ifextended\extendedtrue
\tikzstyle{every picture}=[thick]
\tikzstyle{every loop}=[->]
\tikzstyle{every scope}=[>=latex]
\tikzstyle{dot}=[circle,thick,minimum size=0.5mm,fill=black, inner sep=1pt]
\tikzstyle{every state}=[draw=black,line width=.5pt,fill=white,minimum size=10pt,initial text=]
\newlength{\arclength}
\newcommand\distr{\ensuremath{\mathsf{Distr}}}
\newcommand{\parl}{\mathop{||}}
 \newcommand{\mtrans}[1]{\stackrel{\smash{\lower1pt\hbox{\scriptsize $#1$}\,}}{\lower1pt\hbox{$\rightsquigarrow$}}}
\newcommand{\itrans}[1]{\xhookrightarrow{\smash{\lower1pt\hbox{\scriptsize $\,\smash{#1}\,$}\vphantom{X}}}}
\providecommand\dotsum{\mathpalette\@dotted\sum \vphantom{\sum}} 
\def\@dotted#1#2{\ooalign{\hfil$#1 \bullet $\hfil\cr\hfil$#1#2$\hfil\cr}} 
\DeclareMathOperator*{\psum}{%
\mathchoice%
{\sum\!\!\!\!\!\!\!\!\scalebox{1.1}{$\bullet$}\ \ }%
{\sum\!\!\!\!\!\!{\raisebox{0.5pt}{\scalebox{0.825}{$\bullet$}}}\,\,\,}%
{}{}}
\newcommand{\action}[3]{#1 \!\psum_{#2} #3 :}
\let\ifTIKZ\iffalse
\let\ifTIKZ\iftrue
\newcommand{\paths}{\mathit{Paths}}
\newcommand{\nnreal}{\mathbb{R}_{\geq 0}}
\newcommand{\F}{\Diamond}
\newcommand{\Act}{{\sl Act}}
\renewcommand{\it}[1]{\singlearrow{#1}}  
\newcommand{\mt}[1]{\stackrel{#1}{\Longrightarrow}}  
\newcommand{\IS}{{\sl I\!S}}
\newcommand{\MS}{\mbox{\sl MS}}
\newcommand{\PS}{\mbox{\sl PS}}
\newcommand{\bfP}{\mathbf{P}}
\newcommand{\bfR}{\mathbf{R}}
\newcommand{\toolname}{{\sc MaMa}}  
\DeclareMathOperator*{\argmin}{arg\,min}
\newcommand{\G}{\Box}
\begin{document}

\title{\large Modelling, Reduction and Analysis of Markov Automata\ifextended\ (extended version)\fi\thanks{This work is
    funded by the EU FP7-projects MoVeS, SENSATION and MEALS, the
    DFG-NWO bilateral project ROCKS, the NWO projects SYRUP (grant
    612.063.817), the STW project ArRangeer (grant 12238), and the 
    DFG Sonderforschungsbereich AVACS.}} 

\author{Dennis Guck$^{1,3}$, Hassan Hatefi$^{2}$, Holger Hermanns$^{2}$, \\ Joost-Pieter Katoen$^{1,3}$ and Mark Timmer$^{3}$}
\institute{  
  {$^1$ Software Modelling and Verification, RWTH Aachen University, Germany} \\
  {$^2$ Dependable Systems and Software, Saarland University, Germany} \\
  {$^3$ Formal Methods and Tools, University of Twente, The Netherlands}
}

\titlerunning{Modelling, Reduction and Analysis of Markov Automata}
\authorrunning{Guck, Hatefi, Hermanns, Katoen, and Timmer}

\maketitle

\begin{abstract}
  Markov automata (MA) constitute an expressive continuous-time
  compositional modelling formalism. They appear as semantic
  backbones for engineering frameworks including dynamic fault trees,
  Generalised Stochastic Petri Nets, and AADL.  Their expressive power
  has thus far precluded them from effective analysis by probabilistic
  (and statistical) model checkers, stochastic game solvers, or
  analysis tools for Petri net-like formalisms.  This paper presents
  the foundations and underlying algorithms for efficient MA
  modelling, reduction using static analysis, and most importantly,
  quantitative analysis.  We also discuss implementation pragmatics of
  supporting tools and present several case studies demonstrating
  feasibility and usability of MA in practice.
\end{abstract}

\ifextended\else\vspace*{-.5cm}\fi

\section{Introduction} 

Markov automata (MA, for short) have been introduced in~\cite{EHZ10} as a continuous-time version of Segala's (simple) probabilistic automata~\cite{Seg95b}.
They are closed under parallel composition and hiding.
An MA-transition is either labelled with an action, or with a positive real number representing the rate of a negative exponential distribution.
An action transition leads to a discrete probability distribution over states.
MA can thus model action transitions as in labelled transition systems, probabilistic branching, as well as delays that are governed by exponential distributions.

The semantics of MA has been recently investigated in quite some detail.
Weak and strong (bi)simulation semantics have been presented in~\cite{EHZ10,EHZ10b}, whereas it is shown in~\cite{DBLP:journals/iandc/DengH13} that weak bisimulation provides a sound and complete proof methodology for reduction barbed congruence.
A process algebra with data for the efficient modelling of MA, accompanied with some reduction techniques using static analysis, has been presented in~\cite{MAPA}.
Although the MA model raises several challenging theoretical issues, both from a semantical and from an analysis point of view, our main interest is in their practical applicability.
As MA extend Hermanns' interactive Markov chains (IMCs)~\cite{Hermanns02}, they inherit IMC application domains, ranging from GALS hardware designs~\cite{CosteHLS09} and dynamic fault trees~\cite{DBLP:journals/tdsc/BoudaliCS10} to the standardised modeling language AADL~\cite{Bozzano,HaverkortKRRS10}. 
The added feature of probabilistic branching yields a natural operational model for generalised stochastic Petri nets (GSPNs)~\cite{MCB84} and stochastic activity networks (SANs)~\cite{MeyerMS85}, both popular modelling formalisms for performance and dependability analysis.
Let us briefly motivate this by considering GSPNs.
Whereas in SPNs all transitions are subject to a random delay, GSPNs also incorporate immediate transitions, transitions that happen instantaneously.
\begin{figure}[t!]
\centering
{}\hfill
\subfigure[]{
\scalebox{0.95}{

\begin{tikzpicture}
	\begin{pgfonlayer}{nodelayer}
		\node [style=place, tokens=1,label=$p_1$] (0) at (0, 0.75) {};
		\node [style=place,label=$p_3$] (1) at (1.5, 0.75) {};
		\node [style=place,label=$p_4$] (2) at (3.0, -0) {};
		\node [style=place,label=$p_5$] (3) at (3.0, -0.75) {};
		\node [style=place, tokens=1,label=below:$p_2$] (4) at (0, -0.75) {};
		\node [style=immTH,label=$t_1$] (5) at (0.75, 0.75) {};
		\node [style=immTH,label=$t_3 (w_3)$] (6) at (2.25, -0) {};
		\node [style=immTH,label=below:{$t_2 (w_2)$}] (7) at (0.75, -0.75) {};
		\node [style=stoTH,label=$\lambda_1$] (8) at (3.75, -0) {};
		\node [style=stoTH,label=below:{$\lambda_2$}] (9) at (3.75, -0.75) {};
		\node [style=place,label=$p_6$] (10) at (4.5, -0) {};
		\node [style=place,label=$p_7$] (11) at (4.5, -0.75) {};
	\end{pgfonlayer}
	\begin{pgfonlayer}{edgelayer}
		\draw [style=gspn-edge] (0) to (5);
		\draw [style=gspn-edge] (5) to (1);
		\draw [style=gspn-edge, bend right=15, looseness=1.00] (1) to (6);
		\draw [style=gspn-edge, in=180, out=90, looseness=0.50] (4) to (6);
		\draw [style=gspn-edge] (4) to (7);
		\draw [style=gspn-edge] (7) to (3);
		\draw [style=gspn-edge] (6) to (2);
		\draw [style=gspn-edge] (2) to (8);
		\draw [style=gspn-edge] (8) to (10);
		\draw [style=gspn-edge] (3) to (9);
		\draw [style=gspn-edge] (9) to (11);
	\end{pgfonlayer}
\end{tikzpicture}}
\label{fig:gspn-confused}
} \hfill
\subfigure[]{
\tikzstyle{every picture}=[thick, scale=0.78, transform shape]
\tikzstyle{every loop}=[->]
\tikzstyle{every scope}=[>=latex]
\centering
\begin{tikzpicture}[node distance=2.7cm,scale=0.9,every node/.style={transform shape}]
\tikzstyle{every label}=[label distance=0pt]
	\node[initial,state, ellipse, minimum width=43pt,fill=gray!40!white] (p12) {$p_1,p_2$};
	\node[state, ellipse, minimum width=43pt,fill=gray!40!white,above right of=p12,xshift=-0.5cm, yshift=-.5cm] (p23) {$p_2,p_3$};
	\node[state, ellipse, minimum width=43pt,fill=gray!40!white,below right of=p12,xshift=-0.5cm, yshift=.5cm] (p15) {$p_1,p_5$};
	\node[state, ellipse, minimum width=43pt,fill=gray!40!white,right of=p23] (p4) {$p_4$};
	\node[state, ellipse, minimum width=43pt,fill=gray!40!white,right of=p4] (p6) {$p_6$};
	\node[state, ellipse, minimum width=43pt,fill=gray!40!white,right of=p15] (p35) {$p_3,p_5$};
	\node[state, ellipse, minimum width=43pt,fill=gray!40!white,right of=p35] (p37) {$p_3,p_7$};
	
	\path[->] 
		(p4)  edge[thin,double] node[auto] {$\lambda_1$} (p6)
		(p35) edge[thin,double] node[auto] {$\lambda_2$} (p37);
		
	\path[->]
		(p12) edge[thin,bend left=20]  node[auto] {$\tau$} (p23)
		(p12) edge[thin,bend right=20] node[auto,swap] {$\tau$} (p15);
		
	\path[->]
		(p15) edge[thin] node[auto] {$\tau$} (p35);
		
	\path[->]
		(p23) edge[thin] node[auto] {$\dfrac{w_3}{w_2+w_3}$} (p4)
		(p23) edge[thin] node[auto,swap,yshift=0.3cm] {$\dfrac{w_2}{w_2+w_3}$} (p35);
		
	\path[-]
		(p23) edge[thin] node[inner sep=0mm,pos=0.2] (a1) {} (p4)
		(p23) edge[thin] node[inner sep=0mm,pos=0.2] (b1) {} (p35);	
	\path[-,shorten <=-.4pt,shorten >=-.4pt] (a1) edge [thin,bend left]  (b1) node[right,yshift=-.6cm] {$\tau$} ;

\end{tikzpicture}
\hfill{}
\label{fig:gspn-ma-semantics}
}
\caption{(a) Confused GSPN, see \cite[Fig.\ 21]{Mar95} with partial weights and (b) its MA semantics}
\end{figure}
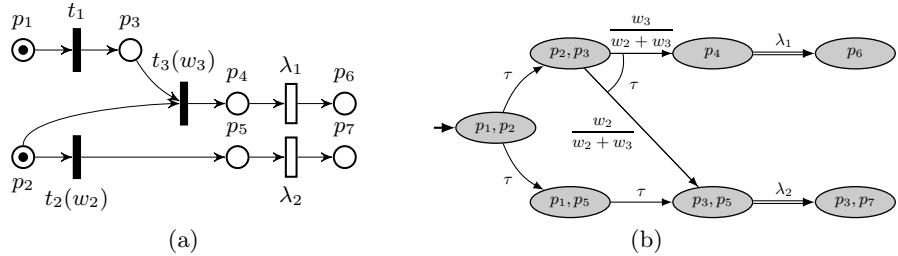
The traditional GSPN semantics yields a continuous-time Markov chain (CTMC), i.e., an MA without action transitions, but is restricted to GSPNs that do not exhibit non-determinism.
Such ``well-defined'' GSPNs occur if the net is free of confusion.
It has recently been detailed in~\cite{Katoen12,EHKZ13} that MA are a natural semantic model for \emph{every} GSPN.
Without going into the technical details, consider the confused GSPN in Fig.~\ref{fig:gspn-confused}.
This net is confused, as the transitions $t_1$ and $t_2$ are not in conflict, but firing transition $t_1$ leads to a conflict between $t_2$ and $t_3$, which does not occur if $t_2$ fires before $t_1$.
Transitions $t_2$ and $t_3$ are weighted so that in a marking $\{ p_2, p_3 \}$ in which both transitions are enabled, $t_2$ fires with probability $\frac{w_2}{w_2{+}w_3}$ and $t_3$ with its complement probability.
Classical GSPN semantics and analysis algorithms cannot cope with this net due to the presence of confusion (i.e., non-determinism). 
Figure~\ref{fig:gspn-ma-semantics} depicts the MA semantics of this net.
Here, states correspond to sets of net places that contain a token.
In the initial state, there is a non-deterministic choice between the transitions $t_1$ and $t_2$.
Note that the presence of weights is naturally represented by discrete probabilistic branching.
One can show that for confusion-free GSPNs, the classical semantics and the MA semantics are weakly bisimilar~\cite{EHKZ13}.

This paper focuses on the quantitative analysis of MA---and thus (possibly confused) GSPNs and probabilistic AADL error models.
We present analysis algorithms for three objectives: expected time, long-run average, and timed (interval) reachability.
As the model exhibits non-determinism, we focus on maximal and minimal values for all three objectives.
We show that expected time and long-run average objectives can be efficiently reduced to well-known problems on MDPs such as  stochastic shortest path, maximal end-component decomposition, and long-run ratio objectives. 
This generalizes (and slightly improves) the results reported in~\cite{DBLP:conf/nfm/GuckHKN12} for IMCs to MA.
Secondly, we present a discretisation algorithm for timed interval reachability objectives which extends~\cite{DBLP:conf/tacas/ZhangN10}.
Finally, we present the \toolname\ tool-chain, an easily accessible publicly available tool chain~\footnote{Stand-alone download as well as web-based interface available from \url{http://fmt.cs.utwente.nl/~timmer/mama}.} for the specification, mechanised simplification---such as confluence reduction~\cite{ConfluenceMA}, a form of on-the-fly partial-order reduction---and quantitative evaluation of MA. 
We describe the overall architectural design, as well as the tool components, and report on empirical results obtained with \toolname\ on a selection of case studies taken from different domains. 
The experiments give insight into the effectiveness of our reduction techniques and demonstrate that MA provide the basis of a very expressive stochastic timed modelling approach without sacrificing the ability of time and memory efficient numerical evaluation.

\vspace*{-.2cm}
\paragraph{Organisation of the paper.} 
After introducing Markov Automata in Section~\ref{sec:ma}, we discuss
a fully compositional modelling formalism in Section~\ref{section:mapa}. 
Section~\ref{section:expected} considers the evaluation of expected time 
properties. 
Section~\ref{section:longrun} discusses the analysis of long run properties, and
Section~\ref{section:timed} focusses on reachability properties with time interval 
bounds. 
Implementation details of our tool as well as experimental results are discussed 
in detail in Section~\ref{sec:tool}. 
Section~\ref{sec:conc} concludes the paper.
\ifextended Due to space constraints, we provide the proofs for our main results in appendices.
\else
Due to space constraints, we refer to~\cite{TechRep} for the proofs of our main results.
\fi
\ifextended\else\vspace*{-.3cm}\fi

\section{Preliminaries}
\label{sec:ma}
\vspace*{-.2cm}


%

\paragraph{Markov automata.}
An MA is a transition system with two types of transitions:  probabilistic (as in PAs) and Markovian transitions 
(as in CTMCs). 
Let $\Act$ be a universe  of actions with internal action $\tau \in \Act$, and $\distr(S)$ denote the set of distribution functions over the countable set $S$.

\begin{definition}[Markov automaton]\label{def:mas}
A \emph{Markov automaton (MA)} is a tuple $\mathcal{M} = \left( S, A,  \it{\, }, \mt{\, }, s_0 \right)$ where 
$S$ is a nonempty, finite set of states with \emph{initial state} $s_0 \in S$, $A \subseteq \Act$ is a finite set of actions, and 
\begin{itemize}
\item $\it{\, } \subseteq S \times A \times \distr(S)$ is the \emph{probabilistic} transition relation, and
\item $\mt{\, }\ \subseteq S \times \mathbb{R}_{> 0} \times S$ is the \emph{Markovian} transition relation. 
\end{itemize}
\end{definition}

We abbreviate $(s, \alpha, \mu) \in \it{\, }$ by $s \it{\alpha} \mu$ and $(s, \lambda, s') \in \ \mt{\, }$ by $\smash{s \mt{\lambda} s'}$.  
An MA can move between states via its probabilistic and Markovian transitions.
If~$s \it{a} \mu$, it can leave state $s$ by executing the action $a$, after which the probability to go to some state $s' \in S$ is given by $\mu(s')$. 
If $\smash{s \mt{\lambda} s'}$, it moves from~$s$ to~$s'$ with rate $\lambda$, except if $s$ enables a $\tau$-labelled transition. 
In that case, the MA will always take such a transition and never delays. 
This is the \emph{maximal progress} assumption~\cite{EHZ10}.
The rationale behind this assumption is that internal transitions are not subject to interaction and thus can happen immediately, whereas the probability for a Markovian transition to happen immediately is zero. As an example of an MA, consider Fig.~\ref{fig:MA}.

We briefly explain the semantics of Markovian transitions.
For a state with Markovian transitions, let $\bfR(s,s') = \smash{\sum \{ \lambda \mid s \mt{\lambda} s' \}}$ be the total rate to move from state~$s$ to state~$s'$, and let $E(s) = \sum_{s' \in S} \ \bfR(s,s')$ be the total outgoing rate of $s$.
If $E(s) > 0$, a competition between the transitions of $s$ exists.  
Then, the probability to move from $s$ to state $s'$ within $d$ time units is
$$
\frac{\bfR(s,s')}{E(s)} \cdot \left( 1 - e^{-E(s) d} \right).
$$
This asserts that after a delay of at most $d$ time units (second factor), the MA moves to a direct successor state $s'$ with probability $\bfP(s, s') = \frac{\bfR(s,s')}{E(s)}$. 

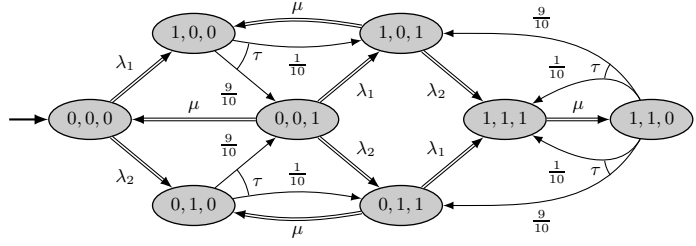
\begin{figure}[t]
\tikzstyle{every picture}=[thick, scale=0.78, transform shape]
\tikzstyle{every loop}=[->]
\tikzstyle{every scope}=[>=latex]
\centering\begin{tikzpicture}[node distance=2.5cm]
\tikzstyle{every label}=[label distance=0pt]

	\node[state, ellipse, minimum size=0pt,fill=gray!40!white] (s000)
{$0,0,0$};
	\node[] (sInit) [left of=s000, node distance=1.5cm] {};
	\draw[->] (sInit) edge [] node [auto, swap] {} (s000);
	\node[state, minimum size=0pt, draw=white] (s100) [above right of=s000]
{};
	\node[state, node distance=0.3cm, ellipse, minimum
size=0pt,fill=gray!40!white] (s100n) [below of=s100] {$1,0,0$};
	\node[state, minimum size=0pt, draw=white] (s010) [below right of=s000]
{};
	\node[state, node distance=0.3cm, ellipse, minimum
size=0pt,fill=gray!40!white] (s010n) [above of=s010] {$0,1,0$};
	\node[state, ellipse, minimum size=0pt,fill=gray!40!white] (s001) [below
right of=s100] {$0,0,1$};
	\node[state, minimum size=0pt, draw=white] (s101) [above right of=s001]
{};
	\node[state, node distance=0.3cm, ellipse, minimum
size=0pt,fill=gray!40!white] (s101n) [below of=s101] {$1,0,1$};
	\node[state, minimum size=0pt, draw=white] (s011) [below right of=s001]
{};
	\node[state, node distance=0.3cm, ellipse, minimum
size=0pt,fill=gray!40!white] (s011n) [above of=s011] {$0,1,1$};
	\node[state, ellipse, minimum size=0pt,fill=gray!40!white] (s111) [below
right of=s101] {$1,1,1$};
	\node[state, ellipse, minimum size=0pt,fill=gray!40!white] (s110) [right
of=s111] {$1,1,0$};

	\draw[->, double,thin] (s000) -- node [auto] {$\lambda_1$}
(s100n);
	\draw[->, double,thin] (s000) --  node [auto, swap]{$\lambda_2$}
(s010n);

	\draw[->,thin] (s100n) edge [] node [auto, swap] {$\frac{9}{10}$} (s001);
	\draw[->,thin] (s100n) edge [bend right=10] node [auto, swap] {$\frac{1}{10}$}
(s101n);
    \draw[thin] (s100n) + (1.25\arclength, -5pt) arc ( -10 :  -39.70265826695821
: 1.25\arclength);
	\path[thin] (s100n) +(31pt, -11.192068269362267pt) node {\phantom{.}} node
{$\tau$} ;

	\draw[->,thin] (s010n) edge [] node [auto] {$\frac{9}{10}$} (s001);
	\draw[->,thin] (s010n) edge [bend left=10] node [auto] {$\frac{1}{10}$}
(s011n);
    \draw[thin] (s010n) + (1.25\arclength, 5pt) arc ( 10 :  39.70265826695821 :
1.25\arclength);
	\path[thin] (s010n) +(31pt, 11.192068269362267pt) node {\phantom{.}} node
{$\tau$} ;

	\draw[->, double,thin] (s001) -- node [auto, swap] {$\mu$} (s000);
	\draw[->, double,thin] (s001) -- node [auto, swap] {$\lambda_1$}
(s101n);
	\draw[->, double,thin] (s001) -- node [auto] {$\lambda_2$}
(s011n);

	\draw[->, double,thin] (s101n) -- node [auto, swap] {$\lambda_2$}
(s111);
	\draw[->, double,thin] (s101n) [bend right=10] to node [auto,
swap] {$\mu$} (s100n);

	\draw[->, double,thin] (s011n) [bend left=10] to node [auto]
{$\mu$} (s010n);
	\draw[->, double,thin] (s011n) -- node [auto] {$\lambda_1$}
(s111);

	\draw[->, double,thin] (s111) -- node [auto] {$\mu$} (s110);

	\draw[->,thin] (s110) edge [in=0, out=120] node [pos=0.65, auto, swap]
{$\frac{9}{10}$} (s101n);
	\draw[->,thin] (s110) edge [in=30, out=120] node [pos=0.65, auto, swap]
{$\frac{1}{10}$} (s111);
    \draw[thin] (s110) + (-20pt,27pt) arc ( 145 :  172 : 0.75\arclength);
	\path[thin] (s110) +(-27pt, 25pt) node {\phantom{.}} node {$\tau$} ;

	\draw[->,thin] (s110) edge [in=0, out=-120] node [pos=0.65, auto]
{$\frac{9}{10}$} (s011n);
	\draw[->,thin] (s110) edge [in=-30, out=-120] node [pos=0.65, auto]
{$\frac{1}{10}$} (s111);
    \draw[thin] (s110) + (-20pt,-27pt) arc ( -145 :  -172 : 0.75\arclength);
	\path[thin] (s110) +(-27pt, -25pt) node {\phantom{.}} node {$\tau$} ;

\end{tikzpicture}
\caption{A queueing system, consisting of a server and two stations. The
two stations
have incoming requests with rates $\lambda_1, \lambda_2$, which are stored
until fetched by the server. If both stations contain a job, the server
chooses
nondeterministically (in state (1,1,0)). Jobs are processed with rate $\mu$, and when polling
a station, 
there is a $\frac{1}{10}$ probability that the job is erroneously kept in
the station 
after being fetched. Each state is represented as a tuple $(s_1,s_2,j)$,
with $s_i$ the number of jobs in station $i$, and $j$ the number of jobs
in the server.
For simplicity we assume that each component can hold at most one~job.}
\label{fig:MA}
\vspace*{-.3cm}
\end{figure}

\vspace*{-.2cm}
\paragraph{Paths.}
A path in an MA is an infinite sequence $ \pi \ = \ s_0 \it{\sigma_0,
  \mu_0, t_0} s_1 \it{\sigma_1, \mu_1, t_1} \ldots $ with $s_i \in S$,
$\sigma_i \in \Act \cup \{ \bot \}$, and $t_i \in \mathbb{R}_{\geq
  0}$.  For $\sigma_i \in \Act$, $s_i \it{\sigma_i, \mu_i, t_i}
s_{i+1}$ denotes that after residing $t_i$ time units in $s_i$, the MA
has moved via action $\sigma_i$ to $s_{i{+}1}$ with probability
$\mu_i(s_{i{+}1})$.  Instead, $s_i \it{\bot, \mu_i, t_i} s_{i+1}$
denotes that after residing $t_i$ time units in $s$, a Markovian
transition led to $s_{i+1}$ with probability $\mu_i(s_{i+1}) =
\bfP(s_i,s_{i+1})$.  For $t \in \mathbb{R}_{\geq 0}$, let $\pi@t$
denote the sequence of states that $\pi$ occupies at time $t$.  Due to
instantaneous action transitions, $\pi@t $ need not be a single state,
as an MA may occupy various states at the same time instant.  Let
$\paths$ denote the set of infinite paths.  The time elapsed along
the path $\pi$ is $\sum_{i=0}^{\infty}t_i$.  Path $\pi$ is Zeno
whenever this sum converges.  As the probability of a Zeno path in an
MA that only contains Markovian transitions is
zero~\cite{DBLP:journals/tse/BaierHHK03}, an MA is non-Zeno if and
only if no SCC with only probabilistic states is reachable with
positive probability. In the rest of this paper, we assume MAs to be
non-Zeno.

\vspace*{-.2cm}
\paragraph{Policies.}
Nondeterminism occurs when there is more than one action transition emanating from a state. 
To define a probability space, the choice is resolved using \emph{policies}.
A policy (ranged over by $D$) is a measurable function which yields for each finite path ending in state $s$ a probability distribution over the set of enabled actions in $s$. 
The information on basis of which a policy may decide yields different classes of policies.
Let $GM$ denote the class of the general measurable policies.
A stationary deterministic policy is a mapping $D \colon \PS \rightarrow \Act$ where $\PS$ is the set of states with outgoing probabilistic transitions; such policies always take the same decision in a state $s$.
A time-abstract policy may decide on basis of the states visited so far, but not on their timings; we use $TA$ denote this class.
For more details on different classes of policies (and their relation) on models such as MA, we refer to~\cite{NSK09}. Using a cylinder set construction we obtain a $\sigma$-algebra of subsets of $\paths$; given a policy~$D$ and an initial  state~$s$, a measurable set of paths is equipped with probability measure $\Pr_{s,D}$.

\vspace*{-.2cm}
\paragraph{Stochastic shortest path (SSP) problems.}
As some objectives on MA are reduced to SSP problems, we briefly introduce them.
A non-negative SSP problem is an MDP $(S,\Act,\bfP,s_0)$ with set $G \subseteq S$ of goal states, cost function
$c \colon S \setminus G \times \Act \to \mathbb{R}_{\geq 0}$ and terminal cost function $g\colon G \to \mathbb{R}_{\geq 0}$. 
The accumulated cost along a path~$\pi$ through the MDP before reaching $G$, denoted $C_G(\pi)$, is $\sum_{j{=}0}^{k{-}1} c(s_j,\alpha_j) + g(s_k)$ where $k$ is the state index of reaching $G$.
Let  $\mathit{cR}^{\min}(s, \F G)$ denote the minimum expected cost reachability of $G$ in the SSP when starting from $s$.
This expected cost can be obtained by solving an LP problem~\cite{BerTsi91}.

\ifextended\else\vspace*{-.3cm}\fi

\section{Efficient modeling of Markov automata}
\label{section:mapa}
\ifextended\else\vspace*{-.2cm}\fi


As argued in the introduction, MA can be used as semantical model for various modeling formalisms.
We show this for the process-algebraic specification language MAPA (MA Process Algebra)~\cite{MAPA}.
This language is rather expressive and supports several reductions techniques for MA specifications.
In fact, it turns out to be beneficial to map a language (like GSPNs) to MAPA so as to profit from these
reductions.
We present the syntax and a brief informal overview of the reduction techniques.  

\vspace*{-.2cm}
\paragraph{The Markov Automata Process Algebra.}

MAPA relies on external mechanisms for evaluating expressions, able to handle boolean and real-valued expressions. 
We assume that any variable-free expression in this language can be evaluated. 
Our tool uses a simple and intuitive fixed data language that includes basic arithmetic and boolean operators, conditionals, and dynamic lists. 
For expression $t$ in our data language and vectors $\vec{x} = (x_1, \dots, x_n)$ and $\vec{d} = (d_1, \dots, d_n)$, let $t[\vec{x}:=\vec{d}]$ denote the result of substituting every $x_i$ in $t$ by $d_i$.

A \emph{MAPA specification} consists of a set of uniquely-named \emph{processes}~$X_i$, each defined by a \emph{process equation} $X_i(\vec{x_i:D_i}) = p_i$. 
In such an equation, $\vec{x_i}$ is a vector of process variables with type~$\vec{D_i}$, and $p_i$ is a \emph{process term} specifying the behaviour of~$X_i$. 
Additionally, each specification has an \emph{initial process} $X_j(\vec{t})$. 
We abbreviate $X((x_1, \dots, x_n) : (D_1 \times \dots \times D_n))$ by $X(x_1 : D_1, \dots, x_n : D_n)$.
A MAPA \emph{process term} adheres to the grammar:
\vspace*{-0.1cm}
\[
    \textstyle p ::= Y(\vec{t}) \ \mid\  c \Rightarrow p \ \mid \ p + p \ \mid \ \sum_{\vec{x}:\vec{D}} p \ \mid \ \action{a(\vec{t})}{\vec{x} : \vec{D}}{f} p \ \mid \ (\lambda) \cdot p
\]
%
%
Here, $Y$ is a process name, $\vec{t}$ a vector of expressions, $c$ a boolean expression, $\vec{x}$~a vector of variables ranging over a finite type~$\vec{D}$, $a \in \textit{Act}$ a (parameterised) atomic action, $f$ a real-valued expression yielding  a value in $[0,1]$, and $\lambda$ an expression yielding a positive real number. 
Note that, if $|\vec{x}| > 1$,  $\vec{D}$ is a Cartesian product, as for instance in $\sum_{(m,i) : \{m_1,m_2\} \times \{1,2,3\}} \text{send}(m,i)\ldots$.
In a process term, $Y(\vec{t})$ denotes \emph{process instantiation}, where $\vec{t}$ instantiates $Y$'s process variables (allowing recursion).
The term $c \Rightarrow p$ behaves as $p$ if the \emph{condition} $c$ holds, and cannot do anything otherwise. The~$+$~operator denotes \emph{nondeterministic choice}, and~$\sum_{\vec{x} : \vec{D}} p$ a \emph{nondeterministic choice over data type}~$\vec{D}$. 
The term $\action{a(\vec{t})}{\vec{x} : \vec{D}}{f} p$ performs the action $a(\vec{t})$ and then does a \emph{probabilistic choice} over~$\vec{D}$. 
It uses the value \mbox{$f[\vec{x}:=\vec{d}]$} as the probability of choosing each~$\vec{d} \in \vec{D}$. 
We write $a(\vec{t}) \cdot p$ for the action $a(\vec{t})$ that goes to $p$ with probability $1$.
Finally, $(\lambda) \cdot p$ can behave as $p$ after a delay, determined by an exponential distribution with rate $\lambda$.
Using MAPA processes as basic building blocks, the language also supports the modular construction of large systems via top-level parallelism (denoted $\parl$), encapsulation (denoted $\partial$), hiding (denoted $\tau$), and renaming (denoted $\gamma$), cf.\ \cite[App.\ B]{MAtechrep}. 
The operational semantics of a MAPA specification yields an MA; for details we refer to~\cite{MAPA}.
\vspace*{-.05cm}


\begin{example}
\begin{figure}[t]
\centering\framebox{
\scalebox{0.9}{\begin{minipage}{0.97\textwidth}
\vspace*{-0.45cm}
\begin{align*}
& \textbf{constant} \textit{ queueSize} = 10, \textit{nrOfJobTypes} = 3\\ 
& \textbf{type} \textit{ Stations} = \{1, 2\}, \textit{ Jobs} = \{1, \dots, \textit{nrOfJobTypes}\}\\[5pt]
   &\textit{Station}(i:\textit{Stations}, q:\text{Queue}, \textit{size}:\{0..\textit{queueSize}\}) \\
  &\qquad{}= \textit{size} < \textit{queueSize} \Rightarrow (2i+1) \cdot \smash{\textstyle\sum_{j:\textit{Jobs}} \textit{arrive}(j) \cdot \textit{Station}(i,\text{enqueue}(q,j), size+1)}\\
  &\qquad{}+ \textit{size} > 0 \hskip32pt \Rightarrow \textit{deliver}(i,\text{head}(q)) \smash{\psum_{k \in \{1,9\}}} \tfrac{k}{10} : k=1 \Rightarrow \textit{Station}(i,q,\textit{size}) \\
  &\hphantom{\qquad+ \text{size}(q) > 0 \Rightarrow \textit{deliver}(i,\text{head}(q)) \psum_{k \in \{1,9\}} \tfrac{k}{10}}\hskip20pt + k=9 \Rightarrow \textit{Station}(i,\text{tail}(q),\textit{size}-1) \\
\\[-11pt]
 & \textit{Server} = \textstyle\sum_{n:\textit{Stations}} \sum_{j:\textit{Jobs}} \textit{poll}(n,j) \cdot (2*j) \cdot \textit{finish}(j) \cdot \textit{Server} \\
\\[-10pt]
&\gamma(\textit{poll}, \textit{deliver}) = \textit{copy} \quad \quad \quad  \quad \mbox{\small $//$ actions \textit{poll} and \textit{deliver} synchronise and yield action \textit{copy}} \\
\\[-13pt]
&\textit{System} = \tau_{\{\textit{copy}, \textit{arrive}, \textit{finish}\}} (\partial_{\{\textit{poll}, \textit{deliver}\}}(\textit{Station}(1,\text{empty},0) \parl \textit{Station}(2,\text{empty},0) \parl \textit{Server}))
\end{align*}
\vspace*{-0.55cm}
\end{minipage}}
}
\vspace*{-0.25cm}
\caption{MAPA specification of a polling system.}
\label{fig:spec}
\vspace*{-.3cm}
\end{figure}

Fig.~\ref{fig:spec} depicts the MAPA specification~\cite{MAPA} of a polling system---inspired by~\cite{Polling}---which generalised the system of Fig.~\ref{fig:MA}. Now, there are incoming requests of 3 possible types, each of which has a different service rate. Additionally, the stations store these in a queue of size $10$.
\qed
\end{example}

\vspace*{-.3cm}
\paragraph{Reduction techniques.}
To simplify state space generation and reduction, we use a linearised format referred to as MLPPE (Markovian linear probabilistic process equation).
In this format, there is precisely one process consisting of a nondeterministic choice between a set of summands. 
Each summand can contain a nondeterministic choice, followed by a condition, and either an interactive action with a probabilistic choice (determining the next state) or a rate and a next state. 
Every MAPA specification can be translated efficiently into an MLPPE~\cite{MAPA} while preserving strong bisimulation.
On MLPPEs two types of reduction techniques have been defined: simplifications and state space reductions:

\begin{itemize}
\item 
\emph{Maximal progress reduction} removes Markovian transitions from states also having $\tau$-transitions.  It is more efficient to perform this on MLPPEs than on the initial MAPA specification.  We use heuristics (as in~\cite{PT09}) to omit all Markovian summands in presence of internal non-Markovian ones.
\item
\emph{Constant elimination}~\cite{KPST11} replaces MLPPE parameters that remain constants by their initial value.
\item 
\emph{Expression simplification}~\cite{KPST11} evaluates functions for which all parameters are constants and applies basic laws from logic. 
\item 
\emph{Summation elimination}~\cite{KPST11} removes unnecessary summations, transforming e.g.,
$\sum_{d:\mathbb{N}} d = 5 \Rightarrow \textit{send}(d) \cdot X$ to $\textit{send}(5) \cdot X$, $\sum_{d:\{1,2\}} a \cdot X$ to $a \cdot X$, and $\sum_{d:D} (\lambda) \cdot X$ to $(|D| \times \lambda) \cdot X$, to preserve the total rate to $X$.
\item
\emph{Dead-variable reduction}~\cite{PT09} detects states in which the value of some data variable $d$ is irrelevant. This is the case if $d$ will be overwritten before being used for all possible futures. Then, $d$ is reset to its initial value.
\item
\emph{Confluence reduction}~\cite{ConfluenceMA} detects spurious nondeterminism, resulting from parallel composition. It denotes a subset of the probabilistic transitions of a MAPA specification as confluent, meaning that they can safely be given priority if enabled together with other transitions. 
\end{itemize}

\ifextended\else\vspace*{-.3cm}\fi
\section{Expected time objectives}
\label{section:expected}

The actions of an MA are only used for composing models from smaller ones.
For the analysis of MA, they are not relevant and we may safely assume that all actions are internal\footnote{Like in the MAPA specification of the queueing system in Fig.~\ref{fig:spec}, the actions used in parallel composition are explicitly turned into internal actions by hiding.}.
Due to the maximal progress assumption, the outgoing transitions of a state $s$ are all either probabilistic transitions or Markovian transitions.
Such states are called probabilistic and Markovian, respectively; let $\PS \subseteq S$ and $\MS \subseteq S$ denote these sets.

Let $\mathcal{M}$ be an MA with state space $S$ and $G \subseteq S$ a set of goal states.
Define the (extended) random variable $V_G \colon \paths \rightarrow \nnreal^{\infty}$ as the elapsed time before first visiting some state in $G$. That is, for an infinite path $\pi = s_0 \smash{\xrightarrow{\sigma_0,\mu_0,t_0}} s_1 \smash{\xrightarrow{\sigma_1,\mu_1,t_1}} \cdots$, let $V_G(\pi) = \min \left\{ t \in \nnreal \mid G \cap \pi@t \not= \emptyset \right\}$ where $\min (\emptyset) = +\infty$.  
(With slight abuse of notation we use $\pi@t$ as the set of states occurring in the sequence $\pi@t$.)
The minimal expected time to reach $G$ from $s \in S$ is defined by
\ifextended\else\vspace*{-.2cm}\fi
\begin{align*}
  \mathit{eT}^{\min}(s, \F G) \ = \ 
  \inf_D \mathbb{E}_{s,D}(V_G) \ = \ 
  \inf_D \int_{\paths} \hspace{-2ex} V_G(\pi) \; \Pr\nolimits_{s,D}(d\pi)
\end{align*}
where $D$ is a policy on $\mathcal{M}$.
Note that by definition of $V_G$, only the amount of time before entering the first $G$-state is relevant.
Hence, we may turn all $G$-states into absorbing Markovian states without affecting the expected time reachability. 
In the remainder we assume all goal states to be absorbing.

\begin{theorem}\label{thm_expected_reachability}
  The function $\mathit{eT}^{\min}$ is a fixpoint of the Bellman operator
  {\small \begin{align*}
    \left[L(v)\right](s) = \begin{cases}
      \displaystyle \frac{1}{E(s)} + \sum_{s' \in S} \bfP(s,s') \cdot v(s') & \text{ if } s \in \MS \setminus G \\
      \displaystyle \min_{\alpha \in \textit{\footnotesize Act}(s)} \sum_{s' \in S} \mu^s_\alpha(s') \cdot v(s') & \text{ if } s \in \PS \setminus G \\
      \displaystyle 0 & \text{ if } s \in G.
    \end{cases}
  \end{align*}}%
\end{theorem}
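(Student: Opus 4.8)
The plan is to establish the fixpoint identity $\mathit{eT}^{\min} = L(\mathit{eT}^{\min})$ pointwise by a one-step (``first-transition'') analysis of the paths emanating from each state, combined with Bellman's principle of optimality to commute the infimum over policies with this decomposition. The case $s \in G$ is immediate: since goal states are absorbing, $V_G(\pi) = 0$ on every path from $s$, so $\mathit{eT}^{\min}(s, \F G) = 0 = [L(\mathit{eT}^{\min})](s)$. For $s \notin G$ I would split an arbitrary path $\pi$ emanating from $s$ into its initial transition (with sojourn time $t_0$ and successor $s_1$) and the suffix $\pi^{1}$ starting in $s_1$, and write $V_G(\pi) = t_0 + V_G(\pi^{1})$. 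This holds precisely because $s \notin G$, so that the first $G$-visit occurs strictly after leaving $s$.

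For $s \in \MS \setminus G$ there is no nondeterministic choice, and the sojourn time $t_0$ is exponentially distributed with rate $E(s)$; I would invoke the memoryless property to note that $t_0$ has expectation $\frac{1}{E(s)}$ and is independent of the jump target, which is $s'$ with probability $\bfP(s,s')$. Taking expectations and using linearity gives, for every policy $D$, the decomposition $\mathbb{E}_{s,D}(V_G) = \frac{1}{E(s)} + \sum_{s'} \bfP(s,s')\, \mathbb{E}_{s',D^{s'}}(V_G)$, where $D^{s'}$ is the residual policy induced by $D$ after the transition to $s'$. Since the residual policies range over the whole class independently for distinct successors, minimizing over $D$ distributes into the sum and yields exactly $\frac{1}{E(s)} + \sum_{s'} \bfP(s,s')\, \mathit{eT}^{\min}(s', \F G)$. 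For $s \in \PS \setminus G$ time does not advance ($t_0 = 0$), a policy selects a distribution over $\Act(s)$, and the analogous decomposition reads $\mathbb{E}_{s,D}(V_G) = \sum_{\alpha} D(s)(\alpha) \sum_{s'} \mu^s_\alpha(s')\, \mathbb{E}_{s',D^{\alpha,s'}}(V_G)$; since this is an average of the per-action quantities $\sum_{s'} \mu^s_\alpha(s')\, \mathit{eT}^{\min}(s', \F G)$, the infimum is attained by placing all mass on a minimizing action, giving the $\min_{\alpha \in \Act(s)}$ expression of $L$.

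To turn these heuristics into a proof I would argue the two inequalities separately. The direction $\mathit{eT}^{\min} \geq L(\mathit{eT}^{\min})$ follows by bounding $\mathbb{E}_{s',D^{s'}}(V_G) \geq \mathit{eT}^{\min}(s', \F G)$ termwise for an arbitrary $D$ and then taking the infimum on the left-hand side. For $\mathit{eT}^{\min} \leq L(\mathit{eT}^{\min})$ I would fix $\epsilon > 0$, choose for each successor $s'$ a policy that is $\epsilon$-optimal for $\mathit{eT}^{\min}(s', \F G)$, and paste these together with the (optimal) first-step choice into a single composite policy whose expected time is within $\epsilon$ of the right-hand side; letting $\epsilon \to 0$ closes the gap.

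I expect the main obstacle to be the rigorous justification of this commutation of $\inf_D$ with the first-step decomposition within the general measurable policy class $GM$: one must verify that the residual maps $D \mapsto D^{s'}$ (respectively $D \mapsto D^{\alpha,s'}$) indeed sweep out the full policy class, that the pasted composite policy is again measurable, and that all manipulations remain valid when $\mathit{eT}^{\min}$ takes the value $+\infty$. Establishing these measurability and pasting facts --- essentially the Markov and time-homogeneity property of the cylinder-set probability space $\Pr_{s,D}$ --- is where the real work lies; the arithmetic of the exponential sojourn time and the minimization over actions is then routine.
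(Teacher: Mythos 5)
Your proposal follows essentially the same route as the paper's proof: a first-transition decomposition of $V_G$ in the three cases $s \in G$, $s \in \MS \setminus G$, $s \in \PS \setminus G$, the exponential sojourn-time expectation $\frac{1}{E(s)}$ for Markovian states, commutation of $\inf_D$ with the sum/integral via residual policies, and the observation that an optimal scheduler concentrates its mass on a minimizing action in probabilistic states. Your two-inequality argument with $\epsilon$-optimal pasted policies merely makes explicit the exchange of infimum and first-step decomposition that the paper performs directly in its chain of equalities, so it is the same approach carried out with slightly more care.
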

For a goal state, the expected time obviously is zero.
For a Markovian state $s \not\in G$, the minimal expected time to $G$ is the expected sojourn time in $s$ plus the expected time to reach $G$ via its successor states.
For a probabilistic state, an action is selected that minimises the expected reachability time according to the distribution $\mu^s_\alpha$ corresponding to $\alpha$.
The characterization of $\mathit{eT}^{\min}(s, \F G)$ in Thm.\,\ref{thm_expected_reachability} allows us to
reduce the problem of computing the minimum expected time reachability in an MA to a non-negative 
SSP problem~\cite{BerTsi91,deAlf99}.

\begin{definition}[SSP for minimum expected time reachability]\label{def_expected_time_reachability_ssp}
The SSP of MA $\mathcal{M} = \left( S, \Act, \it{\, }, \mt{\, }, s_0 \right)$ for the expected time reachability of $G \subseteq S$ is $\mbox{\sf ssp}_{et}(\mathcal{M}) = \left( S, \Act \cup \left\{ \bot \right\}, \bfP, s_0, G,c, g \right)$ where $g(s) = 0$ for all $s \in G$ and
  {\small \begin{align*}
    \bfP(s, \sigma, s') & = \begin{cases}
      \frac{\bfR(s,s')}{E(s)} & \text{if } s \in \MS, \sigma = \bot \\
      \mu^s_\sigma(s') & \text{if } s \in \PS, s \it{\sigma} \mu^s_\sigma \\
      0 & \text{otherwise, and}
    \end{cases} &
    c(s,\sigma) & = \begin{cases}
      \frac{1}{E(s)} & \text{if } s \in \MS \setminus G, \sigma = \bot \\
      0 & \text{otherwise.}
    \end{cases}
  \end{align*}}
\end{definition}
\ifextended\else\vspace*{-.1cm}\fi
Terminal costs are zero.
Transition probabilities are defined in the standard way.
The reward of a Markovian state is its expected sojourn time, and zero otherwise.
\ifextended\else\vspace*{-.4cm}\fi
\begin{theorem}\label{thm_expected_time_reachability_reduction}
For MA $\mathcal{M}$, $\mathit{eT}^{\min}(s, \F G)$ equals $\mathit{cR}^{\min}(s, \F G)$ in 
$\mbox{\sf ssp}_{et}(\mathcal{M})$.
\end{theorem}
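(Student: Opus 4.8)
The plan is to exhibit both $\mathit{eT}^{\min}(\cdot,\F G)$ and $\mathit{cR}^{\min}(\cdot,\F G)$ as solutions of one and the same system of Bellman equations, and then to appeal to the uniqueness of that solution guaranteed by the theory of non-negative SSP problems~\cite{BerTsi91,deAlf99}. First I would recall the standard characterisation of the optimal expected cost in a non-negative SSP: the value $v=\mathit{cR}^{\min}(\cdot,\F G)$ satisfies $v(s)=g(s)$ for $s\in G$ and $v(s)=\min_{\sigma}\bigl(c(s,\sigma)+\sum_{s'\in S}\bfP(s,\sigma,s')\,v(s')\bigr)$ for $s\notin G$, and is the distinguished (least non-negative) solution, obtained by value iteration from the zero vector.

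Next I would substitute the concrete data of $\mbox{\sf ssp}_{et}(\mathcal{M})$ from Definition~\ref{def_expected_time_reachability_ssp} into this operator, simplifying by a case distinction on the type of $s$. For $s\in G$ the equation is $v(s)=g(s)=0$. For $s\in\MS\setminus G$ the only enabled action is $\bot$, with cost $\tfrac{1}{E(s)}$ and successor probabilities $\bfP(s,s')=\tfrac{\bfR(s,s')}{E(s)}$, so the equation collapses to $v(s)=\tfrac{1}{E(s)}+\sum_{s'\in S}\bfP(s,s')\,v(s')$. For $s\in\PS\setminus G$ every action has cost $0$ and successor distribution $\mu^s_\sigma$, so the equation reads $v(s)=\min_{\alpha\in\Act(s)}\sum_{s'\in S}\mu^s_\alpha(s')\,v(s')$. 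These are verbatim the three branches of the operator $L$ in Theorem~\ref{thm_expected_reachability}; hence the SSP Bellman operator and $L$ coincide, and $\mathit{cR}^{\min}(\cdot,\F G)$ is a fixpoint of $L$. Since Theorem~\ref{thm_expected_reachability} states that $\mathit{eT}^{\min}(\cdot,\F G)$ is a fixpoint of $L$ as well, it remains only to reconcile the two fixpoints.

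The conceptual reason they agree is a pathwise one: the accumulated SSP cost sums $\tfrac{1}{E(s_j)}$ over the Markovian states visited before $G$, which is exactly the \emph{expected} sojourn time in $s_j$, while probabilistic states are entered and left instantaneously. Thus for any fixed policy the expected accumulated cost equals $\mathbb{E}_{s,D}(V_G)$, and taking the infimum over $D$ would give the claim directly; I mention this as the intuition that the fixpoint matching makes precise.

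I expect the main obstacle to be the rigorous uniqueness argument, i.e.\ showing that $L$ has a \emph{single} fixpoint in the relevant domain, so that the two coincide. Two points need care. First, the policy classes differ—$\mathit{eT}^{\min}$ ranges over general timing-aware policies on $\mathcal{M}$, whereas $\mathit{cR}^{\min}$ ranges over the time-abstract MDP policies of $\mbox{\sf ssp}_{et}(\mathcal{M})$—but the fixpoint characterisation already exhibits a stationary deterministic optimiser, so restricting to MDP policies loses nothing. Second, and crucially, I would invoke the non-Zeno assumption to rule out zero-cost cycles: since no SCC consisting solely of probabilistic states is reachable, every cycle through non-goal states must traverse a Markovian state and hence accrues cost at least $\min_{s\in\MS}\tfrac{1}{E(s)}>0$. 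This is precisely the classical SSP condition under which every improper policy incurs infinite expected cost, so that the non-negative Bellman operator admits a unique fixpoint, with states from which $G$ is unreachable consistently assigned $+\infty$ on both sides. Uniqueness then forces $\mathit{eT}^{\min}(\cdot,\F G)=\mathit{cR}^{\min}(\cdot,\F G)$, and evaluating at $s$ yields the theorem.
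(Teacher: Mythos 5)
Your proposal is correct and takes essentially the same route as the paper's own proof: both arguments show, by the identical case analysis on $s \in \MS\setminus G$, $s \in \PS\setminus G$, and $s \in G$, that the Bellman operator of $\mbox{\sf ssp}_{et}(\mathcal{M})$ coincides with the operator $L$ of Theorem~\ref{thm_expected_reachability}, and then appeal to the uniqueness of the optimal-cost fixpoint for non-negative SSPs~\cite{BerTsi91,deAlf99} to conclude $\mathit{eT}^{\min}(s,\F G)=\mathit{cR}^{\min}(s,\F G)$. Your extra care about \emph{why} uniqueness applies (non-Zenoness excluding zero-cost cycles, and the mismatch between timing-aware MA policies and time-abstract MDP policies) is detail the paper delegates to the cited SSP literature, but it does not alter the structure of the argument.
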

Thus here is a stationary deterministic policy on $\mathcal{M}$ yielding $\mathit{eT}^{\min}(s, \F G)$.
Moreover, the uniqueness of the minimum expected cost of an SSP~\cite{BerTsi91,deAlf99} now yields that
$\mathit{eT}^{\min}(s, \F G)$ is the unique fixpoint of $L$ (see Thm.\,\ref{thm_expected_reachability}).
The uniqueness result enables the usage of standard solution techniques such as value iteration and linear programming to compute $\mathit{eT}^{\min}(s, \F G)$.
For maximal expected time objectives, a similar fixpoint theorem is obtained, and it can be proven that those objectives correspond to the maximal expected reward in the SSP problem defined above.
In the above, we have assumed MA to not contain any Zeno cycle, i.e., a cycle solely consisting of probabilistic transitions.
The above notions can all be extended to deal with such Zeno cycles, by, e.g., setting the minimal expected time of states in Zeno BSCCs that do not contain $G$-states to be infinite (as such states cannot reach $G$).  
Similarly, the maximal expected time of states in Zeno end components (that do not containg $G$-states) can be defined as $\infty$, as in the worst case these states will never reach $G$.

\ifextended\else\vspace*{-.3cm}\fi

\newcommand{\mI}{\mathcal{M}}
\newcommand{\bdI}{\mathbf{1}}
\newcommand{\LRA}{\textit{LRA}}
\newcommand{\mR}{\mathcal{R}}
\newcommand{\Paths}{\paths}

\section{Long run objectives}
\label{section:longrun}
\ifextended\else\vspace*{-.2cm}\fi

Let $\mI$ be an MA with state space $S$ and $G \subseteq S$ a set of goal states.
Let $\bdI_G$ be the characteristic function of $G$, i.e., $\bdI_G(s) = 1$ if and only if $s \in G$.
Following the ideas of \cite{DBLP:conf/lics/Alfaro98,LHK01}, the fraction of time spent in $G$ on an infinite path $\pi$ in $\mI$ up to time bound $t \in \mathbb{R}_{\geq 0}$ is given by the random variable (r.\,v.)
$A_{G,t}(\pi) \ = \ \frac{1}{t} \int_0^t \bdI_G(\pi@u) \ du$.
Taking the limit $t \rightarrow \infty$, we obtain the r.\,v.\ 
\ifextended\else\vspace*{-.2cm}\fi
\begin{align*}
A_{G}(\pi) \ = \ \lim_{t\to\infty} A_{G,t}(\pi) \ = \ \lim_{t\to\infty} \frac{1}{t}\int_0^t \bdI_G(\pi@u) \ du.
\end{align*}
The expectation of $A_{G}$ for policy $D$ and initial state $s$ yields the corresponding long-run average time spent in $G$:
\begin{align*}
\LRA^D(s, G) 
= 
\mathbb{E}_{s,D}(A_{G}) 
= 
\int_{\paths} \hspace{-2ex} A_{G}(\pi) \, {\Pr}_{s,D}(d\pi).
\end{align*}
The minimum long-run average time spent in $G$ starting from state $s$ is then:
\begin{align*}
\LRA^{\min}(s,G) \ = \ \inf_D\ \LRA^D(s, G) \ = \ \inf_D \mathbb{E}_{s,D}(A_{G}).
\end{align*}
For the long-run average analysis, we may assume w.l.o.g.\ that $G\subseteq \MS$, as the long-run average time spent in any probabilistic state is always 0. 
This claim follows directly from the fact that probabilistic states are instantaneous, i.e.\ their sojourn time is $0$ by definition.
Note that in contrast to the expected time analysis, $G$-states cannot be made absorbing in the long-run average analysis.
It turns out that stationary deterministic policies are sufficient for yielding minimal or maximal long-run average objectives.

In the remainder of this section, we discuss in detail how to compute the minimum long-run average fraction of time to be in $G$ in an MA $\mI$ with initial state $s_0$.
The general idea is the following three-step procedure:
\begin{enumerate}
\item Determine the maximal end components\footnote{A sub-MA of MA $\mathcal{M}$ is a pair $(S',K)$ where $S' \subseteq S$ and $K$ is a function that assigns to each $s\in S'$ a non-empty set of actions such that for all $\alpha \in K(s)$, $s \it{\alpha} \mu$ with $\mu(s') > 0$ or $\smash{s \mt{\lambda} s'}$ imply $s' \in S'$. 
An end component is a sub-MA whose underlying graph is strongly connected; it is maximal w.r.t.\ $K$ if it is not 
contained in any other end component $(S'',K)$.} $\{ \mI_1, \ldots, \mI_k \}$ of MA $\mI$.
\item Determine $\LRA^{\min}(G)$ in maximal end component $\mI_j$ for all $j \in \{ 1, \ldots, k \}$.  
\item Reduce the computation of $\LRA^{\min}(s_0,G)$ in MA $\mI$ to an SSP problem.
\end{enumerate}
The first phase can be performed by a graph-based algorithm~\cite{deA97_thesis,DBLP:conf/soda/ChatterjeeH11}, whereas the last two phases boil down to solving LP problems.

\paragraph{Unichain MA.}
We first show that for unichain MA, i.e., MA that under any stationary deterministic policy yield a strongly connected graph structure, computing $\LRA^{\min}(s, G)$ can be reduced to determining long-ratio objectives in MDPs.  
Let us first explain such objectives.
Let $M=(S,\Act,\bfP,s_0)$ be an MDP.
Assume w.l.o.g.\ that for each state $s$ in $M$ there exists $\alpha \in \Act$ such that $\bfP(s,\alpha,s') > 0$.
Let $c_1, c_2\colon S \times (\Act \cup \left\{ \bot \right\})  \to \mathbb{R}_{\geq 0}$ be cost functions.
The operational interpretation is that a cost $c_1(s,\alpha)$ is incurred when selecting action $\alpha$ in state $s$, and similar for $c_2$.
Our interest is the \emph{ratio} between $c_1$ and $c_2$ along a path.
The \emph{long-run ratio} $\mathcal{R}$ between the accumulated costs $c_1$ and $c_2$ along the infinite path $\pi = s_0 \it{\alpha_0} s_1 \it{\alpha_1} \ldots$ in the MDP $M$ is defined by\footnote{In our setting, $\mathcal{R}(\pi)$ is well-defined as the cost functions $c_1$ and $c_2$ are obtained from non-Zeno MA.  Thus for any infinite path $\pi$, $c_2(s_j,\alpha_j) > 0$ for some index $j$.}:  
$$
\mathcal{R}(\pi) \ = \ \displaystyle \lim_{n \to \infty} \dfrac{\sum_{i=0}^{n-1} c_1(s_i,\alpha_i)}{\sum_{j=0}^{n-1} c_2(s_j,\alpha_j)}.
$$ 
The minimum long-run ratio objective for state $s$ of MDP $M$ is defined by:
\begin{align*}
  R^{\min}(s) \ = \ 
  \inf_D \mathbb{E}_{s,D}(\mR) \ = \ 
  \inf_D \sum_{\pi \in \Paths} \mR(\pi) \cdot \text{Pr}_{s,D}(\pi).
\end{align*}

Here, $\Paths$ is the set of paths in the MDP, $D$ an MDP-policy, and $\Pr$ the probability mass on MDP-paths.
From~\cite{deA97_thesis}, it follows that $R^{\min}(s)$ can be obtained by solving the following LP problem with real variables $k$ and $x_s$ for each $s \in S$: Maximize $k$ subject to:
$$
x_s \, \leq \, c_1(s,\alpha) - k \cdot c_2(s,\alpha) + \sum_{s' \in S} \bfP(s,\alpha,s') \cdot x_{s'} \quad
\mbox{ for each } s \in S, \alpha \in \Act.
$$
We now transform an MA into an MDP with 2 cost functions as follows.
\begin{definition}[From MA to two-cost MDPs]\label{def:MAtomdp}
Let $\mI = \left( S, \Act,  \it{\, }, \mt{\, }, s_0 \right)$ be an MA and $G \subseteq S$ a set of goal states.
The MDP $\mbox{\sf mdp}(\mI) = (S, \Act\cup\{\bot\}, \bfP, s_0)$ with cost functions $c_1$ and $c_2$, where 
$\bfP$ is defined as in Def.~\ref{def_expected_time_reachability_ssp}, and
{\small \begin{align*}
  c_1(s,\sigma) & = \begin{cases}
    \frac{1}{E(s)} & \text{if } s \in \MS \cap G \wedge \sigma = \bot \\
    0 & \text{otherwise,}
  \end{cases}
  &
  c_2(s,\sigma) & = \begin{cases}
    \frac{1}{E(s)} & \text{if } s \in \MS \wedge \sigma = \bot \\
    0 & \text{otherwise.}
  \end{cases}
\end{align*}}
\end{definition}
Observe that cost function $c_2$ keeps track of the average residence time in state~$s$ whereas $c_1$ only does so for states in $G$.

\begin{theorem}
For unichain MA $\mI$, $LRA^{\min}(s,G)$ equals $R^{\min}(s)$ in $\mbox{\sf mdp}(\mI)$.
\end{theorem}

To summarise, computing the minimum long-run average fraction of time that is spent in some goal state in $G \subseteq S$ in an unichain MA $\mI$ equals the minimum long-run ratio objective in an MDP with two cost functions.
The latter can be obtained by solving an LP problem.
Observe that for any two states $s$, $s'$ in a unichain MA, $\LRA^{\min}(s,G)$ and $\LRA^{\min}(s',G)$ coincide.
We therefore omit the state and simply write $\LRA^{\min}(G)$ when considering unichain MA.

\paragraph{Arbitrary MA.}
Let $\mI$ be an MA with initial state $s_0$ and maximal end components $\{ \mI_1, \ldots, \mI_k \}$ for $k > 0$ where MA $\mI_j$ has state space $S_j$.
Note that each $\mI_j$ is a unichain MA.
Using this decomposition of $\mI$ into maximal end components, we obtain the following result:

\begin{theorem}\label{lem:LRA_SSP}\footnote{This theorem corrects a small flaw in the corresponding theorem for IMCs in~\cite{DBLP:conf/nfm/GuckHKN12}.} 
For MA $\mI = (S, \Act, \it{\,}, \mt{\, }, s_0)$ with MECs $\{ \mI_1, \ldots, \mI_k \}$ 
with state spaces $S_1, \dots, S_k \subseteq S$, and set of goal states $G \subseteq S$:
\begin{align*}
  \LRA^{\min}(s_0,G) & = \inf_{D} \sum_{j=1}^{k} \LRA^{\min}_j(G) \cdot {\Pr}^{D}(s_0 \models \diamondsuit \Box S_j),
\end{align*}
where ${\Pr}^{D}(s_0 \models \diamondsuit \Box S_j)$ is the probability to eventually reach and continuously stay in some state in $S_j$ from $s_0$ under policy $D$ and $\LRA^{\min}_j(G)$ is the LRA of $G \cap S_j$ in unichain MA $\mI_j$.  
\end{theorem}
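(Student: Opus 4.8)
The argument rests on the classical fact that, under \emph{any} policy $D$, almost every infinite path eventually enters some maximal end component (MEC) and never leaves it again, so that $\sum_{j=1}^{k} \Pr^{D}(s_0 \models \diamondsuit\Box S_j) = 1$. I would first record this absorption property (it is exactly what the MEC decomposition of step~1 of the three-step procedure provides), and pair it with the observation that $A_G$ is a \emph{tail} functional: its value on a path depends only on that path's asymptotic occupation, so altering a finite prefix leaves it unchanged. Consequently, on the event $\{s_0 \models \diamondsuit\Box S_j\}$ the value $A_G(\pi)$ equals the long-run average of $G \cap S_j$ evaluated inside the unichain MA $\mI_j$ under the tail of $D$. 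Because $\mI_j$ is unichain, every stationary policy induces an irreducible embedded chain, so this limit exists almost surely, is a deterministic constant, and---by the state-independence already noted for unichain MA---does not depend on the entry state. This yields the decomposition $\LRA^{D}(s_0, G) = \sum_{j=1}^{k} \mathbb{E}_{s_0,D}[\,A_G \cdot \mathbf{1}_{\{\diamondsuit\Box S_j\}}\,]$.

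The lower bound $\LRA^{\min}(s_0, G) \ge \text{RHS}$, where $\text{RHS}$ denotes the right-hand side of the claim, is then the easy direction. Fixing $D$, the tail of $D$ restricted to $\mI_j$ is itself an admissible in-MEC policy, so each conditional contribution is at least $\LRA^{\min}_j(G)$; hence $\LRA^{D}(s_0, G) \ge \sum_{j} \LRA^{\min}_j(G)\cdot \Pr^{D}(s_0 \models \diamondsuit\Box S_j)$, and taking the infimum over $D$ on both sides gives the inequality.

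For the reverse inequality I would construct a witnessing policy by \emph{gluing}. Collapse each $\mI_j$ into a single absorbing state carrying terminal weight $\LRA^{\min}_j(G)$ (computable via the preceding unichain theorem); on the resulting quotient MDP, minimising $\sum_j \LRA^{\min}_j(G)\cdot p_j$, with $p_j$ the absorption probability into $\mI_j$, is an ordinary weighted-reachability problem whose optimum equals $\text{RHS}$ and is attained by a stationary policy $\hat D$. I then lift $\hat D$ to $\mI$ and, once the path commits to $\mI_j$, switch to the stationary in-MEC policy $D_j^{\ast}$ realising $\LRA^{\min}_j(G)$. Since $D_j^{\ast}$ uses only actions of the sub-MA, it keeps the path inside $S_j$ forever, so the in-MEC switch does not change which component absorbs the path; thus under the combined policy $D^{\ast}$ one has $\Pr^{D^{\ast}}(s_0 \models \diamondsuit\Box S_j) = p_j$ and $A_G = \LRA^{\min}_j(G)$ almost surely on $\{\diamondsuit\Box S_j\}$, whence $\LRA^{D^{\ast}}(s_0, G) = \sum_j \LRA^{\min}_j(G)\cdot p_j = \text{RHS}$.

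The main obstacle is precisely this upper-bound construction, and I expect two points to require care. First, I must justify \emph{almost-sure} (not merely expected) convergence of $A_G$ to the constant $\LRA^{\min}_j(G)$ inside $\mI_j$ under $D_j^{\ast}$, via an ergodic-theorem argument for the irreducible embedded chain of the unichain CTMC. Second---and this is presumably the flaw corrected by the footnote---the infimum must stay \emph{outside} the entire sum: one cannot minimise the reaching probabilities componentwise, since diverting mass toward a favourable $\mI_j$ necessarily withdraws it from others, so the coupling is genuine. Exhibiting the single policy $D^{\ast}$ that simultaneously realises the optimal absorption vector and the per-MEC optima is exactly what makes the joint $\inf_D$ the correct formulation.
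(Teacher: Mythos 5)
Your proposal is correct in outline and is substantially more complete than the paper's own argument, which is only a sketch: the paper fixes a stationary deterministic policy, splits each path into a finite prefix and an infinite tail trapped in some MEC, observes that the prefix may transiently visit \emph{other} MECs, and from this reads off the formula. Your lower bound (conditioning on the disjoint events $\{s_0 \models \diamondsuit\Box S_j\}$ and using that $A_G$ is a tail functional) matches the spirit of that sketch, but your upper bound---collapsing each $\mI_j$ into an absorbing state carrying weight $\LRA^{\min}_j(G)$, solving the resulting weighted-reachability problem, lifting the optimal stationary policy, and switching to the in-MEC optimum upon commitment---is precisely the construction that the paper keeps separate as its SSP reduction $\mbox{\sf ssp}_{lra}(\mI)$ and the theorem that follows this one. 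Folding that construction into the proof buys you an explicit witnessing policy, hence attainment of the value and a self-contained argument that the claimed expression is not merely a lower bound; the paper instead defers this to the SSP correctness theorem.

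Two caveats. First, in the lower bound, the claim that ``the tail of $D$ restricted to $\mI_j$ is itself an admissible in-MEC policy'' is not literally true: on $\{\diamondsuit\Box S_j\}$ the policy $D$ may still assign positive probability to actions that exit $S_j$, and conditioning on those actions never being taken does not yield a policy of $\mI_j$. The standard repair is the limiting-frequency argument: almost surely the state-action pairs taken infinitely often form an end component contained in $\mI_j$, and on such paths $A_G$ coincides with the long-run average of a stationary policy confined to that sub-component, which is at least $\LRA^{\min}_j(G)$ because any such policy is admissible in $\mI_j$. Second, your diagnosis of the footnote is off: the flaw in the IMC predecessor was not the placement of the infimum but the use of plain reachability $\diamondsuit S_j$ in place of $\diamondsuit\Box S_j$---a path can pass through one MEC and settle in another, so the events $\{\diamondsuit S_j\}$ are neither disjoint nor indicative of where the long-run average is earned. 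This does not affect the validity of your proof, which correctly works with $\diamondsuit\Box S_j$ throughout.
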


Computing minimal LRA for arbitrary MA is now reducible to a non-negative SSP problem.
This proceeds as follows.
In MA $\mI$, we replace each maximal end component $\mathcal{M}_j$ by two fresh states $q_j$ and $u_j$.
Intuitively, $q_j$ represents $\mathcal{M}_j$ whereas $u_j$ represents a decision state.
State $u_j$ has a transition to $q_j$ and contains all probabilistic transitions leaving $S_j$.
Let $U$ denote the set of $u_j$ states and $Q$ the set of $q_j$ states.

\begin{definition}[SSP for long run average]
The SSP of MA $\mathcal{M}$ for the LRA in $G \subseteq S$ is
$\mbox{\sf ssp}_{lra}(\mathcal{M}) = 
\left( S \setminus \smash{\bigcup_{i=1}^k} S_i \cup U \cup Q, \Act \cup \{ \bot \}, \bfP', s_0, Q, c, g \right)$, where
$g(q_i) = \LRA^{\min}_i(G)$ for $q_i \in Q$ and $c(s,\sigma) = 0$ for all $s$ and $\sigma\in \Act\cup\{\bot\}$.
$\bfP'$ is defined as follows. Let $S' =  S \setminus \smash{\bigcup_{i=1}^k} S_i$.  $\bfP'$ equals $\bfP$ for all $s,s' \in S'$.
For the new states $u_j$:
\ifextended\else\vspace*{-.1cm}\fi
{\small
  \begin{align*}
  \bfP'(u_j, \tau, s') & =  \bfP(S_j, \tau, s') \quad \text{\!if } s' \in S' \setminus S_j 
  &
  \mbox{\!\!and\!\! } \quad \bfP'(u_i, \tau, u_j) & = \bfP(S_i, \tau, S_j) \quad \text{\!for } i \neq j.
\end{align*}}%
Finally, we have: $\bfP'(q_j,\bot,q_j) = 1 = \bfP'(u_j, \bot,q_j)$ and $\bfP'(s, \sigma, u_j) = \bfP(s, \sigma, S_j)$.
\end{definition}
Here, $\bfP(s,\alpha, S')$ is a shorthand for $\sum_{s' \in S'} \bfP(s,\alpha,s')$; similarly, $\bfP(S',\alpha,s') = \sum_{s \in S'} \bfP(s, \alpha, s')$.
The terminal costs of the new $q_i$-states are set to $\LRA^{\min}_i(G)$.
\begin{theorem} \label{thm:LRA_SSP}
For MA $\mathcal{M}$, $\LRA^{\min}(s,G)$ equals $cR^{\min}(s,\diamondsuit U)$ in SSP $\mbox{\sf ssp}_{lra}(\mathcal{M})$.
\end{theorem}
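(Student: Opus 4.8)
The plan is to combine the maximal-end-component identity of Theorem~\ref{lem:LRA_SSP} with a direct evaluation of the expected cost in $\mbox{\sf ssp}_{lra}(\mathcal{M})$, thereby reducing the whole statement to a single identity between two families of reachability probabilities. First I would exploit that in $\mbox{\sf ssp}_{lra}(\mathcal{M})$ every transition cost vanishes ($c(s,\sigma)=0$), so that along any path the accumulated cost $C_Q(\pi)$ equals the terminal cost $g(q_j)=\LRA^{\min}_j(G)$ of whichever goal state $q_j$ is eventually reached; since $\bfP'(u_j,\bot,q_j)=1$, each $u_j$ leads deterministically to $q_j$, so reaching $U$ and reaching $Q$ are equiprobable and pick up the same cost. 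Hence
\[
cR^{\min}(s_0,\diamondsuit U) \;=\; \inf_{D'} \sum_{j=1}^{k} \LRA^{\min}_j(G)\cdot {\Pr}^{D'}_{\mathsf{ssp}}(\diamondsuit q_j),
\]
where $D'$ ranges over SSP policies. Comparing this with Theorem~\ref{lem:LRA_SSP}, both quantities are the infimum of the \emph{same} linear functional $\vec{p}\mapsto\sum_j \LRA^{\min}_j(G)\,p_j$, so it suffices to show that the set of achievable probability vectors $({\Pr}^{D'}_{\mathsf{ssp}}(\diamondsuit q_j))_j$ in the SSP coincides with the set of vectors $({\Pr}^{D}(s_0\models\diamondsuit\Box S_j))_j$ realisable in $\mathcal{M}$.

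The core of the argument is therefore a policy-correspondence lemma: for every policy $D$ on $\mathcal{M}$ there is an SSP policy $D'$ with ${\Pr}^{D'}_{\mathsf{ssp}}(\diamondsuit q_j)={\Pr}^{D}(s_0\models\diamondsuit\Box S_j)$ for all $j$, and conversely. I would build this correspondence directly from $\bfP'$. Outside the end components $\bfP'$ equals $\bfP$, so any policy can be copied verbatim on $S' = S\setminus\bigcup_i S_i$; the only new behaviour sits at the states $u_j$. Since $\bfP'(s,\sigma,u_j)=\bfP(s,\sigma,S_j)$, each entry of a path into $S_j$ in $\mathcal{M}$ is mirrored by reaching $u_j$, and there the choice is exactly between committing to $S_j$ forever (the $\bot$-move to the absorbing $q_j$) and leaving $S_j$ (the moves $\bfP'(u_j,\tau,\cdot)=\bfP(S_j,\tau,\cdot)$, which reproduce the aggregated exit distribution of the MEC). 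I would then match the tail event $\diamondsuit\Box S_j$ in $\mathcal{M}$ with the event ``the $\bot$-branch into $q_j$ is ever taken'' in the SSP: given $D$, define $D'$ to take the $\bot$-move at $u_j$ with the conditional probability (given the history that has just entered $S_j$) that $D$ never leaves $S_j$ again, distributing the remaining mass over the exit moves according to $D$'s exit distribution; the reverse construction is symmetric.

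The step I expect to be the main obstacle is making this correspondence rigorous for the general measurable policy class, because a path of $\mathcal{M}$ may enter and exit a single $S_j$ many times before (possibly) settling there, so ``staying forever'' is a genuine tail event determined by the entire future rather than by one local decision. To handle this I would invoke the standard end-component theory of de Alfaro~\cite{deA97_thesis}: under non-Zenoness every path is almost surely absorbed into exactly one MEC, so the events $\{\diamondsuit\Box S_j\}_j$ partition the path space up to a null set, and within a fixed unichain MEC the long-run average attains $\LRA^{\min}_j(G)$ irrespective of the sojourn behaviour (the unichain result established above). These facts let me replace the tail event ``absorbed in $S_j$'' by the effective one-shot decision encoded at $u_j$ without altering its probability, and they ensure that stationary deterministic policies suffice so that both infima are attained. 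Assembling Theorem~\ref{lem:LRA_SSP}, the cost evaluation, and the policy correspondence then yields $\LRA^{\min}(s_0,G)=cR^{\min}(s_0,\diamondsuit U)$.
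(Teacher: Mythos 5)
Your proposal is correct and takes essentially the same route as the paper's own proof: because every transition cost in $\mbox{\sf ssp}_{lra}(\mathcal{M})$ is zero, the expected cost collapses to the expected terminal cost $\inf_{D}\sum_{j}\LRA^{\min}_j(G)\cdot{\Pr}^{D}(\diamondsuit q_j)$, the reachability probabilities ${\Pr}^{D}(\diamondsuit q_j)$ are identified with ${\Pr}^{D}(s\models\diamondsuit\Box S_j)$ by reversing the SSP construction, and Theorem~\ref{lem:LRA_SSP} closes the argument. The paper compresses the policy correspondence into a single annotated equality (``use the transformation of the SSP definition in reverse''), so your detailed treatment of that step --- including almost-sure absorption into a unique MEC so that the events $\{\diamondsuit\Box S_j\}_j$ partition the path space --- is a faithful expansion rather than a different proof; only your side remark that $u_j$ ``leads deterministically to $q_j$'' is imprecise, since the $\tau$-exit actions are also enabled at $u_j$, an innocuous slip that mirrors the paper's own conflation of $U$ with the actual goal set $Q$.
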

\ifextended\else\vspace*{-.2cm}\fi
\ifextended\else\vspace*{-.3cm}\fi
\section{Timed reachability objectives}
\label{section:timed}
\ifextended\else\vspace*{-.2cm}\fi

 \newcommand{\MaS}{\text{MS}}
\newcommand{\InS}{\ensuremath{S_I}\xspace}
\newcommand{\HiS}{\ensuremath{S_H}\xspace}

\newcommand{\DiBrPr}{\ensuremath{\textrm{\bf P}}} 

\newcommand{\Pref}{\ensuremath{\textsl{Pref}}\xspace} 
\newcommand{\MAM}{\ensuremath{\mathcal{M}}\xspace} 
\newcommand{\npaths}{\ensuremath{\textsl{Paths}^{n}}\xspace} 
\newcommand{\fpaths}{\ensuremath{\textsl{Paths}^*}\xspace} 
\newcommand{\ipaths}{\ensuremath{\textsl{Paths}^{\omega}}\xspace} 
\newcommand{\tafpaths}{\ensuremath{\textsl{Paths}^*_{\textsl{ta}}}\xspace} 
\newcommand{\taipaths}{\ensuremath{\textsl{Paths}^{\omega}_{\textsl{ta}}}\xspace} 
\newcommand{\lasts}{\ensuremath{\textsl{last}}} 

\newcommand{\sigalg}{\ensuremath{\mathfrak{F}}} 

\newcommand{\GMS}{\ensuremath{\textsl{GM}}\xspace} 
\newcommand{\TAS}{\ensuremath{\textsl{TA}}\xspace} 

\newcommand{\PrMe}{\ensuremath{\textsl{Pr}}\xspace} 
\newcommand{\dd}{\, \mathrm{d}}
\newcommand {\defma}{\ensuremath{\mathcal{M}=(S, \Act, \it{}, \mt{}, s_0)}\xspace}
\newcommand{\rchgls}[1]{\ensuremath{\diamondsuit^{#1}G}\xspace}


  This section presents an algorithm that approximates time-bounded reachability probabilities
  in MA.  We start with a fixed point characterisation, and then explain how 
  these probabilities can be approximated using digitisation.

\ifextended\else\vspace*{-.2cm}\fi
\paragraph{Fixed point characterisation.}
Our goal is to come up with a fixed point characterisation for the maximum (minimum) probability to reach a set of goal states in a time interval. Let $\mathcal{I}$ and $\mathcal{Q}$ be the set
  of all nonempty nonnegative real intervals with real and rational
  bounds, respectively.
For interval $I \in \mathcal{I}$ and $t \in \mathbb{R}_{\ge 0}$, let $I \ominus t = \left \{ x - t \mid x \in I \wedge x \ge t \right \}$. 
Given MA $\mathcal{M}$, $I \in \mathcal{I}$ and a set $G \subseteq S$ of goal states, the set of all paths that reach some goal states within interval $I$ is denoted by $\diamondsuit^{I} \, G$. 
Let $p^{\mathcal{M}}_{\max}(s,\diamondsuit^I \, G)$ be the maximum probability of reaching $G$ within interval $I$ if starting in state $s$ at time $0$. 
Here, the maximum is taken over all possible general measurable policies.
The next result provides a characterisation of $p_{\max}^{\mathcal{M}}(s,\diamondsuit^I \, G)$ as a fixed point.
\begin{lemma} 
\label{fpc:ma} 
Let $\mathcal{M}$ be an MA, $G \subseteq S$ and $I \in \mathcal{I}$ with $\inf I = a$ and $\sup I = b$. Then, $p^{\mathcal{M}}_{\max}(s,\diamondsuit^I \, G)$ is the least fixed point of the higher-order operator
    $\Omega \colon (S \times \mathcal{I} \rightarrowtail [0,1])
    \rightarrowtail (S \times \mathcal{I} \rightarrowtail [0,1])$,
    which for $s \in \MS$ is given by:
$$        \Omega(F)(s,I)=
        \begin{cases}
          \int_{0}^{b}E(s)e^{-E(s)t}\sum_{s' \in S}\DiBrPr(s,\bot,s')F(s',I \ominus t)\dd t & \!s \notin G \\
          e^{-E(s)a} + \int_{0}^{a}E(s)e^{-E(s)t}\sum_{s' \in
            S}\DiBrPr(s,\bot,s')F(s',I \ominus t)\dd t & \!s \in G
        \end{cases}
$$
and for $s \in \PS$ is defined by:
      \begin{equation*}
        \Omega(F)(s,I)=
        \begin{cases}
          1 & s \in G \wedge a = 0 \\
          \max_{\alpha \in Act_{\setminus \bot}(s)} \sum_{s' \in S}
          \DiBrPr(s, \alpha, s')F(s',I) & \mbox{otherwise.}
        \end{cases}
      \end{equation*} 
  \end{lemma}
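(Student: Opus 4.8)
The plan is to treat $\Omega$ as an operator on the complete lattice $L = (S \times \mathcal{I} \rightarrowtail [0,1])$ under the pointwise order and to combine a lattice-theoretic existence argument with a probabilistic identification of the least fixed point with $p_{\max}^{\mathcal{M}}$. First I would check that $\Omega$ is monotone: every ingredient of the right-hand side — integration against the nonnegative kernel $E(s)e^{-E(s)t}$, the convex combinations with weights $\DiBrPr(s,\bot,s')$ and $\DiBrPr(s,\alpha,s')$, and the pointwise $\max$ over actions — preserves the pointwise order, while the additive constants $e^{-E(s)a}$ and $1$ do not depend on $F$. Monotonicity on a complete lattice already yields existence of a least fixed point by Knaster--Tarski. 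Moreover, since the integrand is dominated by the integrable function $E(s)e^{-E(s)t}$, dominated convergence makes the integral operator continuous, so $\Omega$ is Scott-continuous and Kleene's theorem gives $\mathrm{lfp}\,\Omega = \sup_{n} \Omega^{n}(\mathbf{0})$.

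The core of the argument is to identify the iterates with step-bounded optimal values. I would let $p_{\max}^{(n)}(s,\diamondsuit^{I}G)$ denote the maximal probability, over general measurable policies, that $G$ is reached within $I$ and within at most $n$ transitions, and prove $\Omega^{n+1}(\mathbf{0}) = p_{\max}^{(n)}$ by induction on $n$. The base case $n=0$ is precisely $\Omega(\mathbf{0})$: with no transitions, $G$ is reached during $I$ only by surviving in a goal Markovian state until time $a$ (probability $e^{-E(s)a}$) or by already occupying a goal probabilistic state at time $0$ when $a=0$ (probability $1$), and is impossible otherwise, matching the four clauses. For the inductive step I would condition on the first transition out of $s$. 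If $s \in \MS \setminus G$, the first jump occurs at time $t$ with density $E(s)e^{-E(s)t}$, after which the process is in $s'$ with probability $\DiBrPr(s,\bot,s')$ and must reach $G$ within the residual interval $I \ominus t$; since dwelling in $s \notin G$ never visits $G$, a jump after $b$ contributes nothing, which is exactly why the integral is truncated at $b$. If $s \in \MS \cap G$, not jumping before $a$ already realises a visit to $G$ inside $I$ (probability $e^{-E(s)a}$), whereas a jump at $t<a$ leaves $G$ and must re-reach it within $I\ominus t$, producing the integral over $[0,a]$. If $s \in \PS$, no time elapses, so the optimum is the action maximising $\sum_{s'} \DiBrPr(s,\alpha,s')\,F(s',I)$, except when $s\in G$ and $a=0$, where the goal is met instantaneously. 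In each case the recursion is exactly the corresponding clause of $\Omega$ applied to $F = \Omega^{n}(\mathbf{0}) = p_{\max}^{(n-1)}$.

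It then remains to pass to the limit. Here I would invoke the non-Zeno assumption: within the finite horizon $[0,b]$ the number of transitions is almost surely finite, and in particular the probability of requiring more than $n$ transitions to reach $G$ within $I$ tends to $0$, so $p_{\max}^{(n)} \uparrow p_{\max}^{\mathcal{M}}$ pointwise. Combining this with the Kleene characterisation gives $\mathrm{lfp}\,\Omega = \sup_{n}\Omega^{n}(\mathbf{0}) = \sup_{n} p_{\max}^{(n)} = p_{\max}^{\mathcal{M}}$, which is the assertion of the lemma.

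I expect the main obstacle to be the inductive step, specifically the justification that the supremum over general measurable policies commutes with the expectation over the first sojourn time, i.e.\ that an optimal policy may, without loss of generality, choose its post-jump behaviour as a measurable function of the jump time $t$ and target $s'$. This is a dynamic-programming / measurable-selection argument that relies on measurability of $(s',t)\mapsto F(s',I\ominus t)$ and boundedness in $[0,1]$ to legitimise interchanging $\sup_{D}$ with $\int$. A secondary technical point is the convergence $p_{\max}^{(n)}\uparrow p_{\max}^{\mathcal{M}}$, which rests essentially on non-Zenoness to exclude infinitely many transitions — in particular infinitely many instantaneous probabilistic transitions — within the bounded horizon.
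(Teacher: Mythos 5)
The paper offers no proof of this lemma for you to be compared against: the authors dispatch it with the single remark that it ``is a simple generalisation of that for IMCs''~\cite{DBLP:conf/tacas/ZhangN10}, and the appendix proves only the digitisation bound (Theorem~\ref{thm:tbr}), taking Lemma~\ref{fpc:ma} as given. Judged on its own merits, your proposal is correct and is essentially the argument that lives in that cited IMC literature, transplanted to MA: Knaster--Tarski for existence, $\omega$-continuity (monotone/dominated convergence in the integral clauses, a finite maximum over actions) plus Kleene iteration to get $\mathrm{lfp}\,\Omega=\sup_n \Omega^n(\mathbf{0})$, identification of the $n$-th iterate with the optimal value over paths using at most $n$ transitions via one-step conditioning, and non-Zenoness to pass to the limit; the only genuinely MA-specific ingredient is the one the paper itself points at, namely that the probabilistic-state clause maximises over \emph{distributions} $\mu^s_\alpha$ rather than over single successor states. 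Your two flagged caveats are exactly where the real work sits and deserve to be made explicit in a full write-up: (i) the interchange of $\sup_D$ with the integral over the first sojourn time is a dynamic-programming/measurable-selection step, and relatedly the lattice must be restricted to measurable functions for $\Omega$ to be well defined (the iterates $\Omega^n(\mathbf{0})$ remain measurable by induction, so the Kleene chain stays inside this domain); (ii) the convergence $p^{(n)}_{\max}\uparrow p^{\mathcal{M}}_{\max}$ needs that under every policy almost every path makes only finitely many transitions within the bounded horizon (which is what the paper's standing non-Zeno assumption guarantees), so that $\diamondsuit^I G$ is, up to null sets, the increasing union of the transition-bounded events and the suprema over $n$ and over policies commute. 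With those points filled in, your proof is complete, and is in fact more self-contained than what the paper provides.
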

This characterisation is a simple generalisation of that for IMCs~\cite{DBLP:conf/tacas/ZhangN10}, reflecting the fact that taking an action from an probabilistic state leads to a distribution over the states (rather than a single state).
The above characterisation yields an integral equation system which is in general not directly tractable \cite{DBLP:journals/tse/BaierHHK03}. 
To tackle this problem, we approximate the fixed point characterisation using digitisation, extending ideas developed in \cite{DBLP:conf/tacas/ZhangN10}. 
We split the time interval into equally-sized digitisation steps, assuming a digitisation constant $\delta$, small enough such that with high probability at most one Markovian transition firing occurs in any digitisation step. 
This allows us to construct a digitised MA (dMA), a variant of a semi-MDP, obtained by summarising the behaviour of the MA at equidistant time points.  
Paths in a dMA can be seen as time-abstract paths in the corresponding MA, implicitly still counting digitisation steps, and thus discrete time. 
Digitisation of MA \defma and digitisation constant $\delta$, proceeds by replacing $\mt{}$ by $\mt{}_{\delta} \, = \{ \, (s, \mu^{s}) \mid s \in \MS \, \}$, where
    \begin{equation*}
      \mu^s (s') = 
      \begin{cases}
        (1-e^{-E(s)\delta})\DiBrPr(s, \bot, s') & \mbox{if } s' \neq s\\
        (1-e^{-E(s)\delta})\DiBrPr(s, \bot, s') + e^{-E(s)\delta} & \mbox{otherwise.}
      \end{cases}
    \end{equation*} 
Using the above fixed point characterisation, it is now possible to relate reachability probabilities in an MA $\mathcal{M}$ to reachability probabilities in its dMA $\mathcal{M}_{\delta}$.
\ifextended\else\vspace*{-.2cm}\fi
  \begin{theorem}\label{thm:tbr}
    Given MA $\mathcal{M}=(S,\Act, \it{}, \mt{}, s_0)$, $G \subseteq S$, interval 
    $I=[0,b] \in \mathcal{Q}$ with $b \ge 0$ and $\lambda = \max_{s \in \scriptsize\MS}E(s)$. 
    Let $\delta > 0$ be such that $b=k_b\delta$ for some $k_b \in \mathbb{N}$. 
    Then, for all $s \in S$ it holds that
    \begin{equation*}
      p^{\mathcal{M}_{\delta}}_{\max}(s, \diamondsuit^{[0,k_b]} \, G) 
      \ \leq \ 
      p^{\mathcal{M}}_{\max}(s, \diamondsuit^{[0,b]} \, G) 
      \ \leq \ 
      p^{\mathcal{M}_{\delta}}_{\max}(s, \diamondsuit^{[0,k_b]} \, G) + 1 - e^{- \lambda b}\big(1+ \lambda \delta\big)^{k_b}.
    \end{equation*}
  \end{theorem}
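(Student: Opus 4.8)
The plan is to prove the two inequalities separately, in both cases relating the continuous-time quantity $p^{\mathcal{M}}_{\max}(s,\diamondsuit^{[0,b]}G)$ to the step-indexed digitised quantity through the horizon identity $b = k_b\delta$. The starting point is the fixed-point characterisation of Lemma~\ref{fpc:ma} together with its discrete analogue for $\mathcal{M}_\delta$: writing $p_\delta(s,k) = p^{\mathcal{M}_\delta}_{\max}(s,\diamondsuit^{[0,k]}G)$, a Markovian non-goal state satisfies the one-step recursion $p_\delta(s,k) = e^{-E(s)\delta}p_\delta(s,k{-}1) + (1-e^{-E(s)\delta})\sum_{s'}\DiBrPr(s,\bot,s')\,p_\delta(s',k{-}1)$, while probabilistic and goal states behave exactly as in $\Omega$. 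I would also unfold the continuous operator over a single digitisation step, splitting the first-jump integral at $t=\delta$, to obtain the matching identity $p^{\mathcal{M}}_{\max}(s,\diamondsuit^{[0,b]}G) = e^{-E(s)\delta}\,p^{\mathcal{M}}_{\max}(s,\diamondsuit^{[0,b-\delta]}G) + \int_0^\delta E(s)e^{-E(s)t}\sum_{s'}\DiBrPr(s,\bot,s')\,p^{\mathcal{M}}_{\max}(s',\diamondsuit^{[0,b-t]}G)\,dt$ for Markovian $s \notin G$.

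For the lower bound I proceed by induction on $k_b$. The base case $k_b=0$ is immediate, since with a zero time budget only instantaneous probabilistic transitions matter and these are identical in $\mathcal{M}$ and $\mathcal{M}_\delta$. In the inductive step the key observation is that $T \mapsto p^{\mathcal{M}}_{\max}(s',\diamondsuit^{[0,T]}G)$ is non-decreasing, so inside the first-step integral I may replace $p^{\mathcal{M}}_{\max}(s',\diamondsuit^{[0,b-t]}G)$ by the smaller $p^{\mathcal{M}}_{\max}(s',\diamondsuit^{[0,b-\delta]}G)$ (since $t\le\delta$), and then apply the induction hypothesis $p^{\mathcal{M}}_{\max}(s',\diamondsuit^{[0,b-\delta]}G) \ge p_\delta(s',k_b{-}1)$. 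As $\int_0^\delta E(s)e^{-E(s)t}\,dt = 1-e^{-E(s)\delta}$, the integral is bounded below by the digitised jump term, and together with the no-jump term (again via the induction hypothesis) this yields $p^{\mathcal{M}}_{\max}(s,\diamondsuit^{[0,b]}G) \ge p_\delta(s,k_b)$; probabilistic and goal states are handled directly from the matching recursions.

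The upper bound carries the error term and is the crux. Set $q = (1+\lambda\delta)e^{-\lambda\delta}$ and note that the claimed error equals $1 - q^{k_b}$, because $e^{-\lambda b} = (e^{-\lambda\delta})^{k_b}$; moreover $q$ is exactly the probability that a Poisson process of rate $\lambda$ produces at most one arrival in a window of length $\delta$. I would fix an $\varepsilon$-optimal general measurable policy $D$ in $\mathcal{M}$ and consider the event $A$ that in each of the $k_b$ windows $[i\delta,(i{+}1)\delta)$ at most one Markovian transition fires. Since every state has total rate $E(s)\le\lambda$, the number of Markovian transitions in a window is stochastically dominated by a $\mathrm{Poisson}(\lambda\delta)$ count under any policy (policies only affect the instantaneous probabilistic choices), so each window is good with probability at least $q$ and, by the Markov property across windows, $\Pr_{s,D}(\neg A) \le 1 - q^{k_b}$. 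On $A$ the timing of the at-most-one jump inside a window is irrelevant to how many windows are consumed, so a path reaching $G$ within time $b$ uses at most $k_b$ Markovian steps; translating $D$ into a time-abstract digitised policy that copies its decisions at probabilistic states shows $\Pr_{s,D}(\diamondsuit^{[0,b]}G \cap A) \le p_\delta(s,k_b)$. Combining, $p^{\mathcal{M}}_{\max}(s,\diamondsuit^{[0,b]}G) \le \Pr_{s,D}(\diamondsuit^{[0,b]}G\cap A) + \Pr_{s,D}(\neg A) + \varepsilon \le p_\delta(s,k_b) + (1-q^{k_b}) + \varepsilon$, and letting $\varepsilon \to 0$ gives the bound.

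The main obstacle is the final coupling step of the upper bound: making precise the map from the continuous timed policy $D$ to a time-abstract digitised policy, and showing that, restricted to the good event $A$, the continuous reachability event is measurably contained in the digitised one. This requires a careful cylinder-set correspondence between continuous paths lying in $A$ and step-indexed paths of $\mathcal{M}_\delta$, with attention to first-hitting semantics and to the instantaneous probabilistic transitions interleaved between Markovian ones (where non-Zenoness guarantees finiteness). I expect this to be where the real work lies, because a naive fixed-point induction for the upper bound does \emph{not} suffice: bounding the per-step discrepancy $p^{\mathcal{M}}_{\max}(s',\diamondsuit^{[0,b-t]}G) - p^{\mathcal{M}}_{\max}(s',\diamondsuit^{[0,b-\delta]}G)$ by the one-window slack $1-q$ loses the multiplicative factor $q^{k_b-1}$ needed to telescope to $1-q^{k_b}$. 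The good-event argument sidesteps this by isolating the single failure event whose probability is controlled uniformly by the maximal rate $\lambda$.
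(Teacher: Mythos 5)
Your proposal is correct, and its two halves relate to the paper's proof differently. The upper bound is essentially the paper's argument: Lemma~\ref{lemma:ub} performs exactly your split over the good event $A$ that every digitisation window contains at most one Markovian jump, bounds $\Pr_{s,D}(\neg A)$ by domination with Poisson processes of the maximal rate $\lambda$ (giving precisely $1-e^{-\lambda b}(1+\lambda\delta)^{k_b} = 1-q^{k_b}$), and identifies the reachability probability restricted to $A$ with the digitised probability via Lemma~\ref{lemma:dma}; note that the path correspondence you flag as ``the crux'' is treated just as informally in the paper (Lemma~\ref{lemma:dma} is a two-sentence argument), so the careful cylinder-set bookkeeping you anticipate is not actually carried out there either. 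The lower bound is where you genuinely diverge: the paper fixes a policy $D$ and proves the per-policy chain $\Pr_{s,D}(\diamondsuit^I G \mid \text{all windows good}) \le \Pr_{s,D}(\diamondsuit^I G \mid \text{first window good}) \le \Pr_{s,D}(\diamondsuit^I G)$ (Lemmas~\ref{lemma:lb2} and~\ref{lemma:lb1}, via one-step integral decomposition, monotonicity in the residual interval, and induction on $k_b$), and then converts the supremum of conditional probabilities into $p^{\mathcal{M}_\delta}_{\max}$ using Lemma~\ref{lemma:dma} again; you instead induct directly on the two optimal value functions through their fixed-point recursions. Your route is more self-contained on this side---it needs only Lemma~\ref{fpc:ma}, the digitised recursion, and monotonicity, and dispenses with the conditional-probability detour---while the paper's per-policy formulation keeps both bounds flowing through the single correspondence lemma. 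One point you gloss (``probabilistic and goal states are handled directly from the matching recursions'') deserves care: at probabilistic states the step budget does not decrease, so the induction hypothesis cannot be applied verbatim; one must argue that the value at probabilistic states depends monotonically on the Markovian boundary values through the zero-time MDP reachability problem---this is the same issue the paper resolves with its law-of-total-probability splitting through Markovian states (case (ii) of Lemmas~\ref{lemma:lb1} and~\ref{lemma:lb2}), and the same fix works for you. Finally, your closing observation is apt: a naive per-step error recursion only telescopes to $k_b(1-q)$, which exceeds $1-q^{k_b}$, so the good-event argument is indeed needed to obtain the stated constant; this is exactly why the paper structures its upper bound that way.
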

This theorem can be extended to intervals with non-zero lower bounds; for the sake of brevity, the details are omitted here.
%
The remaining problem is to compute the maximum (or minimum) probability to reach $G$ in a dMA within a step bound $k \in \mathbb{N}$.
Let $\diamondsuit^{[0,k]} \, G$ be the set of infinite paths in a dMA that reach a $G$ state within $k$ steps, and
$p^{\mathcal{D}}_{\max}(s,\diamondsuit^{[0,k]} \, G)$ denote the maximum probability of this set.
Then we have
$p^{\mathcal{D}}_{\max}(s,\diamondsuit^{[0,k]} \, G) = \sup_{D \in \TAS} \Pr_{s,D}(\diamondsuit^{[0,k]} \, G)$. 
Our algorithm is now an adaptation (to dMA) of the well-known value iteration scheme for MDPs.

The algorithm proceeds by backward unfolding of the dMA in an
iterative manner, starting from the goal states.  Each iteration
intertwines the analysis of Markov states and of probabilistic states.
The key issue is that a path from probabilistic states to $G$ is split
into two parts: reaching Markov states from probabilistic states in zero
time and reaching goal states from Markov states in interval $[0,j]$,
where $j$ is the step count of the iteration. The former computation
can be reduced to an unbounded reachability problem in the MDP induced
by probabilistic states with rewards on Markov states.  For the latter,
the algorithm operates on the previously computed reachability
probabilities from all Markov states up to step count $j$.  We can
generalize this recipe from step-bounded reachability to step
interval-bounded reachability, details are described
in~\cite{HatefiH12}.

\ifextended\else\vspace*{-.3cm}\fi

\section{Tool-chain and case studies}
\label{sec:tool}
\ifextended\else\vspace*{-.2cm}\fi

This section describes the implementation of the algorithms discussed, together with the modelling features resulting in our \toolname~tool-chain. 
Furthermore, we present two case studies that provide empirical evidence of the strengths and weaknesses of the \toolname~tool chain. 

\subsection{\toolname~tool chain} 

\begin{figure}[b]
\ifextended\else\vspace*{-.2cm}\fi
\centering\begin{tikzpicture}[scale=0.78, transform shape]

	\node[state, rectangle, minimum width=50pt, minimum height=20pt] (s_0) {SCOOP};
	\node[state, rectangle, minimum width=50pt, minimum height=20pt] (s_2) [right of=s_0, node distance=3.9cm] {IMCA};
	\node[state, rectangle, draw=white, minimum width=50pt, minimum height=20pt] (s_3) [right of=s_2, node distance=2.75cm] {Results};
	\node[state, rectangle, draw=white] (s_4) [above of=s_0, node distance=1.2cm] {MAPA spec + Property};
	\draw[draw=white] (s_0) -- node [auto,swap] {Goal states} (s_2);
	\draw[->] (s_0) -- node [auto] {MA} (s_2);
	\draw[->, in=295, out=245, loop] (s_0) edge node [auto,swap] {reduce} (s_0);
	\draw[->] (s_2) -- node [auto] {} (s_3);
	\draw[->] (s_4) -- (s_0);

	\node[state, draw=gray, rectangle, minimum width=50pt, minimum height=20pt] (s_5) [left of=s_0, node distance=3.9cm] {\color{gray} GEMMA};
	\draw[draw=white] (s_5) -- node [auto,swap] {\color{gray} Property} (s_0);
	\draw[->, gray] (s_5) -- node [auto] {\color{gray} MAPA-spec} (s_0);
	\node[state, rectangle, draw=white] (s_6) [above of=s_5, node distance=1.2cm] {\color{gray} GSPN + Property};
	\draw[->, gray] (s_6) -- (s_5);

\end{tikzpicture}
\ifextended\else\vspace*{-.4cm}\fi
\caption{Analysing Markov Automata using the \toolname~tool chain.}
\label{fig:approach}
\end{figure}
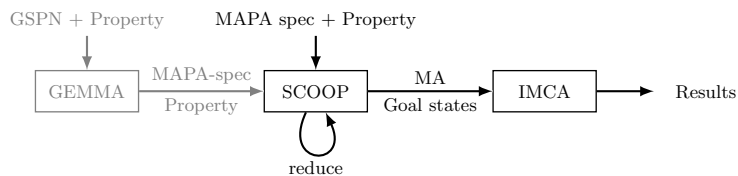

Our tool chain consists of several tool components: SCOOP~\cite{Timmer11,MAPA}, IMCA~\cite{DBLP:conf/nfm/GuckHKN12}, and GEMMA (realized in Haskell), see Figure~\ref{fig:approach}.
The tool-chain comprises about 8,000 LOC (without comments).
SCOOP (in Haskell) supports the generation from MA from MAPA specifications by a translation into the MLPPE format.
It implements all the reduction techniques described in Section~\ref{section:mapa}, in particular confluence reduction.
The capabilities of the IMCA tool-component (written in {\tt C++}) have been lifted to expected time and long-run objectives for MA, and extended with timed reachability objectives.
It also supports (untimed) reachability objectives which are not further treated here.
A prototypical translator from GSPNs to MA, in fact MAPA specifications, has been realized (the GEMMA component).
We connected the three components into a single tool chain, by making SCOOP export the (reduced) state space of an MLPPE in the IMCA input language. 
Additionally, SCOOP has been extended to translate properties, based on the actions and parameters of a MAPA specification, to a set of goal states in the underlying MA. 
That way, in one easy process systems and their properties can be modelled in MAPA, translated to an optimised MLPPE by SCOOP, exported to the IMCA tool and then analysed.

\ifextended\else\vspace*{-.2cm}\fi
\subsection{Case studies}
\ifextended\else\vspace*{-.2cm}\fi

This section reports on experiments with \toolname. All
experiments were conducted on a 2.5 GHz Intel Core i5 processor with 4GB RAM, running on Mac OS X 10.8.3. 

\ifextended\else\vspace*{-.2cm}\fi
\paragraph{Processor grid.} 
First, we consider a model of a $2 \times 2$ concurrent processor architecture. 
Using GEMMA, we automatically derived the MA model from the GSPN model in~\cite[Fig.~11.7]{Mar95}. 
Previous analysis of this model required weights for all immediate transitions, requiring complete knowledge of the mutual behaviour of all these transitions. 
We allow a weight assignment to just a (possibly empty) subset of the immediate transitions---reflecting the practical scenario of only knowing the mutual behaviour for a selection of the transitions. 
For this case study we indeed kept weights for only a few of the transitions, obtaining probabilistic behaviour for them and nondeterministic behaviour for the others.

Table~\ref{tab:grid_tb} reports on the time-bounded and time-interval bounded probabilities for reaching a state such that the first processor has an empty task queue. We vary the degree of multitasking $K$, the error bound $\epsilon$ and the interval~$I$. For each setting, we report the number of states $|S|$ and goal states $|G|$, and the 
generation time with SCOOP (both with and without the reductions from Section~\ref{section:mapa}).

The runtime demands grow with both the upper and lower time
bound, as well as with the required accuracy. The model size also
affects the per-iteration cost and thus the overall complexity of
reachability computation.
Note that our reductions speed-up the analysis times by a factor between $1.7$ and $3.5$: even more than the reduction in state space size. This is due to our techniques significantly reducing the degree of nondeterminism.

Table \ref{tab:grid} displays results for expected time until an empty task queue, as well as the long-run average that a processor is active. 
Whereas~\cite{Mar95} fixed all nondeterminism, obtaining for instance an LRA of $0.903$ for $K=2$, we are now able to retain nondeterminism and provide the more informative interval $[0.8810, 0.9953]$.
Again, our reduction techniques significantly improve runtimes.

\ifextended
  \begin{table}[p]
\else
  \begin{table}[t!]
\fi
\centering
{\smaller
\vskip5mm\hspace{-0.5cm}\scalebox{.93}{\begin{tabular}{c||ccc|ccc|cc|rrr|rrr}
& \multicolumn{3}{c|}{unreduced} & \multicolumn{3}{c|}{reduced}\\
$K$ & $|S|$ & $|G|$ & time & $|S|$ & $|G|$ & time & $\epsilon$ & $I$ &  \smash{\begin{turn}{35}$p^{\min}(s_0, \diamondsuit^{I} G)$ \end{turn}}\hspace*{-10ex}& \smash{\begin{turn}{35}time(unred)\end{turn}}\hspace*{-8ex} & \smash{\begin{turn}{35}time(red)\end{turn}}\hspace*{-6ex} &\smash{\begin{turn}{35} $p^{\max}(s_0, \diamondsuit^{I} G)$ \end{turn}}\hspace*{-10.5ex}&  \smash{\begin{turn}{35}time(unred)\end{turn}} \hspace*{-9ex} &  \smash{\begin{turn}{35}time(red)\end{turn}} \hspace*{-7ex}\\
\hline
\hline
\multirow{4}{*}{2} & \multirow{4}{*}{2{,}508}  & \multirow{4}{*}{1{,}398} & \multirow{4}{*}{0.6} & 
\multirow{4}{*}{1{,}789}  & \multirow{4}{*}{1{,}122} & \multirow{4}{*}{0.8} & 
$10^{-2}$ & $[0,3]$ & 0.91\phantom{0} & 58.5 & 31.0 & 0.95\phantom{0} & 54.9 & 21.7\\
  & & & & & & & 
$10^{-2}$ & $[0,4]$ & 0.96\phantom{0} & 103.0 & 54.7 & 0.98\phantom{0} & 97.3 & 38.8\\
  & & & & & & & 
$10^{-2}$ & $[1,4]$ & 0.91\phantom{0} & 117.3 & 64.4 & 0.96\phantom{0} & 109.9 & 49.0\\
  & & & & & & & 
$10^{-3}$ & $[0,3]$ & 0.910 & 580.1 & 309.4 & 0.950 & 544.3 & 218.4\\
\hline
\multirow{4}{*}{3} & \multirow{4}{*}{10{,}852}  & \multirow{4}{*}{4{,}504} & \multirow{4}{*}{3.1} & 
\multirow{4}{*}{7{,}201}  & \multirow{4}{*}{3{,}613} & \multirow{4}{*}{3.5} & 
$10^{-2}$ & $[0,3]$ & 0.18\phantom{6} & 361.5 & 202.8 & 0.23\phantom{1} & 382.8 & 161.1\\
  & & & & & & & 
$10^{-2}$ & $[0,4]$ & 0.23\phantom{6} & 643.1 & 360.0 & 0.30\phantom{1} & 681.4 & 286.0\\
  & & & & & & & 
$10^{-2}$ & $[1,4]$ & 0.18\phantom{6} & 666.6 & 377.3 & 0.25\phantom{1} & 696.4 & 317.7\\
  & & & & & & & 
$10^{-3}$ & $[0,3]$ & 0.176 & 3{,}619.5 & 2{,}032.1 & 0.231 & 3{,}837.3 & 1{,}611.9\\
\hline
4 & 31{,}832  & 10{,}424 & 9.8 & 20{,}021  & 8{,}357 & 10.5 & 
$10^{-2}$ & $[0,3]$ & 0.01\phantom{6} & 1{,}156.8 & 614.9 & 0.03\phantom{1} & 1{,}196.5 & 486.4\\
\hline
\end{tabular}}
}\vskip2pt
\caption{Interval reachability probabilities for the grid. (Time in seconds.)}
\label{tab:grid_tb}
\end{table}

\ifextended
  \begin{table}[p]
\else
  \begin{table}[t!]
\fi
\centering
{\smaller
\vskip5mm\hspace{-0.3cm}\scalebox{.93}{\begin{tabular}{c||rrr|rrr|rrr|rrr}
$K$ & \smash{\begin{turn}{35}$eT^{\min}(s_0, \F G)$ \end{turn}}\hspace*{-10ex}& \smash{\begin{turn}{35}time(unred)\end{turn}}\hspace*{-8ex} & \smash{\begin{turn}{35}time(red)\end{turn}}\hspace*{-6ex} &\smash{\begin{turn}{35} $eT^{\max}(s_0, \F G)$ \end{turn}}\hspace*{-10.5ex}&  \smash{\begin{turn}{35}time(unred)\end{turn}} \hspace*{-9ex} &  \smash{\begin{turn}{35}time(red)\end{turn}} \hspace*{-7ex} & 
\smash{\begin{turn}{35}$\LRA^{\min}(s_0, G)$ \end{turn}}\hspace*{-10ex}& \smash{\begin{turn}{35}time(unred)\end{turn}}\hspace*{-8ex} & \smash{\begin{turn}{35}time(red)\end{turn}}\hspace*{-6ex} &\smash{\begin{turn}{35} $\LRA^{\max}(s_0, G)$ \end{turn}}\hspace*{-10.5ex}&  \smash{\begin{turn}{35}time(unred)\end{turn}} \hspace*{-9ex} &  \smash{\begin{turn}{35}time(red)\end{turn}} \hspace*{-7ex}
\\
\hline
\hline
2 & 1.0000 & 0.3 & 0.1 & 1.2330 & 0.7 & 0.3 & 0.8110 & 1.3 & 0.7 & 0.9953 & 0.5 & 0.2\\
3 & 11.1168 & 18.3 & 7.7 & 15.2768 & 135.4 & 40.6 & 0.8173 & 36.1 & 16.1 & 0.9998 & 4.7 & 2.6\\ 
4 & 102.1921 & 527.1 & 209.9 & 287.8616 & 6{,}695.2 & 1{,}869.7 & 0.8181 & 505.1 & 222.3 & 1.0000 & 57.0 & 34.5\\ 
\hline
\end{tabular}}
}\vskip2pt
\caption{Expected times and long-run averages for the grid. (Time in seconds.)}
\label{tab:grid}
\ifextended\else\vspace*{-.7cm}\fi
\end{table}

\ifextended\else
\paragraph{Polling system.} 
Second, we consider the polling system from
Fig.~\ref{fig:spec} with two stations and one server.
We varied the queue sizes $Q$ and the number of job types~$N$,
analysing a total of six different settings.
Since---as for the previous case---analysis scales proportionally with the error bound, 
we \mbox{keep this constant here}.

Table~\ref{tab:poll_job_tb} reports results for
time-bounded and time-interval bounded properties, and
Table~\ref{tab:poll_job} displays probabilities and runtime
results for expected times and long-run averages. For all analyses,
the goal set consists of all states for which both station queues are full.
\fi

\ifextended
  \begin{table}[p]
\else
  \begin{table}[t!]
\fi
\centering
{\smaller
\vskip7mm\hspace{-0.4cm}\scalebox{.93}{\begin{tabular}{cc||ccc|ccc|cc|rrr|rrr}
&& \multicolumn{3}{c|}{unreduced} & \multicolumn{3}{c|}{reduced}\\
$Q$ & $N$ & $|S|$ & $|G|$ & time & $|S|$ & $|G|$ & time & $\epsilon$ & $I$ &  \smash{\begin{turn}{35}$p^{\min}(s_0, \diamondsuit^{I} G)$ \end{turn}}\hspace*{-10ex}& \smash{\begin{turn}{35}time(unred)\end{turn}}\hspace*{-8ex} & \smash{\begin{turn}{35}time(red)\end{turn}}\hspace*{-6ex} &\smash{\begin{turn}{35} $p^{\max}(s_0, \diamondsuit^{I} G)$ \end{turn}}\hspace*{-10.5ex}&  \smash{\begin{turn}{35}time(unred)\end{turn}} \hspace*{-9ex} &  \smash{\begin{turn}{35}time(red)\end{turn}} \hspace*{-7ex}\\
\hline
\hline
\multirow{2}{*}{2} & \multirow{2}{*}{3} & \multirow{2}{*}{1{,}497} & \multirow{2}{*}{567} & \multirow{2}{*}{0.4} & \multirow{2}{*}{990} & \multirow{2}{*}{324} & \multirow{2}{*}{0.2} &  
$10^{-3}$ & $[0,1]$ & 0.277 & 4.7 & 2.9 & 0.558 & 4.6 & 2.5\\
& & & & & & & &
$10^{-3}$ & $[1,2]$ & 0.486 & 22.1 & 14.9 & 0.917 & 22.7 & 12.5\\
\hline
\multirow{2}{*}{2} & \multirow{2}{*}{4} & \multirow{2}{*}{4{,}811} & \multirow{2}{*}{2{,}304} & \multirow{2}{*}{1.0} & \multirow{2}{*}{3{,}047} & \multirow{2}{*}{1{,}280} & \multirow{2}{*}{0.6} &  
$10^{-3}$ & $[0,1]$ & 0.201 & 25.1 & 14.4 & 0.558 & 24.0 & 13.5 \\
& & & & & & & &
$10^{-3}$ & $[1,2]$ & 0.344 & 106.1 & 65.8 & 0.917 & 102.5 & 60.5\\
\hline
\hline
\multirow{2}{*}{3} & \multirow{2}{*}{3} & \multirow{2}{*}{14{,}322} & \multirow{2}{*}{5{,}103} & \multirow{2}{*}{3.0} & \multirow{2}{*}{9{,}522} & \multirow{2}{*}{2{,}916} & \multirow{2}{*}{1.7} &  
$10^{-3}$ & $[0,1]$ & 0.090 & 66.2 & 40.4 & 0.291 & 60.0 & 38.5\\
& & & & & & & &
$10^{-3}$ & $[1,2]$ & 0.249 & 248.1 & 180.9 & 0.811 & 241.9 & 158.8\\
\hline
\multirow{2}{*}{3} & \multirow{2}{*}{4} & \multirow{2}{*}{79{,}307} & \multirow{2}{*}{36{,}864} & \multirow{2}{*}{51.6} & \multirow{2}{*}{50{,}407} & \multirow{2}{*}{20{,}480} & \multirow{2}{*}{19.1} &  
$10^{-3}$ & $[0,1]$ & 0.054 & 541.6 & 303.6 & 0.291 & 578.2 & 311.0\\
& & & & & & & &
$10^{-3}$ & $[1,2]$ & 0.141 & 2{,}289.3 & 1{,}305.0 & 0.811 & 2{,}201.5 & 1{,}225.9\\
\hline
\hline
\multirow{2}{*}{4} & \multirow{2}{*}{2} & \multirow{2}{*}{6{,}667} & \multirow{2}{*}{1{,}280} & \multirow{2}{*}{1.1} & \multirow{2}{*}{4{,}745} & \multirow{2}{*}{768} & \multirow{2}{*}{0.8} &  
$10^{-3}$ & $[0,1]$ & 0.049 & 19.6 & 14.0 & 0.118 & 19.7 & 12.8\\
& & & & & & & &
$10^{-3}$ & $[1,2]$ & 0.240 & 83.2 & 58.7 & 0.651 & 80.9 & 53.1\\
\hline
\multirow{2}{*}{4} & \multirow{2}{*}{3} & \multirow{2}{*}{131{,}529} & \multirow{2}{*}{45{,}927} & \multirow{2}{*}{85.2} & \multirow{2}{*}{87{,}606} & \multirow{2}{*}{26{,}244} & \multirow{2}{*}{30.8} &  
$10^{-3}$ & $[0,1]$ & 0.025 & 835.3 & 479.0 & 0.118 & 800.7 & 466.1\\
& & & & & & & &
$10^{-3}$ & $[1,2]$ & 0.114 & 3{,}535.5 & 2{,}062.3 & 0.651 & 3{,}358.9 & 2{,}099.5\\
\hline
\end{tabular}
}
}\vskip3pt
\caption{Interval reachability probabilities for the polling system. (Time in seconds.)}
\label{tab:poll_job_tb}
\ifextended\else\vspace*{-.5cm}\fi
\end{table}


\ifextended
  \begin{table}[p]
\else
  \begin{table}[t!]
\fi
\vskip0.7cm
\centering
{\smaller
\hspace{-0.3cm}\scalebox{.93}{\begin{tabular}{cc||rrr|rrr|rrr|rrr}
$Q$ & $N$ & \smash{\begin{turn}{35}$eT^{\min}(s_0, \F G)$ \end{turn}}\hspace*{-10ex}& \smash{\begin{turn}{35}time(unred)\end{turn}}\hspace*{-8ex} & \smash{\begin{turn}{35}time(red)\end{turn}}\hspace*{-6ex} &\smash{\begin{turn}{35} $eT^{\max}(s_0, \F G)$ \end{turn}}\hspace*{-10.5ex}&  \smash{\begin{turn}{35}time(unred)\end{turn}} \hspace*{-9ex} &  \smash{\begin{turn}{35}time(red)\end{turn}} \hspace*{-7ex} & 
\smash{\begin{turn}{35}$\LRA^{\min}(s_0, G)$ \end{turn}}\hspace*{-10ex}& \smash{\begin{turn}{35}time(unred)\end{turn}}\hspace*{-8ex} & \smash{\begin{turn}{35}time(red)\end{turn}}\hspace*{-6ex} &\smash{\begin{turn}{35} $\LRA^{\max}(s_0, G)$ \end{turn}}\hspace*{-10.5ex}&  \smash{\begin{turn}{35}time(unred)\end{turn}} \hspace*{-9ex} &  \smash{\begin{turn}{35}time(red)\end{turn}} \hspace*{-7ex}
\\
\hline
\hline
2 & 3 & 1.0478 & 0.2 & 0.1 & 2.2489 & 0.3 & 0.2 & 0.1230 & 0.8 & 0.5 & 0.6596 & 0.2 & 0.1\\
2 & 4 & 1.0478 & 0.2 & 0.1 & 3.2053 & 2.0 & 1.0 & 0.0635 & 9.0 & 5.2 & 0.6596 & 1.3 & 0.6\\
\hline
3 & 3 & 1.4425 & 1.0 & 0.6 & 4.6685 & 8.4 & 5.0 & 0.0689 & 177.9 & 123.6 & 0.6600 & 26.2 & 13.0 \\
3 & 4 & 1.4425 & 9.7 & 4.6 & 8.0294 & 117.4 & 67.2 & 0.0277 & 7{,}696.7 & 5{,}959.5 & 0.6600 & 1{,}537.2 & 862.4\\
\hline
4 & 2 & 1.8226 & 0.4 & 0.3 & 4.6032 & 2.4 & 1.6 & 0.1312 & 45.6 & 32.5 & 0.6601 & 5.6 & 3.9\\
4 & 3 & 1.8226 & 29.8 & 14.2 & 9.0300 & 232.8 & 130.8 & \multicolumn{3}{c|}{-- timeout (18 hours) --} & 0.6601 & 5{,}339.8 & 3{,}099.0\\
\hline
\end{tabular}}
}\vskip3pt
\caption{Expected times and long-run averages for the polling system. (Time in seconds.)}
\label{tab:poll_job}
\ifextended\else\vspace*{-.7cm}\fi
\end{table}
\ifextended
  \afterpage{\clearpage}
\fi

\ifextended\else\vspace*{-.2cm}\fi

\ifextended
\paragraph{Polling system.} 
Second, we consider the polling system from
Fig.~\ref{fig:spec} with two stations and one server.
We varied the queue sizes $Q$ and the number of job types~$N$,
analysing a total of six different settings.
Since---as for the previous case---analysis scales proportionally with the error bound, 
we \mbox{keep this constant here}.

Table~\ref{tab:poll_job_tb} reports results for
time-bounded and time-interval bounded properties, and
Table~\ref{tab:poll_job} displays probabilities and runtime
results for expected times and long-run averages. For all analyses,
the goal set consists of all states for which both station queues are full.
\fi

\ifextended\else\vspace*{-.3cm}\fi
\section{Conclusion}
\label{sec:conc}
\vspace*{-.2cm}

This paper presented new algorithms for the quantitative analysis of Markov automata (MA) and proved their correctness.
Three objectives have been considered: expected time, long-run average, and timed reachability.
The \toolname\ tool-chain supports the modelling and reduction of MA, and can analyse these three objectives.
It is also equipped with a prototypical tool to map GSPNs onto MA.
The \toolname\ is accessible via its easy-to-use web interface that can be found at \mbox{\url{http://wwwhome.cs.utwente.nl/~timmer/mama}}.
Experimental results on a processor grid and a polling system give insight into the accuracy and scalability of the presented algorithms.
Future work will focus on efficiency improvements and reward extensions.


\ifextended\else\vspace*{-.3cm}\fi
\ifextended\else\renewcommand{\baselinestretch}{0.85}\fi

\bibliographystyle{abbrv}
\bibliography{references}

\ifextended\else\renewcommand{\baselinestretch}{1.0}\fi

\ifextended
\appendix
\section{Proof of Theorem 1}
\setcounter{theorem}{0}
\begin{theorem}
  The function $\mathit{eT}^{\min}$ is a fixpoint of the Bellman operator
  {\small \begin{align*}
    \left[L(v)\right](s) = \begin{cases}
      \displaystyle \frac{1}{E(s)} + \sum_{s' \in S} \bfP(s,s') \cdot v(s') & \text{ if } s \in \MS \setminus G \\
      \displaystyle \min_{\alpha \in \textit{\footnotesize Act}(s)} \sum_{s' \in S} \mu^s_\alpha(s') \cdot v(s') & \text{ if } s \in \PS \setminus G \\
      \displaystyle 0 & \text{ if } s \in G.
    \end{cases}
  \end{align*}}
\end{theorem}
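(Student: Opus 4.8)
The plan is to verify the fixpoint identity $[L(\mathit{eT}^{\min})](s) = \mathit{eT}^{\min}(s, \F G)$ directly for each of the three cases in the definition of $L$, by a first-step analysis of paths combined with a decomposition of the infimum over policies. Write $f(s) = \mathit{eT}^{\min}(s, \F G)$ throughout. The goal-state case is immediate: if $s \in G$ then every path $\pi$ starting in $s$ has $G \cap \pi@0 \neq \emptyset$, so $V_G(\pi) = 0$ and $f(s) = 0 = [L(f)](s)$.

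For a Markovian state $s \in \MS \setminus G$ the point is that $s$ carries no nondeterministic choice, so a policy only influences what happens from the successor states onward. First I would use that the sojourn time in $s$ is exponentially distributed with rate $E(s)$, hence has expectation $1/E(s)$, and---by memorylessness---is independent of which successor $s'$ is reached (with probability $\bfP(s,s')$). Conditioning on the first Markovian jump and using linearity of expectation gives, for every policy $D$, $\mathbb{E}_{s,D}(V_G) = \frac{1}{E(s)} + \sum_{s'} \bfP(s,s') \, \mathbb{E}_{s',D_{s'}}(V_G)$, where $D_{s'}$ is the residual policy after the jump to $s'$. Since the successor branches are independent and $s$ offers no choice, the infimum over $D$ distributes over the sum, yielding $f(s) = \frac{1}{E(s)} + \sum_{s'} \bfP(s,s')\, f(s') = [L(f)](s)$.

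For a probabilistic state $s \in \PS \setminus G$ the sojourn time is zero, so the whole contribution comes from the successors, but now the policy genuinely selects an enabled action. I would establish the two inequalities separately. For $f(s) \le [L(f)](s)$, fix a minimizing action $\alpha^\ast$ and, for each successor $s'$, an $\varepsilon$-optimal residual policy; the composite policy that plays $\alpha^\ast$ first witnesses $\mathbb{E}_{s,D}(V_G) \le \sum_{s'} \mu^s_{\alpha^\ast}(s')\, f(s') + \varepsilon$, and letting $\varepsilon \to 0$ gives $f(s) \le \min_\alpha \sum_{s'} \mu^s_\alpha(s')\, f(s')$. For the reverse inequality, take any policy $D$, expose its (possibly randomized) first choice over $\Act(s)$, and use $\mathbb{E}_{s',D_\alpha}(V_G) \ge f(s')$ to bound $\mathbb{E}_{s,D}(V_G) = \sum_\alpha D(\alpha) \sum_{s'} \mu^s_\alpha(s')\, \mathbb{E}_{s',D_\alpha}(V_G) \ge \min_\alpha \sum_{s'} \mu^s_\alpha(s')\, f(s')$; taking the infimum over $D$ gives $f(s) \ge [L(f)](s)$. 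Finiteness of $\Act(s)$ guarantees the minimum is attained and that randomizing over actions never beats the best deterministic choice.

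The main obstacle is the measure-theoretic justification of this first-step decomposition over the class of general measurable policies: one must argue rigorously that conditioning on the first transition splits a policy into a first decision together with well-defined residual policies on the successor subtrees, and that $\inf_D$ commutes with the decomposition---distributing over the successor branches in the Markovian case and over the action-weighted sum in the probabilistic case. This calls for a cylinder-set and path-shift argument on $\paths$, together with a measurable selection and gluing of the $\varepsilon$-optimal residual policies. Once this decomposition is in place, the remaining ingredients---the expected exponential sojourn time $1/E(s)$ and the linearity manipulations---are entirely routine.
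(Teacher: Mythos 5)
Your proposal is correct and takes essentially the same route as the paper's own proof: a case distinction over $s \in G$, $s \in \MS \setminus G$, and $s \in \PS \setminus G$, with a first-step decomposition that uses the expected exponential sojourn time $1/E(s)$ at Markovian states (where no choice exists, so the infimum distributes over successors) and the observation that at probabilistic states an optimal scheduler deterministically picks the minimizing action. Your two-inequality argument with $\varepsilon$-optimal residual policies is merely a more explicit rendering of the paper's direct exchange of $\inf_D$ with $\min_\alpha$, and the paper likewise leaves the measure-theoretic justification of the residual-policy decomposition implicit.
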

\textit{Proof}. We show that $L(eT^{\min}(s,\F G))=eT^{\min}(s,\F G)$, for all $s\in S$. Therefore, we will distinguish three cases: $s\in MS\setminus G, s\in PS\setminus G, s\in G$.
\renewcommand{\labelenumi}{(\roman{enumi})}
\begin{enumerate}
	\item if $s\in MS\setminus G$, we derive
	{\footnotesize
		\begin{eqnarray*}
			eT^{\min}(s,\F G) & = & \inf_D \mathbb{E}_{s,D}(V_G) = \inf_D \int_{\paths} V_G(\pi) \Pr_{s,D} (d\pi)\\
			& = & \inf_D \int_{0}^{\infty} t \cdot E(s)e^{-E(s)t} + \sum_{s \in S} \bfP(s,s') \cdot \mathbb{E}_{s',D(s\it{\bot, 1, t} \cdot)} (V_G)dt\\
			& = & \int_{0}^{\infty} t \cdot E(s)e^{-E(s)t} + \sum_{s \in S} \bfP(s,s') \cdot \inf_D \mathbb{E}_{s',D(s\it{\bot, 1, t} \cdot)} (V_G)dt\\
			& = & \int_{0}^{\infty} t \cdot E(s)e^{-E(s)t} + \sum_{s \in S} \bfP(s,s') \cdot \inf_D \mathbb{E}_{s',D} (V_G)dt\\
			& = & \int_{0}^{\infty} t \cdot E(s)e^{-E(s)t}dt + \sum_{s \in S} \bfP(s,s') \cdot eT^{\min}(s',\F G)\\
			& = & \frac{1}{E(s)} + \sum_{s \in S} \bfP(s,s') \cdot eT^{\min}(s',\F G)\\
			& = & L(eT^{\min}(s,\F G)).
		\end{eqnarray*}
	}
	\item if $s\in PS\setminus G$, we derive
		\begin{eqnarray*}
			eT^{\min}(s,\F G) & = & \inf_D \mathbb{E}_{s,D}(V_G) = \inf_D \int_{\paths} V_G(\pi)\Pr_{s,D}(d\pi)\\
			& = & \inf_D \sum_{s\it{\alpha,\mu,0}s'} D(s)(\alpha) \cdot \mathbb{E}_{s',D(s\it{\alpha,\mu,0}\cdot)}(V_G).
		\end{eqnarray*}
		Each action $\alpha\in\Act(s)$ uniquely determines a distribution $\mu^s_{\alpha}$, such that the successor state $s'$, with $s\it{\alpha,\mu^s_{\alpha},0}s'$, satisfies $\mu^s_{\alpha}(s') > 0$.
		\begin{equation*}
			\alpha = \argmin_{s\it{\alpha}\mu^s_{\alpha}}\inf_{D}\sum_{s'\in S}\mu^s_{\alpha}(s')\cdot \mathbb{E}_{s',D}(V_G)
		\end{equation*}
		Hence, all optimal schedulers choose $\alpha$ with probability $1$, i.e. $D(s)(\alpha) = 1$ and $D(s)(\sigma) = 0$ for all $\sigma\not=\alpha$. Thus, we obtain
		\begin{eqnarray*}
			eT^{\min}(s,\F G) & = & \inf_D \min_{s\it{\alpha}\mu^s_{\alpha}}\sum_{s'\in S}\mu^s_{\alpha}(s') \cdot \mathbb{E}_{s',D(s\it{\alpha,\mu^s_{\alpha},0}\cdot)}(V_G)\\
			& = & \min_{s\it{\alpha}\mu^s_{\alpha}}\inf_{D} \sum_{s' \in S}\mu^s_{\alpha}(s')\cdot \mathbb{E}_{s',D(s\it{\alpha,\mu^s_{\alpha},0}\cdot)}(V_G)\\
			& = & \min_{s\it{\alpha}\mu^s_{\alpha}}\inf_{D} \sum_{s' \in S}\mu^s_{\alpha}(s')\cdot \mathbb{E}_{s',D}(V_G)\\
			& = & \min_{s\it{\alpha}\mu^s_{\alpha}} \sum_{s' \in S}\mu^s_{\alpha}(s')\cdot eT^{\min}(s',\F G)\\
			& = & \min_{\alpha \in \Act(s)} \sum_{s' \in S}\mu^s_{\alpha}(s')\cdot eT^{\min}(s',\F G)\\
			& = & L(eT^{\min}(s,\F G)).
		\end{eqnarray*}
	\item if $s\in G$, we derive
		\begin{equation*}
			eT^{\min}(s,\F G) = \inf_{D}\int_{\paths}V_G(\pi)\Pr_{s,D}(d\pi) = 0 = L(eT^{\min}(s,\F G)).
		\end{equation*}
\end{enumerate} \qed
\section{Proof of Theorem 2}
\begin{theorem}
For MA $\mathcal{M}$, $\mathit{eT}^{\min}(s, \F G)$ equals $\mathit{cR}^{\min}(s, \F G)$ in 
$\mbox{\sf ssp}_{et}(\mI)$.
\end{theorem}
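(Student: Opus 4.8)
The plan is to prove the equality by a fixpoint-uniqueness argument that leans entirely on Theorem~\ref{thm_expected_reachability}. First I would write down the Bellman optimality operator $\tilde L$ associated with the non-negative SSP $\mbox{\sf ssp}_{et}(\mathcal{M})$: for $s \in G$ it returns $g(s)=0$, and for $s \notin G$ it returns $\min_{\sigma}\big(c(s,\sigma)+\sum_{s'}\bfP(s,\sigma,s')\,v(s')\big)$, the minimization ranging over the actions enabled in $s$. The core observation is that $\tilde L$ is syntactically identical to the operator $L$ of Theorem~\ref{thm_expected_reachability}. By the uniqueness of the optimal expected cost of a non-negative SSP~\cite{BerTsi91,deAlf99}, the function $\mathit{cR}^{\min}(\cdot,\F G)$ is the (unique) fixpoint of $\tilde L$; since Theorem~\ref{thm_expected_reachability} already exhibits $\mathit{eT}^{\min}(\cdot,\F G)$ as a fixpoint of $L=\tilde L$, the two must coincide.

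Second, I would verify the identity $L=\tilde L$ case by case against Definition~\ref{def_expected_time_reachability_ssp}. For a Markovian state $s\in\MS\setminus G$, only $\sigma=\bot$ is enabled, with cost $c(s,\bot)=\tfrac{1}{E(s)}$ and $\bfP(s,\bot,s')=\tfrac{\bfR(s,s')}{E(s)}=\bfP(s,s')$, so $\tilde L$ yields $\tfrac{1}{E(s)}+\sum_{s'}\bfP(s,s')\,v(s')$, exactly the first case of $L$. For a probabilistic state $s\in\PS\setminus G$ the enabled actions are $\alpha\in\Act(s)$, each with zero cost and $\bfP(s,\alpha,s')=\mu^s_\alpha(s')$, giving $\min_\alpha\sum_{s'}\mu^s_\alpha(s')\,v(s')$, the second case; and for $s\in G$ both operators return $0$. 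The semantic content behind this coincidence is that the deterministic per-visit cost $\tfrac{1}{E(s)}$ is precisely the expected sojourn time of the exponential delay at $s$, while probabilistic states are instantaneous and cost-free, so that by linearity of expectation and memorylessness the expected accumulated cost before reaching $G$ equals the expected elapsed time $V_G$.

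The main obstacle I anticipate is not the operator matching---which is routine---but justifying that the uniqueness hypothesis of~\cite{BerTsi91,deAlf99} actually applies and that the correspondence remains faithful for states from which $G$ is unreachable. One must argue, using the non-Zeno assumption (no reachable SCC consisting solely of probabilistic states), that the problem falls within the non-negative SSP framework in which $\mathit{cR}^{\min}$ is the least and, under the standard properness conditions, the unique fixpoint of $\tilde L$; states that cannot reach $G$ then receive value $+\infty$ on both sides, consistently with the extended-reals reading of $V_G$ via $\min(\emptyset)=+\infty$. Care is also needed to confirm that restricting the SSP action set so that Markovian states enable only $\bot$ (all other actions having $\bfP\equiv 0$) matches the intended minimization in $\tilde L$, so that no spurious action interferes with the fixpoint.
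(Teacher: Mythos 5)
Your proposal is correct and follows essentially the same route as the paper's own proof: the paper likewise writes down the SSP Bellman operator, verifies case by case (Markovian non-goal states with cost $\tfrac{1}{E(s)}$ and probabilities $\tfrac{\bfR(s,s')}{E(s)}$, probabilistic non-goal states with zero cost and probabilities $\mu^s_\alpha$, and goal states yielding $0$) that it coincides with the operator $L$ of Theorem~\ref{thm_expected_reachability}, and then concludes from the uniqueness of the SSP fixpoint~\cite{BerTsi91,deA97_thesis} together with Theorem~\ref{thm_expected_reachability} that $\mathit{eT}^{\min}$ and $\mathit{cR}^{\min}$ agree. Your additional remarks on properness and on states that cannot reach $G$ are sensible caveats that the paper leaves implicit in its citation of the SSP literature.
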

\textit{Proof}. As shown in \cite{BerTsi91,deA97_thesis}, $cR^{min}(s,\F G)$ is the unique fixpoint of the Bellman operator $L'$ defined as
\begin{equation*}
	[L'(v)](s)=\min_{\alpha\in\Act(s)} c(s,\alpha) + \sum_{s'\in S\setminus G} \bfP(s,\alpha,s')\cdot v(s') + \sum_{s'\in G}\bfP(s,\alpha,s')\cdot g(s').
\end{equation*}
We show that the Bellman operator $L$ for $\mI$ defined in Theorem 1 equals $L'$ for $\mbox{\sf ssp}_{et}(\mathcal{M})$. Note that by definition $g(s)=0$ for all $s\in G$. Thus
\begin{equation*}
[L'(v)](s)=\min_{\alpha\in\Act(s)} c(s,\alpha) + \sum_{s'\in S\setminus G} \bfP(s,\alpha,s')\cdot v(s').
\end{equation*}
We distinguish three cases, $s\in MS\setminus G, s\in PS\setminus G, s\in G$.
\renewcommand{\labelenumi}{(\roman{enumi})}
\begin{enumerate}
	\item If $s\in MS\setminus G$, then $|Act(s)|=1$ with $\Act(s)=\{\bot\}$ and therefore $\min_{\alpha \in \Act(s)} =\bot$. Further $c(s,\bot)=\frac{1}{E(s)}$ and for all $s'\in S,\bfP(s,\bot,s')=\frac{R(s,s')}{E(s)}$. Thus
	\begin{equation*}
		[L'(v)](s) = \frac{1}{E(s)} + \sum_{s'\in S} \frac{R(s,s')}{E(s)}\cdot v(s') = [L(v)](s).
	\end{equation*}
	\item If $s\in PS\setminus G$, for each action $\alpha\in\Act(s)$ and successor state $s'$, with $\bfP(s,\alpha,s') > 0$ it follows $\bfP(s,\alpha,s')=\mu^s_{\alpha}(s')$. Further, $c(s,\alpha) = 0$ for all $\alpha\in\Act$.
	{\small{
	\begin{equation*}
		[L'(v)](s) = \min_{\alpha\in\Act(s)} \sum_{s'\in S} \bfP(s,\alpha,s')\cdot v(s') = \min_{\alpha\in\Act(s)} \sum_{s'\in S} \mu^s_{\alpha}(s')\cdot v(s') = [L(v)](s).
	\end{equation*}}}
	\item If $s\in G$, then by definition $|Act(s)|=1$ with $\Act(s)=\{\bot\}$ and $\bfP(s,\bot,s)=1$ and $c(s,\bot)=0$.
	\begin{equation*}
		[L'(v)](s) = \sum_{s'\in S} \bfP(s,\alpha,s')\cdot v(s')  = 0 = [L(v)](s)
	\end{equation*}
\end{enumerate} \qed
\section{Proof of Theorem 3}
\begin{theorem}
For unichain MA $\mI$, $LRA^{\min}(s,G)$ equals $R^{\min}(s)$ in $\mbox{\sf mdp}(\mI)$.
\end{theorem}
\textit{Proof}. Let $\mI$ be an unchain MA with state space $S$ and $G\subseteq S$ a set of goal states. We consider a stationary deterministic scheduler $D$ on $\mI$. As $\mI$ is unIchain, $D$ will induce an ergodic CTMC with
\begin{equation*}
	\mathbf{R}(s,s')=\begin{cases}
		\sum\{\lambda|s\mt{\lambda}s'\} & \text{ if }s\in MS\\
		\infty & \text{ if }s\in PS \wedge s\it{D(s)}\mu^s_{D(s)}\wedge\mu^s_{D(s)}(s')>0 
	\end{cases}
\end{equation*}
Hence, the behaviour of Markovian states is the same as before. In contrary, for probabilistic states, the transitions induced by scheduler $D$ and probability distribution $\mu^s_{D(s)}$ are transformed into Markovian transitions with rate $\infty$. Thus, we simulate with the exponential distribution the instantaneous execution of the probabilistic transition. Note, that this will not contradict the applied results for CTMCs.\\
The long-run average for state $s\in S$ and a set of goal states $G$ is given by
 \begin{equation*}
	LRA^D(s,G)=\mathbb{E}_{s,D}(\mathcal{A}_G)=\mathbb{E}_{s,D}\left(\lim_{t\to\infty}\frac{1}{t}\int_0^t\textbf{1}_G(\mathcal{X}_u)du\right)
\end{equation*}
where $\mathcal{X}_u$ is the random variable, denoting the state $s$ at time point $u$. With the ergodic theorem from \cite{Nor97} we obtain the following:
 \begin{equation*}
	\mathbb{P}\left(\frac{1}{t}\int_0^t \textbf{1}_{\{x_s=i\}}ds \to \frac{1}{m_i q_i} \text{ as } t\to\infty\right)=1
\end{equation*}
where $m_i=E_i(T_i)$ is the expected return time to state $s_i$. Therefore, in our induced ergodic CTMC, almost surely
\begin{equation}
	\label{eq:fractionTime}
	\mathbb{E}_{s_i}\left(\lim_{t\to\infty}\frac{1}{t}\int_0^t\textbf{1}_{\{s_i\}}(\mathcal{X}_u)du\right) = \frac{1}{m_i\cdot E(s_i)}.
\end{equation}
Thus, the fraction of time to stay in $s_i$ in the long-run is almost surely $\frac{1}{m_i\cdot E(s_i)}$, where we assume that $\frac{1}{\infty}=0$.\\
Let $\mu_i$ be the probability to stay in $s_i$ in the long-run in the embedded DTMC of our ergodic CTMC, where $\bfP(s,s')=\frac{\bfR(s,s')}{E(s)}$. Thus $\mu\cdot\bfP=\mu$ where $\mu$ is the vector containing $\mu_i$ for all states $s_i\in S$. Given the probability of $\mu_i$ of staying in state $s_i$ the expected return time is given by
\begin{equation}
	m_i=\frac{\sum_{s_j\in S}\mu_j\cdot E(s_j)^{-1}}{\mu_i}.
	\label{eq:expRetTime}
\end{equation}
Gathering those results yields:
\begin{eqnarray*}
	LRA^D(s,G) & = & \mathbb{E}_{s,D}\left(\lim_{t\to\infty}\frac{1}{t}\int_0^t\textbf{1}(\mathcal{X}_u)du\right)= \mathbb{E}_{s,D}\left(\lim_{t\to\infty}\frac{1}{t}\int_0^t\sum_{s_i\in G}\textbf{1}_{\{s_i\}}(\mathcal{X}_u)du\right)\\
	& = & \sum_{s_i \in G} \mathbb{E}_{s,D}\left(\lim_{t\to\infty}\frac{1}{t}\int_0^t\textbf{1}_{\{s_i\}}(\mathcal{X}_u)du\right) \overset{\eqref{eq:fractionTime}}{=} \sum_{s_i \in G} \frac{1}{m_i\cdot E(s_i)}\\
	& \overset{\eqref{eq:expRetTime}}{=} & \sum_{s_i \in G} \frac{\mu_i}{\sum_{s_j\in S}\mu_j\cdot E(s_j)^{-1}} \cdot \frac{1}{E(s_i)} = \frac{\sum_{s_i\in G}\mu_i\cdot E(s_i)^{-1}}{\sum_{s_j\in S}\mu_j\cdot E(s_j)^{-1}} \\
	& = & \frac{\sum_{s_i\in S}\textbf{1}_{G}(s_i)\cdot \mu_i E(s_i)^{-1})}{\sum_{s_j\in S}\mu_j\cdot E(s_j)^{-1}} = \frac{\sum_{s_i\in S}\mu_i\cdot(\textbf{1}_{G}(s_i)\cdot E(s_i)^{-1})}{\sum_{s_j\in S}\mu_j\cdot E(s_j)^{-1}}\\
	& = &  \frac{\sum_{s_i\in S}\mu_i\cdot c_1(s_i,D(s_i))}{\sum_{s_j\in S}\mu_j\cdot c_2(s_j,D(s_j))} \overset{\text{\cite{DBLP:conf/lics/Alfaro98}}}{=} \mathbb{E}_{s,D}(\mathcal{R})
\end{eqnarray*}
Thus, by definition there exists a one to one correspondence between the scheduler $D$ of $\mI$ and its corresponding MDP $\mbox{\sf mdp}(\mI)$. With the results from above this yields that $\LRA^{min}(s,G)=\inf_D\LRA^D(s,G)$ in MA $\mI$ equals $R^{min}(s)=\inf_D\mathbb{E}_{s,D}(\mathcal{R})$ in MDP $\mbox{\sf mdp}(\mI)$.\qed
\section{Proof of Theorem 4}
\begin{theorem}
For MA $\mI = (S, \Act, \it{\,}, \mt{\, }, s_0)$ with MECs $\{ \mI_1, \ldots, \mI_k \}$ 
with state spaces $S_1, \dots, S_k \subseteq S$, and set of goal states $G \subseteq S$:
\begin{align*}
  \LRA^{\min}(s_0,G) & = \inf_{D} \sum_{j=1}^{k} \LRA^{\min}_j(G) \cdot {\Pr}^{D}(s_0 \models \diamondsuit \Box S_j),
\end{align*}
where ${\Pr}^{D}(s_0 \models \diamondsuit \Box S_j)$ is the probability to eventually reach and continuously stay in some state in $S_j$ from $s_0$ under policy $D$ and $\LRA^{\min}_j(G)$ is the LRA of $G \cap S_j$ in unichain MA $\mI_j$.  
\end{theorem}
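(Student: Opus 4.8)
The plan is to exploit two structural facts: that the long-run average $A_G$ is a tail quantity, insensitive to any finite prefix of a path, and that under \emph{every} policy a path almost surely gets absorbed in one of the maximal end components. Together these reduce the global objective to a convex combination of the per-MEC optima, weighted by the absorption probabilities. First I would invoke the fundamental end-component property (de Alfaro~\cite{deA97_thesis}): for any policy $D$, the set of states visited infinitely often along a ${\Pr}_{s_0,D}$-almost-every path forms an end component, hence lies inside some MEC $\mI_j$. Consequently the events $\{\diamondsuit\Box S_j\}_{j=1}^{k}$ partition $\paths$ up to a ${\Pr}_{s_0,D}$-null set, so $\sum_{j=1}^{k}{\Pr}^{D}(s_0\models\diamondsuit\Box S_j)=1$. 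Next I would note that $A_G(\pi)$ depends only on the tail of $\pi$, since altering finitely many steps does not change $\lim_{t\to\infty}\frac{1}{t}\int_0^t \mathbf{1}_G(\pi@u)\,du$. Therefore, conditioned on $\diamondsuit\Box S_j$, the value $A_G$ is determined entirely by the behaviour inside $\mI_j$, and because $\mI_j$ is unichain its conditional expectation is at least $\LRA^{\min}_j(G)$, with equality attained exactly when the policy restricted to $S_j$ is LRA-optimal there (this uses the unichain reduction to $\mbox{\sf mdp}(\mI_j)$, which also justifies dropping the starting state from $\LRA^{\min}_j$).

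For the lower bound I would decompose, for an arbitrary policy $D$,
\[
\mathbb{E}_{s_0,D}(A_G)=\sum_{j=1}^{k}\mathbb{E}_{s_0,D}\!\left(A_G\mid\diamondsuit\Box S_j\right)\cdot{\Pr}^{D}(s_0\models\diamondsuit\Box S_j)\;\ge\;\sum_{j=1}^{k}\LRA^{\min}_j(G)\cdot{\Pr}^{D}(s_0\models\diamondsuit\Box S_j),
\]
using the conditional bound from the previous step. Taking the infimum over $D$ on both sides gives $\LRA^{\min}(s_0,G)\ge\inf_D\sum_{j}\LRA^{\min}_j(G)\cdot{\Pr}^{D}(s_0\models\diamondsuit\Box S_j)$.

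For the matching upper bound I would construct, for each policy $D'$ appearing on the right-hand side, a \emph{two-phase} policy $D$: in the transient part $D$ mimics $D'$ so as to realise the same absorption probabilities, and upon committing to a MEC $\mI_j$ it switches to the stationary LRA-optimal policy of $\mI_j$. By the end-component property this policy keeps the path inside $S_j$ forever, and by the unichain result it then realises $A_G=\LRA^{\min}_j(G)$ conditionally. Hence $\mathbb{E}_{s_0,D}(A_G)=\sum_{j}\LRA^{\min}_j(G)\cdot{\Pr}^{D}(s_0\models\diamondsuit\Box S_j)$, so the infimum of the left-hand side is no larger than that of the right-hand side. Combining the two inequalities yields the claim; note that the infimum must be over a \emph{single} policy applied to the whole sum, since the absorption probabilities are coupled through $\sum_j{\Pr}^{D}(\diamondsuit\Box S_j)=1$ (this coupling is precisely the point overlooked in the IMC statement being corrected).

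The main obstacle I anticipate is the measure-theoretic justification that the two infima coincide, i.e.\ that steering toward a MEC in phase one and realising its internal optimum in phase two can be done by one well-defined policy without the phases interfering. I would handle this by observing that the achievable absorption distributions $({\Pr}^{D}(\diamondsuit\Box S_j))_j$ depend only on which MEC a path ends in and not on the internal behaviour, so fixing phase two to the internal LRA-optimal stationary policy (which stays in $S_j$ by construction) leaves this achievable set unchanged. Thus $\min_D\sum_j\LRA^{\min}_j(G){\Pr}^{D}(\diamondsuit\Box S_j)$ is a weighted reach-and-stay objective on the MEC-quotient, attained by a stationary policy that extends to a two-phase policy on $\mI$, closing the gap.
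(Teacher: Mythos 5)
Your proposal is correct in substance and takes a genuinely more complete route than the paper, whose own proof is explicitly only a sketch. The paper restricts immediately to stationary deterministic policies (invoking, without proof, that they suffice for LRA objectives), splits each path under such a policy into a finite prefix and an infinite suffix confined to one MEC, stresses the correction over the IMC result---that the weights must be the probabilities of $\diamondsuit\Box S_j$, not $\diamondsuit S_j$, since the prefix may wander through other MECs---and then simply asserts the weighted decomposition. You instead handle arbitrary policies via a two-sided argument: de Alfaro's theorem that the limit set of a path is almost surely an end component (so the events $\diamondsuit\Box S_j$ partition the path space up to a null set), tail-insensitivity of $A_G$ plus conditioning for the lower bound, and an explicit two-phase policy for the upper bound, together with the observation that fixing the in-MEC behaviour does not shrink the set of achievable absorption vectors---which is exactly what legitimises taking a single infimum over one coupled policy, and which the paper only formalises later through its SSP construction (Definition~4 and Theorem~5). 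What your route buys is rigour and generality; what the paper's route buys is brevity, since under a stationary deterministic policy the induced chain's recurrent classes are closed sub-end-components and the conditional LRA is immediately the LRA of an actual policy of $\mI_j$. The one step you should tighten is the inequality $\mathbb{E}_{s_0,D}(A_G \mid \diamondsuit\Box S_j) \ge \LRA^{\min}_j(G)$: conditioning on the future event $\diamondsuit\Box S_j$ does not yield a process generated by any policy of $\mI_j$, so ``unichain, hence at least the optimum'' does not apply verbatim. Standard repairs are either the paper's move (restrict to stationary deterministic policies first), or an argument via the almost surely finite entry time after which the path never leaves $S_j$, combined with tail-measurability. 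This is a repairable imprecision---shared, in fact, with the paper's own sketch---not a flaw in the approach.
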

\textit{Proof}. We give here a sketch proof of Theorem~4. Let $\mI$ be a finite MA with maximal end components $\{\mI_1,\ldots,\mI_k\}$, $G\subseteq S$ a set of goal states, and $\pi\in\Paths(\mI)$ an infinite path in $\mI$. We consider $D$ as a stationery deterministic scheduler. Therefore $\pi$ can be partitioned into an finite and infinite path fragment
\begin{eqnarray*}
\pi_{s_0s}& = &s_0\it{\alpha_0,\mu_0,t_0}s_1 \it{\alpha_1,\mu_1,t_1} \ldots \it{\alpha_n,\mu_n,t_n} s, \text{ and}\\
\pi_{s}^{\omega}& = &s\it{\alpha_s,\mu_s,t_s} \ldots \it{\alpha_i,\mu_i,t_i} s \ldots
\end{eqnarray*}
where $\pi_{s_0s}$ is the path starting in initial state $s_0$ and ends in $s\in\mI_i$. Further, all states on path $\pi^{\omega}_s$ belong to maximal end component $\mI_i$. Note, that a state on path $\pi_{s_0s}$ can be part of another maximal end component $M_j$ as in Example~\ref{ex:append}. Hence, it is not sufficient to only check if eventually a MEC is reached, as done in the corresponding theorem for IMCs in \cite{DBLP:conf/nfm/GuckHKN12}. Thus, the minimal LRA will be obtained when the LRA in each MEC $\mI_i$ is minimal and the combined LRA of all MECs is minimal according to their persistence under scheduler $D$. \qed
\section{Example of Definition 4}\label{sec:exampleLRA}
\setcounter{definition}{3}
\begin{definition}[SSP for long run average]
The SSP of MA $\mathcal{M}$ for the LRA in $G \subseteq S$ is
$\mbox{\sf ssp}_{lra}(\mathcal{M}) = 
\left( S \setminus \smash{\bigcup_{i=1}^k} S_i \cup U \cup Q, \Act \cup \{ \bot \}, \bfP', s_0, Q, c, g \right)$, where
$g(q_i) = \LRA^{\min}_i(G)$ for $q_i \in Q$ and $c(s,\sigma) = 0$ for all $s$ and $\sigma\in \Act\cup\{\bot\}$.
$\bfP'$ is defined as follows. Let $S' =  S \setminus \smash{\bigcup_{i=1}^k} S_i$.  $\bfP'$ equals $\bfP$ for all $s,s' \in S'$.
For the new states $u_j$:
{\small
  \begin{align*}
  \bfP'(u_j, \tau, s') & =  \bfP(S_j, \tau, s') \quad \text{\!if } s' \in S' \setminus S_j 
  &
  \mbox{\!\!and\!\! } \quad \bfP'(u_i, \tau, u_j) & = \bfP(S_i, \tau, S_j) \quad \text{\!for } i \neq j.
\end{align*}}%
Finally, we have: $\bfP'(q_j,\bot,q_j) = 1 = \bfP'(u_j, \bot,q_j)$ and $\bfP'(s, \sigma, u_j) = \bfP(s, \sigma, S_j)$.
\end{definition}
\begin{figure}[t!]
\centering
\subfigure[\scriptsize{Example Markov automata.}]{\label{fig:example_mec}
\begin{tikzpicture}[scale=0.9, every node/.style={transform shape}]
	\node[state, initial] (s0) {$s_0$};
	\node[state] (s3) [right of=s0,node distance=2cm] {$s_1$};
	\node[state] (s5) [below of=s3,node distance=2cm] {$s_3$};
	\node[state, accepting] (s6) [right of=s3,node distance=2cm] {$s_2$};
	\node[state] (s8) [left of=s5,node distance=2cm] {$s_5$};
	\node[state] (s9) [right of=s5,node distance=2cm] {$s_4$};
	
	\node[rectangle,draw,green,fit=(s3)(s5)(s6)(s9)] (mec1) {};
	\node[rectangle,draw,red,fit=(s8)] (mec2) {};
	
	\path[->] (s0) edge[double,thin] node[sloped,above] {$2$} (s3);
	\path[->] (s3) edge[thin] node[left] {$0.6$} (s5);
	\path[->] (s3) edge[thin] node[sloped,above] {$0.4$} (s6);
	
	\path[->]
		(s3) edge[thin] node[inner sep=0mm,pos=0.2] (a1) {} (s5)
		(s3) edge[thin] node[inner sep=0mm,pos=0.2] (b1) {} (s6);	
	\path[-,shorten <=-.4pt,shorten >=-.4pt] (a1) edge [thin,bend right]  (b1) node[right,yshift=.2cm,xshift=.4cm] {$\alpha$} ;
	
	\path[->] (s5) edge[thin] node[sloped,above] {$\alpha,1$} (s8);
	\path[->] (s5) edge[thin] node[sloped, above] {$\beta,1$} (s9);
	\path[->] (s8) edge[thin,double,loop left] node[left] {$1$} (s8);
	\path[->] (s9) edge[thin,double] node[right] {$3$} (s6);
	\path[->] (s6) edge[thin,double, bend right=60] node[sloped,above] {$1$} (s3);
\end{tikzpicture}
}
\hspace{.3cm}
\subfigure[\scriptsize{Induced SSP for MA in Figure \ref{fig:example_mec}.}]{\label{fig:example_ssp}
\begin{tikzpicture}[scale=0.9, every node/.style={transform shape}]
	\node[state, initial] (s0) {$s_0$};
	\node[state,fill=green] (u2) [right of=s0,node distance=2cm] {$u_1$};
	\node[state,fill=green] (q2) [right of=u2,node distance=2cm] {$q_1$};
	\node[state,fill=red] (u1) [below of=u2,node distance=2cm] {$u_2$};
	\node[state,fill=red] (q1) [right of=u1,node distance=2cm] {$q_2$};
	
	\path[->] (s0) edge[thin] node[sloped,above] {$\bot,1$} (u2);
	\path[->] (u2) edge[thin] node[sloped,above] {$\bot,1$} (q2);
	\path[->] (u2) edge[thin] node[left] {$\alpha,1$} (u1);
	\path[->] (u1) edge[thin] node[sloped,above] {$\bot,1$} (q1);
	\path[->] (q2) edge[thin,loop right] node[right] {$\bot,1$} (q2);
	\path[->] (q1) edge[thin,loop right] node[right] {$\bot,1$} (q1);
\end{tikzpicture}
}
\caption {Example for Definition 4.}
\end{figure}
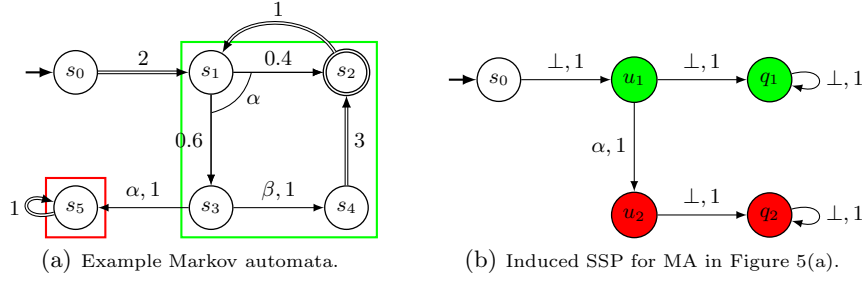
\begin{example}
 Consider the MA $\mI$ from Figure \ref{fig:example_mec} with MECs $\mI_1$ with $S_1=\{s_1,s_2,s_3,s_4\}$ and $\mI_2$ with $S_2=\{s_5\}$.  We construct the corresponding $\mbox{\sf ssp}_{lra}(\mI)$ due to Definition~4. Let $S_{\mbox{\sf ssp}} = S \setminus \smash{\bigcup_{i=1}^k} S_i \cup U \cup Q$, where  $\smash{\bigcup_{i=1}^k} S_i = S_1 \cup S_2 = \{s_1,s_2,s_3,s_4,s_5\}$. Further, we have to MECs and therefore fresh states $U=\{u_1,u_2\}$ and $Q=\{q_1,q_2\}$. Hence, $S_{\mbox{\sf ssp}} = \{s_0,u_1,u_2,q_1,q_2\}$. (1) Consider $s,s'\in S'$. Since, $S'=\{s_0\}$ and there exists no transition from $s_0$ to $s_0$ we can omit this rule. (2) Consider outgoing transitions from MECs. For $\mI_1$ there exists a transition from $s_3 \it{\alpha,1} s_5$ in the underlying MA, where $s_3\in S_1$ and $s_5\not\in S_1$ but $s_5 \in S_2$. For the corresponding new state $u_1$ it follows $\bfP'(u_1,\alpha,u_2)=\bfP(S_1,\alpha,S_2)=1$ where $\bfP(S_i,\sigma,S_j)=\sum_{s \in S_i}\sum_{s' \in S_j} \bfP(s,\sigma,s')$. (3) Consider all states $U$ and $Q$ and add new transitions with $\bfP(u_i,\bot,q_i) = \bfP(q_i,\bot,q_i)=1$ for $i=1,2$. Finally, consider all states $s \in S_{\mbox{\sf ssp}} \cap S$ with a transition into a MEC. Hence, $\bfP'(s_0,\bot,u_1)=\bfP(s_0,\bot,s_1)=1$. The resulting transition system of $\mbox{\sf ssp}_{lra}(\mI)$ is depicted in Figure~\ref{fig:example_ssp}.
 \label{ex:append}
\end{example}
\section{Proof of Theorem 5}
\begin{theorem} 
For MA $\mathcal{M}$, $\LRA^{\min}(s,G)$ equals $cR^{\min}(s,\diamondsuit U)$ in SSP $\mbox{\sf ssp}_{lra}(\mathcal{M})$.
\end{theorem}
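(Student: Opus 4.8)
The plan is to derive Theorem~\ref{thm:LRA_SSP} directly from the end-component characterisation of Theorem~\ref{lem:LRA_SSP}, by showing that $\mbox{\sf ssp}_{lra}(\mathcal{M})$ computes precisely the right-hand side of that theorem. Recall that Theorem~\ref{lem:LRA_SSP} gives $\LRA^{\min}(s_0,G) = \inf_D \sum_{j=1}^k \LRA^{\min}_j(G)\cdot \Pr^D(s_0 \models \diamondsuit\Box S_j)$. In the SSP all transition costs vanish ($c\equiv 0$) and the only cost is the terminal cost $g(q_j)=\LRA^{\min}_j(G)$, collected upon absorption in $q_j$. Hence for any proper policy the expected accumulated cost is $\sum_{j} \LRA^{\min}_j(G)\cdot \Pr^{D}_{\mathrm{ssp}}(\diamondsuit q_j)$, so that $cR^{\min}(s_0,\diamondsuit U)=\inf_D \sum_j \LRA^{\min}_j(G)\cdot \Pr^{D}_{\mathrm{ssp}}(\diamondsuit q_j)$. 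It therefore suffices to match the two families of probabilities term by term.

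First I would make the cost bookkeeping precise. Since each $q_j$ is absorbing with $g(q_j)=\LRA^{\min}_j(G)\ge 0$ and all transition costs are zero, this is a non-negative SSP whose path cost equals the terminal cost of the unique goal state eventually hit. Because $\mathcal{M}$ is finite and non-Zeno, under every policy almost every SSP path is eventually absorbed in some $q_j$: the states visited infinitely often form an end component, which by construction is a single $q_j$. Moreover $q_j$ is reachable only through $u_j$, and almost surely some $q_j$ is reached; thus $\diamondsuit U$ and $\diamondsuit Q$ denote the same almost-sure absorption event, every relevant policy is proper, and I may replace the path cost $C(\pi)$ by $\LRA^{\min}_{j(\pi)}(G)$ and pull the expectation inside the sum.

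The crux is the probability identification $\Pr^{D}_{\mathrm{ssp}}(\diamondsuit q_j) = \Pr^{D'}(s_0\models\diamondsuit\Box S_j)$ under a suitable correspondence $D\leftrightarrow D'$. I would set this up by noting that the SSP faithfully simulates $\mathcal{M}$ outside the MECs ($\bfP'=\bfP$ on $S'$, using $\bfP$ from Def.~\ref{def_expected_time_reachability_ssp}), maps every entry into $S_j$ to an entry of the decision state $u_j$ via $\bfP'(s,\sigma,u_j)=\bfP(s,\sigma,S_j)$, and routes a probabilistic exit from $S_i$ into $S_j$ through $\bfP'(u_i,\tau,u_j)=\bfP(S_i,\tau,S_j)$. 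In $u_j$ the policy faces exactly the choice governing $\diamondsuit\Box S_j$: either take $u_j\trans{\bot}q_j$, representing the decision to remain in $S_j$ forever, or pick one of the exit transitions leaving $S_j$. Consequently a path of $\mathcal{M}$ that eventually stays in $S_j$ corresponds to an SSP path absorbed in $q_j$, and conversely; matching the residual decisions of $D$ and $D'$ step by step yields equality of the two probabilities for every $j$. Taking the infimum over the corresponding policy classes gives $\inf_D \sum_j \LRA^{\min}_j(G)\,\Pr^{D}_{\mathrm{ssp}}(\diamondsuit q_j) = \inf_{D'}\sum_j \LRA^{\min}_j(G)\,\Pr^{D'}(s_0\models\diamondsuit\Box S_j)$, which with the first two paragraphs and Theorem~\ref{lem:LRA_SSP} closes the argument; uniqueness of the minimum expected cost of a non-negative SSP additionally justifies restricting to stationary deterministic policies on both sides.

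The main obstacle is making this policy correspondence rigorous, especially the abstraction of the $\Box S_j$ (``stay forever'') event into the single commit edge $u_j\trans{\bot}q_j$. I must argue that collapsing the unbounded in-MEC behaviour to one transition neither creates nor destroys probability mass for $\diamondsuit\Box S_j$: an end component is strongly connected and closed under its chosen actions, so a policy can indeed remain inside with the prescribed probability, while any exit must use one of the aggregated leaving transitions, which is exactly what $u_j$ offers. A related delicate point---the flaw corrected in Theorem~\ref{lem:LRA_SSP}---is that a state on the way to the committed MEC may itself lie in another MEC $S_i$; the construction handles this because $u_i$ keeps its exit transitions separate from the commit edge, so transient visits to $S_i$ are not mistaken for $\diamondsuit\Box S_i$. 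Verifying that general measurable policies on $\mathcal{M}$ and MDP policies on the SSP induce the same achievable probability vectors $(\Pr(\diamondsuit\Box S_j))_j$ is the step I expect to require the most care.
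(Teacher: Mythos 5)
Your proposal is correct and follows essentially the same route as the paper's proof: express the SSP cost as $\sum_j g(q_j)\cdot\Pr^D(\diamondsuit q_j)$ (since all transition costs vanish), identify $\Pr^D(\diamondsuit q_j)$ with $\Pr^{D'}(s_0\models\diamondsuit\Box S_j)$ via the reverse of the construction in Definition~4, and conclude by the end-component characterisation of Theorem~\ref{lem:LRA_SSP}. Your write-up is in fact more detailed than the paper's, which states the policy correspondence step only as ``we use the transformation from Definition~4 in reverse'' without spelling out the absorption and policy-matching arguments you provide.
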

\textit{Proof}. We show that the reduction of the induced SSP is correct.
\begin{eqnarray*}
	cR^{min}(s,\F Q) & = & \inf_{D}\mathbb{E}_{s,D}\{ g(X_{T_Q}) \} = \inf_D \sum_{i=1}^{k} g(X_{T_{q_i}})\cdot Pr^D(s\models \F q_i)\\
	& = & \inf_D \sum_{i=1}^{k} \LRA_{i}^{min}(G)\cdot Pr^D(s\models\F q_i)\\
	&\overset{(*)}{=}  & \inf_D \sum_{i=1}^{k} \LRA_{i}^{min}(G)\cdot Pr^D(s\models\F\G S_i)\\
	& = & \LRA^{min}(s,G).
\end{eqnarray*}
Observe that in step $(*)$ we use the transformation from Definition 4 in reverse. Hence, if $Pr^D(s\models\F q_i)>0$, we eventually reach the maximal end component $M_i$ and always stay in it. Otherwise $Pr^D(s\models\F q_i)=0$ and scheduler $D$ chooses an action such that we leave $M_i$ or never even visit $M_i$.\qed
\section{Proof of Theorem \ref{thm:tbr}}
\def\hash{{\scriptstyle\#}} We assume the settings of
Theorem~\ref{thm:tbr} to hold: MA \defma is given together with a set
of goal states $G \subseteq S$, time interval $I=[0,b] \in
\mathcal{Q}$, $b \ge 0$. Let $\lambda = \max_{s \in \MaS}E(s)$ be the
largest exit rate of any Markovian state and $\delta > 0$ be chosen
such that $b=k_b\delta$ for some $k_b \in \mathbb{N}$. We recall the
definiton of $\diamondsuit^IG$ as the set of all paths that reach the
goal states in $G$ within interval $I$. We also define a random
variable $\hash_{J}: \paths \rightarrowtail \mathbb{N}$, where $J \in
\mathcal{Q}$ is a time interval. Intuitively $\hash_J$ counts the
number of Markovian jumps inside interval $J$. For example
$\hash_{[0,\delta]}=1$ denotes the set of paths having one Markovian
transition in their first $\delta$ time units. Random vector
$\hash^{J,\Delta}: \paths \rightarrowtail \mathbb{N}^k$ with $J \in
\mathcal{Q}$, $k$ such that $k\delta=\sup J$ and $\Delta \in
\mathbb{Q}_{>0}$ is defined as the vector of Markovian jump counts in
each subinterval (digitisation step) of size $\Delta$. For instance
$\hash^{I,\delta}(\pi)$ with $\pi \in \paths$ is vector
$\left(\hash_{[0,\delta)}, \dotsc,
  \hash_{[(k_b-2)\delta,(k_b-1)\delta)},
  \hash_{[(k_b-1)\delta,b]}\right)^{\text{T}}$.

\begin{lemma}\label{lemma:dma}
  Let $\MAM_{\delta}$ be the dMA induced by \MAM with respect to
  digitisation constant $\delta$. Then for all $s \in S$:
  \begin{equation*}
    p^{\MAM_{\delta}}_{\max}(s, \rchgls{[0,k_b]})=\sup_{D \in \GMS} {\Pr}_{s,D}(\rchgls{I} \mid \left \|\hash^{I,\delta}  \right \|_{\infty} < 2).
  \end{equation*}
\end{lemma}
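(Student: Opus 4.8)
The plan is to reduce the continuous statement to the discrete value iteration that defines $p^{\MAM_{\delta}}_{\max}(s,\rchgls{[0,k_b]})$, and to prove the two sides equal by induction on the step bound $k_b$, using the memoryless property of the exponential sojourn times to decouple the digitisation steps. First I would invoke the fixed point characterisation of Lemma~\ref{fpc:ma}: the conditional quantity $\sup_{D\in\GMS}{\Pr}_{s,D}(\rchgls{I}\mid \|\hash^{I,\delta}\|_{\infty}<2)$ should be shown to obey the same one-step recurrence as the dMA, namely $V_k(s)=\sum_{s'}\mu^{s}(s')\,V_{k-1}(s')$ on Markovian states and, on probabilistic states, $\max_{\alpha\in\Act_{\setminus\bot}(s)}\sum_{s'}\DiBrPr(s,\alpha,s')\,V_{k}(s')$ (same step count, since probabilistic transitions take zero time), with base value $V_0=\dirac{G}$. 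Establishing this recurrence for both sides and matching the base case $k_b=0$ then yields the identity, and it is precisely the recurrence computed by the $m$-phase/$i^{*}$-phase value iteration of the dMA, for which $\TAS$ policies are known to suffice.

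The crux is the single-step analysis on a Markovian state $s$ under the conditioning. Conditioned on $\hash_{[0,\delta)}\le 1$, exactly two things can occur in the first digitisation step: either no Markovian jump fires, an event of probability $e^{-E(s)\delta}$ leaving the process in $s$ at time $\delta$, or precisely one jump fires and moves the process (after the instantaneous probabilistic transitions, which take zero time) to a successor drawn according to the embedded distribution $\DiBrPr(s,\bot,\cdot)$. I would show that integrating the first-jump density over $[0,\delta)$ and applying the memoryless property at the step boundary — so that the residual sojourn of the post-jump state restarts afresh at time $\delta$ — collapses the continuous integral appearing in $\Omega$ (Lemma~\ref{fpc:ma}) into exactly the digitised transition weights $\mu^{s}$ of $\MAM_{\delta}$. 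The probabilistic states require no timing argument: being instantaneous, reaching the Markovian frontier from a probabilistic state is a time-abstract unbounded-reachability problem in the induced MDP, so a general measurable policy, once restricted by the conditioning, can decide only on the time-abstract history and thus corresponds to a time-abstract policy on $\MAM_{\delta}$; this is where $\sup_{D\in\GMS}$ on the left matches the $\sup_{D\in\TAS}$ of the dMA. The only genuinely new ingredient relative to the IMC case of~\cite{DBLP:conf/tacas/ZhangN10} is that a probabilistic transition yields a distribution rather than a single successor, which is absorbed cleanly into the $\sum_{s'}\DiBrPr(s,\alpha,s')$ term.

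The hard part will be the measure-theoretic bookkeeping of the conditioning together with the supremum over $\GMS$. Because ${\Pr}_{s,D}(\rchgls{I}\mid\cdot)$ is a ratio of two policy-dependent measures, I cannot push the supremum through naively; instead I would argue on the cylinder-set construction that, after conditioning, the induced measure over step-indexed macro-transitions factorises across digitisation windows (each factor being a $\mu^{s}$-weight or a probabilistic decision), so that the optimisation decouples step by step and the ratio reduces to the unconditional dMA reachability. Justifying this factorisation — that the jump-count conditioning acts independently on each window given the state at its left boundary, which is exactly where the memoryless property and the non-Zeno assumption (guaranteeing the conditioning event has positive probability and the process does not stall) enter — is the main obstacle, and the induction on $k_b$ is the device that turns the one-step decoupling into the full identity. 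The extension to intervals with non-zero lower bound noted after Theorem~\ref{thm:tbr} would then follow by applying the same argument separately to the phases $[0,a]$ and $(a,b]$.
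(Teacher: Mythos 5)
Your proposal takes a genuinely different route from the paper, and its central step fails. You read the right-hand side as a true conditional probability (a ratio of measures) and then want to show that, conditioned on ``at most one Markovian jump per digitisation step'', the one-step behaviour of a Markovian state collapses exactly to the dMA kernel $\mu^s$, so that the optimisation decouples window by window and an induction on $k_b$ closes the argument. This cannot work. The numbers $e^{-E(s)\delta}$ and $(1-e^{-E(s)\delta})\bfP(s,\bot,s')$ are the \emph{unconditional} probabilities of the events ``no jump in the step'' and ``at least one jump, the first going to $s'$''; they already sum to $1$, so no genuine conditioning can return precisely these weights: conditioning divides by the probability of the conditioning event, which is strictly less than $1$, and moreover re-weights each first-jump target $s'$ by the probability that the suffix path from $s'$ keeps obeying the jump bound --- a factor that depends on $E(s')$ and on the policy, and which destroys the factorisation across digitisation windows. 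Concretely, let $s \mt{\lambda/2} g$ and $s \mt{\lambda/2} s_1$ (so $E(s)=\lambda$ and $\bfP(s,\bot,g)=\bfP(s,\bot,s_1)=\tfrac12$), where $g$ is an absorbing goal state and $s_1 \mt{M} s_2$ leads to a non-goal sink; take $I=[0,\delta]$ and $k_b=1$, and note there are no probabilistic states, so the supremum over policies is vacuous. The dMA value is $\mu^s(g)=\tfrac12(1-e^{-\lambda\delta})$, whereas the conditional probability is
\begin{equation*}
\frac{\tfrac12\left(1-e^{-\lambda\delta}\right)}{e^{-\lambda\delta}+\tfrac12\left(1-e^{-\lambda\delta}\right)+\tfrac{\lambda}{2}\int_0^{\delta}e^{-\lambda t}e^{-M(\delta-t)}\,\mathrm{d}t}
\;\xrightarrow{\;M\to\infty\;}\;
\frac{1-e^{-\lambda\delta}}{1+e^{-\lambda\delta}}
\;>\;\tfrac12\left(1-e^{-\lambda\delta}\right),
\end{equation*}
so the identity you want to establish by your one-step recurrence is false under your reading. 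The issue you flagged as ``measure-theoretic bookkeeping'' is therefore not bookkeeping; it is exactly where the approach breaks.

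The paper's own proof is instead a short path-correspondence argument, and that is the only reading under which the lemma holds (admittedly, the notation in the statement invites your reading): paths of $\mathcal{M}_\delta$ are identified with those paths of $\mathcal{M}$ that carry at most one Markovian jump per digitisation step, and the dMA kernel $\mu^s$ arises as the \emph{unconditional} law of ``does a jump occur within this step, and if so, where does the first jump go'', with no renormalisation anywhere. This is also how the ``conditional'' quantities are actually manipulated later in the appendix: in the proof of Lemma~\ref{lemma:lb1}, the two branches are given the unconditional weights $1-e^{-E(s)\delta}$ and $e^{-E(s)\delta}$, and after a jump the clock is advanced to the step boundary ($I\ominus\delta$). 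If you reformulate your induction for this push-forward quantity rather than for a Bayes quotient, then your collapse of $\int_0^\delta E(s)e^{-E(s)t}\,\mathrm{d}t\cdot\bfP(s,\bot,s')$ into $\mu^s(s')$ becomes valid, and the remainder of your plan --- matching the $m$-/$i^*$-phase recurrence and matching general measurable with time-abstract policies on the zero-time part --- is essentially the paper's terse correspondence spelled out in detail.
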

\begin{proof}
  As we discussed in Section~\ref{section:timed}, paths of
  $\MAM_{\delta}$ are essentially the path from \MAM that carry only
  zero or one Markov transitions in each digitisation step
  $\delta$. For computing reachability in step interval $[0,k_b]$ in
  $\MAM_{\delta}$, it is enough to consider paths in \MAM bearing at
  most one Markovian jumps in each $\delta$ time units. This set of
  paths can be described by $\left \| \hash^{I,\delta} \right \|_{\infty}
  < 2$.\qed
\end{proof}

\begin{lemma}\label{lemma:lb1} For all $s \in S$ and $D \in \GMS$ in
  \MAM, ${\Pr}_{s,D}(\rchgls{I} \mid \hash_{[0,\delta]}<2) \le
  {\Pr}_{s,D}(\rchgls{I})$.
\end{lemma}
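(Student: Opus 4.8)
The plan is to split the argument into an elementary probabilistic reduction followed by a monotonicity (coupling) argument. Write $A = \rchgls{I}$ for the target event and $N = \hash_{[0,\delta]}$ for the number of Markovian jumps in the first digitisation step, and abbreviate $q = \Pr_{s,D}(N \ge 2)$. By the law of total probability,
$\Pr_{s,D}(A) = \Pr_{s,D}(A \mid N<2)\,(1-q) + \Pr_{s,D}(A \mid N \ge 2)\,q$,
so the claimed inequality $\Pr_{s,D}(A \mid N<2) \le \Pr_{s,D}(A)$ is equivalent to
\begin{equation*}
q \cdot \Pr_{s,D}(A \mid N < 2) \ \le \ q \cdot \Pr_{s,D}(A \mid N \ge 2).
\end{equation*}
If $q = 0$ there is nothing to prove; otherwise it suffices to establish $\Pr_{s,D}(A \mid N<2) \le \Pr_{s,D}(A \mid N \ge 2)$, i.e.\ that reaching $G$ within the deadline $b$ is at least as likely when the process performs more Markovian jumps during the first step.

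Second, I would prove this comparison by a coupling that fixes the embedded jump sequence and only redistributes the Markovian holding times. Conditioning on the embedded jump chain (the sequence of visited states together with the actions chosen by $D$), the Markovian sojourns are exponentials with rates $E(\cdot)$, and $N$ equals the number of their cumulative sums that fall within $[0,\delta]$. I would couple a realisation with $N<2$ to one with $N\ge 2$ so that both follow the same jump sequence but the latter has its first two sojourns compressed into $[0,\delta]$; then at every time $t$ the $N\ge 2$ copy occupies a state that lies weakly further along the common jump sequence. Since $\rchgls{[0,b]}$ depends only on the first time a $G$-state is visited and is monotone — hitting $G$ earlier preserves the event — whenever the $N<2$ copy satisfies $A$ so does the $N\ge 2$ copy. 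Integrating over the jump chain then yields the inequality.

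The main obstacle is that $D \in \GMS$ is a general measurable policy whose decisions at probabilistic states may depend on the elapsed time, so the clean decomposition ``condition on the jump chain, the sojourns are independent exponentials'' is not immediate: compressing holding times could change the actions $D$ prescribes along the common sequence. I would handle this by exploiting the memorylessness of the Markovian transitions together with the fact that states in $\PS$ are traversed instantaneously — they contribute nothing to $N$, and the maximal-progress semantics fixes their timing — so that the conditioning on $N$ only reshuffles behaviour strictly inside $[0,\delta]$, while the evolution after the first step is governed by the time-shifted policy and can be coupled identically. Verifying that this internal reshuffling remains consistent with $D$ and preserves the ``weakly ahead'' relation between the two copies is the technical crux; once it is in place, the monotone-event observation together with the algebraic reduction above completes the proof.
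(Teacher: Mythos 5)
Your opening algebraic reduction is correct: when $q=\Pr_{s,D}(N\ge 2)>0$ the lemma is indeed equivalent to $\Pr_{s,D}(A\mid N<2)\le\Pr_{s,D}(A\mid N\ge 2)$. The gap is in the coupling you propose for that comparison, and it goes beyond the policy issue you flag yourself. Your coupling keeps the embedded jump sequence fixed and only redistributes sojourn times, so at best it compares the two conditional probabilities \emph{for each fixed jump sequence} $\omega$. That is not enough: $\Pr_{s,D}(A\mid N<2)$ and $\Pr_{s,D}(A\mid N\ge 2)$ are mixtures of these per-sequence quantities under \emph{different} distributions over jump sequences, because the number of Markovian jumps in $[0,\delta]$ is correlated with which states are visited. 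Concretely, if $s$ has Markovian successors $s_1$ and $s_2$ with $E(s_1)$ very large and $E(s_2)$ very small, conditioning on $N\ge 2$ tilts the first jump heavily towards $s_1$ (a second jump within $[0,\delta]$ is far more likely from there), while conditioning on $N<2$ tilts it towards $s_2$. Hence a coupling in which both copies follow the same jump sequence cannot have the prescribed marginals, and a pointwise inequality per jump sequence does not integrate to an inequality between the two mixtures; you would need an extra association or monotone-likelihood-ratio argument relating $\Pr(\omega\mid N\ge2)/\Pr(\omega\mid N<2)$ to the per-sequence reachability probabilities, which is a nontrivial missing ingredient. The unresolved policy dependence on elapsed time feeds into the same problem: the ``common jump sequence'' is itself a function of the sojourn times you are reshuffling, so it is not even a well-defined shared object for the two copies.

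The paper's proof avoids conditioning on $N\ge 2$ altogether and needs no coupling. It compares $\Pr_{s,D}(A\mid N<2)$ with $\Pr_{s,D}(A)$ directly via a first-jump decomposition over $[0,\delta]$: for a Markovian state $s\notin G$, both quantities are written as an integral over the time $t$ of the first Markovian jump, and they differ only in that the conditioned version continues with $\Pr_{s',D}(\diamondsuit^{I\ominus\delta}G)$ where the unconditioned one has $\Pr_{s',D}(\diamondsuit^{I\ominus t}G)$; the inequality then follows from the elementary monotonicity $\Pr_{s',D}(\diamondsuit^{I\ominus\delta}G)\le\Pr_{s',D}(\diamondsuit^{I\ominus t}G)$ for $t\le\delta$. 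Probabilistic states are handled by splitting off the zero-time reachability of Markovian states and invoking the Markovian case. If you want to salvage your route, the cleanest repair is to drop the global coupling and carry out the comparison one digitisation step at a time in exactly this fashion --- which essentially collapses your argument into the paper's.
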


\begin{proof}
  We assume $b>0$, since for $b=0$, ${\Pr}_{s,D}(\rchgls{I} \mid
  \hash_{[0,\delta]}<2)={\Pr}_{s,D}(\rchgls{I})$. We have
\begin{flalign}
 {\Pr}_{s,D}&(\rchgls{I}) &\nonumber \\&={} {\Pr}_{s,D}(\rchgls{I} \cap \hash_{[0,\delta]} > 0) + {\Pr}_{s,D}(\rchgls{I} \cap \hash_{[0,\delta]} = 0)& \nonumber\\
 &={} {\Pr}_{s,D}(\rchgls{I} \cap \hash_{[0,\delta]} > 0) + {\Pr}_{s,D}(\rchgls{I} \mid \hash_{[0,\delta]} = 0){\Pr}_{s,D}(\hash_{[0,\delta]} = 0).\hspace{-10pt}& \label{eq:tbr}
\end{flalign}
On the other hand we have
\begin{flalign}
 {\Pr}_{s,D}(\rchgls{I} &\mid \hash_{[0,\delta]}<2)& \nonumber \\
 = {} & {\Pr}_{s,D}(\rchgls{I} \mid \hash_{[0,\delta]} <2, \hash_{[0,\delta]} =1 ){\Pr}_{s,D}( \hash_{[0,\delta]} =1 \mid \hash_{[0,\delta]} <2)& \nonumber \\ &+ 
 {\Pr}_{s,D}(\rchgls{I} \mid \hash_{[0,\delta]} <2, \hash_{[0,\delta]} =0 ){\Pr}_{s,D}( \hash_{[0,\delta]} = 0 \mid \hash_{[0,\delta]} <2).\hspace{-10pt}& \label{eq:tbrapprox}
\end{flalign}
We distinguish between two cases:
\begin{enumerate}
\item $s \in \MS \setminus G$: In this case,~\eqref{eq:tbr} gives
  \begin{flalign}
    {\Pr}_{s,D}(\rchgls{I}) = {} & \int_0^{\delta} E(s) e^{-E(s)t} \sum_{s' \in S} \bfP (s,\bot,s') \Pr_{s',D}(\rchgls{I \ominus t})\dd t \nonumber &\\
    &+ {\Pr}_{s,D}(\rchgls{I \ominus
      \delta})e^{-E(s)\delta}.& \label{eq:tbrms}
  \end{flalign}
  and for~\eqref{eq:tbrapprox} we have
  \begin{flalign}
    {\Pr}_{s,D}(\rchgls{I}\mid \hash_{[0,\delta]}<2) = {} &\int_0^{\delta} E(s) e^{-E(s)t} \sum_{s' \in S} \bfP (s,\bot,s') \Pr_{s',D}(\rchgls{I \ominus\delta})\dd t \nonumber &\\
    &+ {\Pr}_{s,D}(\rchgls{I \ominus
      \delta})e^{-E(s)\delta}.&\label{eq:tbrapproxms}
  \end{flalign}
  Since ${\Pr}_{s,D}(\rchgls{I \ominus t})$ is monotonically
  decreasing with respect to $t$, we have ${\Pr}_{s,D}(\rchgls{I
    \ominus \delta}) \le {\Pr}_{s,D}(\rchgls{I \ominus t}), \; t \le
  \delta$. Putting this in~\eqref{eq:tbrms} and~\eqref{eq:tbrapproxms}
  leads to
  \begin{equation*}
    {\Pr}_{s,D}(\rchgls{I} \mid \hash_{[0,\delta]}<2) \le {\Pr}_{s,D}(\rchgls{I})
  \end{equation*}
\item $s \in \IS \setminus G$: From the law of total probability, we
  split time bounded reachability into two parts. First we compute the
  probability to reach the set of Markovian states from $s$ by only
  taking interactive transitions in zero time, and then we quantify
  the probability to reach the set of goal states $G$ from Markovian
  states inside interval $I$. Therefore:
\begin{flalign*}
{\Pr}_{s,D}(\rchgls{I}) = {} & \sum_{s' \in \MS} {\Pr}_{s,D}(\diamondsuit^{[0,0]}\{s'\}) {\Pr}_{s',D}(\rchgls{I}) \\
        \overset{(*)}{\ge}{} & \sum_{s' \in \MS} {\Pr}_{s,D}(\diamondsuit^{[0,0]}\{s'\}) {\Pr}_{s',D}(\rchgls{I}\mid \hash_{[0,\delta]} < 2 ) \\
                        = {} & {\Pr}_{s,D}(\rchgls{I}\mid \hash_{[0,\delta]} < 2 )
\end{flalign*}
where $(*)$ follows from case (i) above.\qed
\end{enumerate}
\end{proof}

\begin{lemma}\label{lemma:lb2} For all $s \in S \setminus G$ and $D \in \GMS$ in \MAM,
  ${\Pr}_{s,D}(\rchgls{I} \mid \left \|\hash^{I,\delta}  \right \|_{\infty} < 2) \le {\Pr}_{s,D}(\rchgls{I} \mid \hash_{[0,\delta]}<2)$.
\end{lemma}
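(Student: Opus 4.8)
The plan is to prove the inequality by induction on the number $k_b$ of digitisation steps covered by $I=[0,b]=[0,k_b\delta]$, following exactly the two-case structure ($s\in\MS\setminus G$ and $s\in\PS\setminus G$) used in the proof of Lemma~\ref{lemma:lb1}. Write $A=\{\|\hash^{I,\delta}\|_\infty<2\}$ for the ``at most one jump per step'' event and $B=\{\hash_{[0,\delta]}<2\}$ for the ``at most one jump in the first step'' event, so that $A\subseteq B$; moreover $A$ factors as the conjunction of $B$ with the event $A'$ that each of the remaining $k_b-1$ steps carries at most one Markovian jump. In the base case $k_b=1$ the vector $\hash^{I,\delta}$ has a single component, so $A$ and $B$ coincide and the claim holds with equality (this also covers $s\in G$, where both sides equal $1$ since $0\in I$).

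For the inductive step with $s\in\MS\setminus G$, I would first peel off the first digitisation step exactly as in the derivation of~\eqref{eq:tbrapproxms}, but now carry the conditioning $A'$ on the remaining $k_b-1$ steps through to the continuation. Since Markovian jump counts on disjoint time intervals are conditionally independent given the visited states, and the residual of a general measurable policy after the first step is again in $\GMS$, this yields
\begin{equation*}
\Pr_{s,D}(\rchgls{I}\mid A)=\int_0^{\delta}E(s)e^{-E(s)t}\sum_{s'\in S}\bfP(s,\bot,s')\,\Pr_{s',D}(\rchgls{I\ominus\delta}\mid A')\dd t+\Pr_{s,D}(\rchgls{I\ominus\delta}\mid A')\,e^{-E(s)\delta},
\end{equation*}
which has the same shape as~\eqref{eq:tbrapproxms} for $\Pr_{s,D}(\rchgls{I}\mid B)$ except that each continuation probability now carries the extra conditioning $A'$, and the jump is delayed to $\delta$ (interval $I\ominus\delta$) rather than to $t$. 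The interval $I\ominus\delta=[0,(k_b-1)\delta]$ spans $k_b-1$ steps, and with respect to it $A'$ is precisely the ``one jump per step'' event; writing $B'$ for the event that the first of these remaining steps carries at most one jump, the induction hypothesis (the statement of Lemma~\ref{lemma:lb2} with $k_b-1$ in place of $k_b$) gives $\Pr_{s',D}(\rchgls{I\ominus\delta}\mid A')\le\Pr_{s',D}(\rchgls{I\ominus\delta}\mid B')$, while Lemma~\ref{lemma:lb1} applied on $I\ominus\delta$ gives $\Pr_{s',D}(\rchgls{I\ominus\delta}\mid B')\le\Pr_{s',D}(\rchgls{I\ominus\delta})$; chaining the two yields $\Pr_{s',D}(\rchgls{I\ominus\delta}\mid A')\le\Pr_{s',D}(\rchgls{I\ominus\delta})$ for every continuation state $s'$ (the case $s'\in G$ being trivial since both sides equal $1$). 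Substituting this termwise into the display above and comparing with~\eqref{eq:tbrapproxms} gives $\Pr_{s,D}(\rchgls{I}\mid A)\le\Pr_{s,D}(\rchgls{I}\mid B)$, as required.

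For $s\in\PS\setminus G$ I would argue as in case (ii) of Lemma~\ref{lemma:lb1}: probabilistic transitions take zero time and change no jump count, so by the law of total probability $\Pr_{s,D}(\rchgls{I}\mid A)=\sum_{s'\in\MS}\Pr_{s,D}(\diamondsuit^{[0,0]}\{s'\})\,\Pr_{s',D}(\rchgls{I}\mid A)$, and likewise with $B$, where the weights $\Pr_{s,D}(\diamondsuit^{[0,0]}\{s'\})$ coincide for the two conditioning events. The Markovian inequality $\Pr_{s',D}(\rchgls{I}\mid A)\le\Pr_{s',D}(\rchgls{I}\mid B)$ just established within this induction step then transfers to $s$ by monotonicity of the nonnegative weighted sum.

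The main obstacle I anticipate is justifying the displayed decomposition: one must verify that the conditioning ``at most one jump in every step'' can be peeled off one step at a time, so that the first-step contribution produces exactly the two terms of~\eqref{eq:tbrapproxms} (with the delayed interval $I\ominus\delta$ rather than $I\ominus t$) while the conditioning on the later steps passes unchanged to the continuation. This rests on the memoryless (strong Markov) property of the MA, on the conditional independence of Markovian jump counts over disjoint digitisation intervals given the trajectory, and on the closure of $\GMS$ under taking residual policies; making this factorisation precise for a history-dependent $D$ is the delicate point, whereas the ensuing termwise monotonicity comparison is routine.
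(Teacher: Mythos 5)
Your proof is correct and follows essentially the same route as the paper: induction on the number of digitisation steps $k_b$, a base case $k_b=1$ where the two conditioning events coincide, peeling off the first step for Markovian states and comparing termwise with~\eqref{eq:tbrapproxms}, and reducing probabilistic states to the Markovian case via the zero-time splitting over $\MS$. The only difference is presentational and in your favour: where the paper justifies the termwise bound by citing the induction hypothesis~\eqref{eq:ih} alone, you make explicit that it must be chained with Lemma~\ref{lemma:lb1} applied to $I\ominus\delta$ in order to drop the conditioning on the continuation entirely, which is exactly what the paper's step tacitly uses.
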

\begin{proof}
  The lemma holds for $b=0$, since in this case,
  ${\Pr}_{s,D}(\rchgls{I} \mid \left \|\hash^{I,\delta} \right
  \|_{\infty} < 2) = {\Pr}_{s,D}(\rchgls{I} \mid
  \hash_{[0,\delta]}<2)$. For $b>0$, we decompose
  ${\Pr}_{s,D}(\rchgls{I} \mid \left \|\hash^{I,\delta} \right
  \|_{\infty} < 2)$ as \eqref{eq:tbrapprox} to:
  \begin{flalign} {\Pr}_{s,D}(\diamondsuit^I&G\mid \left
      \|\hash^{I,\delta} \right \|_{\infty} < 2))& \nonumber \\ = {} &
    {\Pr}_{s,D}(\rchgls{I} \mid \left \|\hash^{I,\delta} \right
    \|_{\infty} < 2, \hash_{[0,\delta]} =1 ){\Pr}_{s,D}(
    \hash_{[0,\delta]} =1 \mid \left \|\hash^{I,\delta} \right
    \|_{\infty} < 2)& \nonumber \\&\hspace{-4pt}+
    {\Pr}_{s,D}(\rchgls{I} \mid \left \|\hash^{I,\delta} \right
    \|_{\infty} < 2, \hash_{[0,\delta]} =0 ){\Pr}_{s,D}(
    \hash_{[0,\delta]} = 0 \mid \left \|\hash^{I,\delta} \right
    \|_{\infty} < 2).& \label{eq:tbrdig}
  \end{flalign}
  We proof the lemma by induction over $k_b$.
  \begin{itemize}
  \item $k_b=1$: This case holds because interval $I=[0,\delta]$ contains one digitisation step and then ${\Pr}_{s,D}(\rchgls{I} \mid \left \|\hash^{I,\delta}  \right \|_{\infty} < 2) = {\Pr}_{s,D}(\rchgls{I} \mid \hash_{[0,\delta]}<2)$.
  \item $k_b - 1 \leadsto k_b$: Let $I$ be $[0,b]$ and assume the
    lemma holds for interval $[0,(k_b -1)\delta]$
    (i.e. $I\ominus\delta$):
  \begin{equation}
    {\Pr}_{s,D}(\rchgls{I\ominus\delta} \mid \left \|\hash^{I\ominus\delta,\delta}  \right \|_{\infty} < 2) \le {\Pr}_{s,D}(\rchgls{I\ominus\delta} \mid \hash_{[0,\delta]}<2).\label{eq:ih}
  \end{equation}
  In order to show that the lemma holds for $I$, we distinguish
  between two cases:
  \begin{enumerate}
  \item $s \in S \setminus \MS$: From~\eqref{eq:tbrapproxms} we have:
    \begin{flalign}
      {\Pr}_{s,D}(\rchgls{I}\mid \hash_{[0,\delta]}<2) = {} &\sum_{s' \in S} \bfP (s,\bot,s') \Pr_{s',D}(\rchgls{I \ominus\delta})(1-e^{-E(s)\delta}) \nonumber &\\
      &+ {\Pr}_{s,D}(\rchgls{I \ominus
        \delta})e^{-E(s)\delta}.&\label{eq:tbrapproxmssim}
    \end{flalign}
    Similarly from~\eqref{eq:tbrdig} we have:
    \begin{flalign*}
      {\Pr}_{s,D}(\rchgls{I}\mid &\left \| \hash^{I,\delta}  \right \|_{\infty} < 2)&  \\= {} &\sum_{s' \in S} \bfP (s,\bot,s') \Pr_{s',D}(\rchgls{I \ominus\delta} \mid \left \|\hash^{I\ominus\delta,\delta}  \right \|_{\infty} < 2)(1-e^{-E(s)\delta}) &\\
      &+ {\Pr}_{s,D}(\rchgls{I \ominus
        \delta}\mid \left \|\hash^{I\ominus\delta,\delta}  \right \|_{\infty} < 2)e^{-E(s)\delta}&\\
      \overset{\eqref{eq:ih}}{\le} {} & \sum_{s' \in S} \bfP (s,\bot,s') \Pr_{s',D}(\rchgls{I \ominus\delta})(1-e^{-E(s)\delta}) &\\
      &+ {\Pr}_{s,D}(\rchgls{I \ominus
        \delta})e^{-E(s)\delta}&\\
      \overset{\eqref{eq:tbrapproxmssim}}{=} {} &
      {\Pr}_{s,D}(\rchgls{I}\mid \hash_{[0,\delta]}<2)&
    \end{flalign*}
  \item $s \in S \setminus \IS$: This case utilises the previously
    discussed idea of splitting paths using the law of total
    proabilities into two parts. The first part contains the set of
    paths that reach Markovian states from $s$ in zero time using
    interactive transitions, while the second includes paths reaching
    $G$ from Markovian states. Hence:
    \begin{flalign*}
      {\Pr}_{s,D}(\rchgls{I}\mid \left \|\hash^{I,\delta}  \right \|_{\infty} &< 2)&\\ = {} & \sum_{s' \in \MS} {\Pr}_{s,D}(\diamondsuit^{[0,0]}\{s'\}) {\Pr}_{s',D}(\rchgls{I}\mid \left \|\hash^{I,\delta}  \right \|_{\infty} < 2)& \\
      \overset{(*)}{\le}{} & \sum_{s' \in \MS} {\Pr}_{s,D}(\diamondsuit^{[0,0]}\{s'\}) {\Pr}_{s',D}(\rchgls{I}\mid \hash_{[0,\delta]} < 2 )& \\
      = {} & {\Pr}_{s,D}(\rchgls{I}\mid \hash_{[0,\delta]} < 2 )
    \end{flalign*}
    where $(*)$ follows from case (i) above.\qed
  \end{enumerate}
  \end{itemize}
\end{proof}

\begin{lemma}\label{lemma:lb} For all $s \in S \setminus G$:
  $p^{\MAM_{\delta}}_{\max}(s,\rchgls{[0,k_b]}) \le p^{\MAM}_{\max}(s,\rchgls{I})$.
\end{lemma}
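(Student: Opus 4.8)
The plan is to derive the inequality by composing the three preceding lemmas and then passing to the supremum over all general measurable policies; the substantive analytic content has already been established in Lemmas~\ref{lemma:lb1} and~\ref{lemma:lb2}, so what remains is essentially assembly.

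First I would fix an arbitrary state $s \in S \setminus G$ and an arbitrary policy $D \in \GMS$. For this fixed $D$, chaining Lemma~\ref{lemma:lb2} with Lemma~\ref{lemma:lb1} gives
$$
  {\Pr}_{s,D}(\rchgls{I} \mid \left\|\hash^{I,\delta}\right\|_{\infty} < 2)
  \ \le \
  {\Pr}_{s,D}(\rchgls{I} \mid \hash_{[0,\delta]} < 2)
  \ \le \
  {\Pr}_{s,D}(\rchgls{I}),
$$
so that ${\Pr}_{s,D}(\rchgls{I} \mid \left\|\hash^{I,\delta}\right\|_{\infty} < 2) \le {\Pr}_{s,D}(\rchgls{I})$ holds for every $D \in \GMS$. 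Note that the first inequality requires $s \in S \setminus G$, which is exactly the hypothesis of the present lemma, while the second holds for all $s \in S$.

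Next I would take the supremum over $D \in \GMS$ on both sides. Since the inequality holds pointwise in $D$, it is preserved by the supremum, yielding
$$
  \sup_{D \in \GMS} {\Pr}_{s,D}(\rchgls{I} \mid \left\|\hash^{I,\delta}\right\|_{\infty} < 2)
  \ \le \
  \sup_{D \in \GMS} {\Pr}_{s,D}(\rchgls{I}).
$$
By Lemma~\ref{lemma:dma} the left-hand side is exactly $p^{\MAM_{\delta}}_{\max}(s,\rchgls{[0,k_b]})$, and by the definition of maximum time-bounded reachability the right-hand side is $p^{\MAM}_{\max}(s,\rchgls{I})$, which is the claim.

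The proof is short precisely because the difficulty has been front-loaded. The only point requiring care is the quantifier structure: Lemma~\ref{lemma:dma} characterises the dMA probability as a supremum over $\GMS$ of the \emph{conditioned} MA probability, whereas Lemmas~\ref{lemma:lb1} and~\ref{lemma:lb2} are stated for a single fixed policy. Because the per-policy bounds are uniform in $D$ and range over the same policy class $\GMS$, the supremum step goes through without any further argument; one simply needs to confirm that the policy class and the interval $I=[0,b]$ with $b=k_b\delta$ are consistent across all three lemmas, which they are.
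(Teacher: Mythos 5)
Your proposal is correct and follows essentially the same route as the paper's own proof: Lemma~\ref{lemma:dma} to rewrite the dMA probability, then Lemma~\ref{lemma:lb2} and Lemma~\ref{lemma:lb1} chained under the supremum over $\GMS$. If anything, your version is slightly cleaner, since you make the pointwise-in-$D$ argument explicit before passing to the supremum and you attribute the two inequalities to the correct lemmas (the paper's displayed chain has the labels of Lemmas~\ref{lemma:lb1} and~\ref{lemma:lb2} swapped).
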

\begin{proof}
\begin{align*}
 p^{\MAM_{\delta}}_{\max}(s, \rchgls{[0,k_b]})  \overset{Lem.~\ref{lemma:dma}}{=} {} & \sup_{D \in \GMS} {\Pr}_{s,D}(\rchgls{I} \mid \left \|\hash^{I,\delta}  \right \|_{\infty} < 2)\\
\overset{Lem.~\ref{lemma:lb1}}{\le} {} & \sup_{D \in \GMS} {\Pr}_{s,D}(\rchgls{I} \mid  \hash_{[0,\delta]} < 2) \\
\overset{Lem.~\ref{lemma:lb2}}{\le} {} & \sup_{D \in \GMS} {\Pr}_{s,D}(\rchgls{I}) =  p^{\MAM}_{\max}(s,\rchgls{I})
\end{align*}\qed
\end{proof}

\begin{lemma}\label{lemma:ub} For all $s \in S \setminus G$:
\begin{equation*}
  p^{\MAM}_{\max}(s,\rchgls{I}) \le p^{\MAM_{\delta}}_{\max}(s,\rchgls{[0,k_b]}) + 1 - e^{-\lambda b}(1 + \lambda\delta)^{k_b}.
\end{equation*}
\end{lemma}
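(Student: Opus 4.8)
The plan is to complement the lower bound of Lemma~\ref{lemma:lb} by quantifying \emph{exactly} the probability mass carried by the paths that digitisation throws away, namely those performing two or more Markovian jumps inside a single step of length $\delta$. Fix $s \in S \setminus G$ and a policy $D \in \GMS$. First I would split the reachability event along the digitisation-compatibility predicate $\| \hash^{I,\delta} \|_{\infty} < 2$:
\begin{align*}
  {\Pr}_{s,D}(\rchgls{I}) = {} & {\Pr}_{s,D}\big(\rchgls{I} \cap \| \hash^{I,\delta} \|_{\infty} < 2\big) + {\Pr}_{s,D}\big(\rchgls{I} \cap \| \hash^{I,\delta} \|_{\infty} \ge 2\big).
\end{align*}
The first summand is at most ${\Pr}_{s,D}(\rchgls{I} \mid \| \hash^{I,\delta} \|_{\infty} < 2)$, since conditioning only divides by a probability $\le 1$, while the second summand is trivially bounded by ${\Pr}_{s,D}(\| \hash^{I,\delta} \|_{\infty} \ge 2)$. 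Everything therefore reduces to a \emph{uniform} (in $s$ and $D$) lower bound ${\Pr}_{s,D}(\| \hash^{I,\delta} \|_{\infty} < 2) \ge e^{-\lambda b}(1+\lambda\delta)^{k_b}$, which turns the error term into $1 - e^{-\lambda b}(1+\lambda\delta)^{k_b}$.

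Establishing this lower bound is the crux, and I expect it to be the main obstacle. The difficulty is that the per-step Markovian jump counts are neither identically distributed nor independent: the holding-time rate $E(s') \le \lambda$ varies with the visited Markovian state, and the path may additionally pass instantaneously through probabilistic states, which consume none of the $\delta$ budget. I would handle this with a uniformisation/thinning coupling: run a single rate-$\lambda$ Poisson clock, accept a tick as a genuine Markovian jump with probability $E(s')/\lambda$ whenever the process resides in Markovian state $s'$ (and treat it as a self-loop otherwise), resolving probabilistic states instantaneously between ticks. This couples the jump counter to the Poisson process so that, almost surely, the number of Markovian jumps in each digitisation step is dominated by the number $\tilde N_i$ of Poisson arrivals in that step. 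As the $\tilde N_i$ are i.i.d.\ $\mathrm{Poisson}(\lambda\delta)$, this gives
\begin{align*}
  {\Pr}_{s,D}\big(\| \hash^{I,\delta} \|_{\infty} < 2\big) \ge \prod_{i=1}^{k_b} {\Pr}(\tilde N_i \le 1) = \big(e^{-\lambda\delta}(1+\lambda\delta)\big)^{k_b} = e^{-\lambda b}(1+\lambda\delta)^{k_b},
\end{align*}
using $k_b\delta = b$. That $\lambda$ is the correct rate to insert follows from the monotonicity of $x \mapsto e^{-x}(1+x)$, whose derivative $-x e^{-x}$ is nonpositive for $x \ge 0$, so replacing each $E(s')$ by the maximal rate $\lambda$ can only lower the per-step probability of at most one jump. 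An alternative to the explicit coupling would be an induction on $k_b$ mirroring the structure of Lemma~\ref{lemma:lb2}, peeling off the first digitisation step.

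Finally I would assemble the pieces. Combining the three bounds yields, for every $D \in \GMS$,
\begin{align*}
  {\Pr}_{s,D}(\rchgls{I}) \le {\Pr}_{s,D}\big(\rchgls{I} \mid \| \hash^{I,\delta} \|_{\infty} < 2\big) + \big(1 - e^{-\lambda b}(1+\lambda\delta)^{k_b}\big).
\end{align*}
Taking the supremum over $D$ and using $\sup_D(A_D + c) = c + \sup_D A_D$ for the constant error term $c = 1 - e^{-\lambda b}(1+\lambda\delta)^{k_b}$, the first term becomes $\sup_{D \in \GMS}{\Pr}_{s,D}(\rchgls{I} \mid \| \hash^{I,\delta} \|_{\infty} < 2)$, which by Lemma~\ref{lemma:dma} equals $p^{\MAM_{\delta}}_{\max}(s,\rchgls{[0,k_b]})$. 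This is exactly the claimed inequality. Beyond the domination step, the only delicate points I anticipate are verifying that the instantaneous excursions through probabilistic states create no extra Markovian jumps within a step, and that the thinning coupling is compatible with an arbitrary measurable policy $D \in \GMS$; the remaining manipulations are routine.
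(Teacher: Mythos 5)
Your proof is correct and follows essentially the same route as the paper's: split on the event that every digitisation step contains at most one Markovian jump, bound the compatible part by the conditional probability (which equals $p^{\mathcal{M}_\delta}_{\max}(s,\diamondsuit^{[0,k_b]}G)$ by Lemma~\ref{lemma:dma}), and bound the incompatible part uniformly in the policy by $1-e^{-\lambda b}(1+\lambda\delta)^{k_b}$. The only substantive difference is that your uniformisation/thinning coupling rigorously justifies the Poisson domination step, which the paper dispatches with an informal appeal to independence of the jump counts across digitisation steps and domination by $k_b$ rate-$\lambda$ Poisson processes; your version is in fact the more careful one, since those counts are neither independent nor identically distributed when exit rates vary along the path, exactly the issue your coupling (together with the monotonicity of $x \mapsto e^{-x}(1+x)$) resolves.
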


\begin{proof}
  \begin{flalign*}
    p^{\MAM}_{\max}(s,&\rchgls{I})& \\ = {}& \sup_{D \in \GMS} {\Pr}_{s,D}(\rchgls{I})&\\
                                 = {}& \sup_{D \in \GMS} \Big({\Pr}_{s,D}(\rchgls{I} \cap \left \| \hash^{I,\delta} \right \|_{\infty} < 2 ) + {\Pr}_{s,D}(\rchgls{I} \cap \left \| \hash^{I,\delta} \right \|_{\infty} \ge 2)\Big)& \\
                                 \le {}& \sup_{D \in \GMS} {\Pr}_{s,D}(\rchgls{I} \cap \left \| \hash^{I,\delta} \right \|_{\infty} < 2 ) + \sup_{D \in \GMS}{\Pr}_{s,D}(\rchgls{I} \cap \left \| \hash^{I,\delta} \right \|_{\infty} \ge 2)& \\
                                 \le {}& \sup_{D \in \GMS} {\Pr}_{s,D}(\rchgls{I} \mid \left \| \hash^{I,\delta} \right \|_{\infty} < 2 ) + \sup_{D \in \GMS}{\Pr}_{s,D}(\rchgls{I} \cap \left \| \hash^{I,\delta} \right \|_{\infty} \ge 2)& \\
                                 = {}& p^{\MAM_{\delta}}_{\max}(s,\rchgls{[0,k_b]}) + \sup_{D \in \GMS}{\Pr}_{s,D}(\rchgls{I} \cap \left \| \hash^{I,\delta} \right \|_{\infty} \ge 2)& \\
                                 \le {}& p^{\MAM_{\delta}}_{\max}(s,\rchgls{[0,k_b]}) + \sup_{D \in \GMS}{\Pr}_{s,D}(\left \| \hash^{I,\delta} \right \|_{\infty} \ge 2)& \\
  \end{flalign*}
  It remains to find an upper bound for $\sup_{D \in
    \GMS}{\Pr}_{s,D}(\left \| \hash^{I,\delta} \right \|_{\infty} \ge
  2$ which is the maximum probability to have more than one Markovian
  jump in at least one time step among $k_b$ time step(s) of length
  $\delta$.  Due to independence of the number of Markovian jumps in
  digitisation steps, this probability can be upper bounded by $k_b$
  independent Poisson processes, all parametrised with the maximum
  exit rate exhibited in \MAM. In each Poisson process the probability
  of at most one Markovian jump in one digitisation step is $e^{-
    \lambda \delta}(1 + \lambda \delta)$, therefore the probability of
  a violation of this assumption in at least one digitisation step is
  $1 - e^{- \lambda b}\big(1+ \lambda \delta\big)^{k_b}$. Hence
  \begin{align*}
    p^{\MAM}_{\max}(s,\rchgls{I}) \le {}& p^{\MAM_{\delta}}_{\max}(s,\rchgls{[0,k_b]}) + \sup_{D \in \GMS}{\Pr}_{s,D}(\left \| \hash^{I,\delta} \right \|_{\infty} \ge 2)& \\
                                  \le {}& p^{\MAM_{\delta}}_{\max}(s,\rchgls{[0,k_b]}) + 1 - e^{- \lambda b}\big(1+ \lambda \delta\big)^{k_b}
  \end{align*}\qed
\end{proof}

\begin{theorem}
    Given MA $\mathcal{M}=(S,\Act, \it{}, \mt{}, s_0)$, $G \subseteq S$, interval 
    $I=[0,b] \in \mathcal{Q}$ with $b \ge 0$ and $\lambda = \max_{s \in \scriptsize\MS}E(s)$. 
    Let $\delta > 0$ be such that $b=k_b\delta$ for some $k_b \in \mathbb{N}$. 
    Then, for all $s \in S$ it holds that
    \begin{equation*}
      p^{\mathcal{M}_{\delta}}_{\max}(s, \diamondsuit^{[0,k_b]} \, G) 
      \ \leq \ 
      p^{\mathcal{M}}_{\max}(s, \diamondsuit^{[0,b]} \, G) 
      \ \leq \ 
      p^{\mathcal{M}_{\delta}}_{\max}(s, \diamondsuit^{[0,k_b]} \, G) + 1 - e^{- \lambda b}\big(1+ \lambda \delta\big)^{k_b}.
    \end{equation*}
\end{theorem}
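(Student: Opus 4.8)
The plan is to prove the two inequalities separately, in both cases by relating step-bounded paths in the digitised MA $\mathcal{M}_\delta$ to exactly those paths of $\mathcal{M}$ that exhibit at most one Markovian jump in each of the $k_b$ digitisation steps. To make this precise I would introduce, for a path $\pi$, the vector $N^{I,\delta}(\pi) \in \mathbb{N}^{k_b}$ whose $i$-th entry counts the Markovian transitions taken in the $i$-th subinterval of length $\delta$, writing $N_{[0,\delta]}$ for the count in the first step alone. Since a digitisation step summarises the behaviour of $\mathcal{M}$ over $\delta$ time units and records at most one Markovian move (this is exactly the definition of $\mt{}_\delta$), the paths of $\mathcal{M}_\delta$ correspond to the paths of $\mathcal{M}$ satisfying $\|N^{I,\delta}\|_\infty < 2$. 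This yields the hinge identity
\[
  p^{\mathcal{M}_\delta}_{\max}(s, \rchgls{[0,k_b]}) \;=\; \sup_{D \in \GMS} \Pr\nolimits_{s,D}\big(\rchgls{I} \mid \|N^{I,\delta}\|_\infty < 2\big),
\]
which I would establish first, directly from the digitisation rule.

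For the \emph{lower bound} the plan is a chain of two further inequalities that progressively weaken the conditioning event. Starting from the identity above, I would first show $\Pr_{s,D}(\rchgls{I} \mid \|N^{I,\delta}\|_\infty < 2) \le \Pr_{s,D}(\rchgls{I} \mid N_{[0,\delta]} < 2)$, i.e.\ that restricting only the first step already dominates; this is the one step that genuinely requires an induction on $k_b$, peeling off the leading digitisation step and invoking the fixed point characterisation of Lemma~\ref{fpc:ma} to rewrite the reachability probability of a Markovian state as an integral over the first jump time, with the induction hypothesis applied to the shortened interval $I \ominus \delta$. Second, I would drop the conditioning entirely, showing $\Pr_{s,D}(\rchgls{I} \mid N_{[0,\delta]} < 2) \le \Pr_{s,D}(\rchgls{I})$; here the key fact is that $t \mapsto \Pr_{s,D}(\rchgls{I \ominus t})$ is monotonically decreasing, so replacing a genuine jump time $t \le \delta$ by the worst case $\delta$ can only lower the probability. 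In both steps probabilistic states are handled uniformly by the law of total probability: split a path into its zero-time passage through probabilistic states to a Markovian state and the subsequent evolution, and apply the Markovian estimate to the second factor. Composing the inequalities and taking the supremum over $D \in \GMS$ gives $p^{\mathcal{M}_\delta}_{\max} \le p^{\mathcal{M}}_{\max}$.

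For the \emph{upper bound} I would decompose, for each fixed policy $D$,
\[
  \Pr\nolimits_{s,D}(\rchgls{I}) = \Pr\nolimits_{s,D}\big(\rchgls{I} \cap \|N^{I,\delta}\|_\infty < 2\big) + \Pr\nolimits_{s,D}\big(\rchgls{I} \cap \|N^{I,\delta}\|_\infty \ge 2\big).
\]
The first summand is at most the corresponding conditional probability, whose supremum over $\GMS$ is exactly $p^{\mathcal{M}_\delta}_{\max}(s,\rchgls{[0,k_b]})$ by the hinge identity; the second is at most $\Pr_{s,D}(\|N^{I,\delta}\|_\infty \ge 2)$. It remains to bound this bad event uniformly in $D$. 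Since no policy can influence Markovian sojourns and every exit rate is at most $\lambda$, the number of jumps in each step is stochastically dominated by a Poisson variable of parameter $\lambda\delta$, and the steps are independent; the probability of at most one jump in a single step is therefore at least $e^{-\lambda\delta}(1+\lambda\delta)$, so the probability of at most one jump in all $k_b$ steps is at least $e^{-\lambda b}(1+\lambda\delta)^{k_b}$. Hence $\sup_D \Pr_{s,D}(\|N^{I,\delta}\|_\infty \ge 2) \le 1 - e^{-\lambda b}(1+\lambda\delta)^{k_b}$, and adding the two bounds gives the claimed estimate.

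I expect the inductive monotonicity step of the lower bound to be the main obstacle: one must track the interplay of the integral fixed point equation for Markovian states with the maximisation over actions at probabilistic states while the hypothesis is applied on $I \ominus \delta$, and the uniformity over the rich class $\GMS$ of general measurable policies has to be preserved throughout. The Poisson domination in the upper bound is the other delicate point, precisely because it must hold against every history- and time-dependent policy; the saving grace is that the jump \emph{counts} depend only on the exit rates, which no policy can alter, so the crude bound by $k_b$ independent Poisson processes at the maximal rate $\lambda$ goes through.
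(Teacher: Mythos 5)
Your proposal is correct and follows essentially the same route as the paper's own proof: the same correspondence between paths of $\mathcal{M}_\delta$ and paths of $\mathcal{M}$ with at most one Markovian jump per digitisation step, the same two-stage weakening of the conditioning event for the lower bound (the induction on $k_b$ peeling off the first step, then dropping the first-step condition via monotonicity of $t \mapsto \Pr_{s,D}(\diamondsuit^{I \ominus t} G)$, with probabilistic states handled by the law of total probability), and the same good/bad-event decomposition with the Poisson bound $1 - e^{-\lambda b}(1+\lambda\delta)^{k_b}$ for the upper bound. No substantive differences to report.
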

\begin{proof}
  For $s\in G$ we have that $p^{\mathcal{M}_{\delta}}_{\max}(s,
  \diamondsuit^{[0,k_b]} \, G) = p^{\mathcal{M}}_{\max}(s,
  \diamondsuit^{[0,b]} \, G)=1$. For $s \in S \setminus G$ it follows
  from Lemma~\ref{lemma:lb} and~\ref{lemma:ub}.
\end{proof}
\fi
\end{document}